\newtheoremstyle{obs}
  {3pt}
  {3pt}
  {}
  {}
  {\bfseries}
  {.}
  {.5em}
  {}
\theoremstyle{obs}
\newtheorem{remark}[theorem]{Remark}
\newcommand{\R}{\mathbb{R}}
\newcommand{\lcf}{\lbrack\! \lbrack}
\newcommand{\rcf}{\rbrack\! \rbrack}
\newcommand{\lvec}[1]{\overleftarrow{#1}}
\newcommand{\rvec}[1]{\overrightarrow{#1}}
\newcommand{\st}{\;\ifnum\currentgrouptype=16 \middle\fi|\;}
\begin{document}
\title{Variational order for forced Lagrangian systems II:  Euler-Poincar\'e equations with forcing }

\author{ D. Mart\'{\i}n de Diego,  R. T. Sato Mart{\'\i}n de Almagro
\\[2mm]
{\small   Instituto de Ciencias Matem\'aticas (CSIC-UAM-UC3M-UCM)} \\
{\small C/Nicol\'as Cabrera 13-15, 28049 Madrid, Spain}}


\date{\today}

\maketitle

\begin{abstract}
In this paper we provide a variational derivation of the Euler-Poincar\'e equations for systems subjected to external forces using an adaptation of the techniques introduced by Galley and others \cite{Galley13, Galley14, Sato}. Moreover, we study in detail the underlying geometry which is related to the notion of Poisson groupoid. Finally, we apply the previous construction to the formal derivation of the variational error for numerical integrators of forced Euler-Poincar\'e equations and the application of this theory to the derivation of geometric integrators for forced systems.
\end{abstract}

\tableofcontents

\section{Introduction}
In the last few decades, there has been a steady research effort in the area of discrete variational mechanics (see \cite{MR1666871,marsden-west,hairer} and references therein). This interest stems from its application to the generation of numerical methods that can be derived from it and their desirable properties. The solution of a discrete variational problem directly inherits characteristic structural properties from its continuous counterpart such as, for instance, symplecticity, momentum or configuration space preservation, or an excellent energy behaviour...
Another reason for using discrete variational techniques is that this kind of methods typically admits an easy adaptation to other important dynamical systems such as, for instance, forced Lagrangian systems, nonholonomic dynamics, systems reduced by a Lie group of symmetries, non-smooth frameworks or even classical field theories (see \cite{boma,CM2001,GNI,leoshin,MMM06Grupoides,moser-veselov,OJM,perlmutter06}) among many others). 

An important result for the construction of methods based on discrete variational calculus is the variational error analysis theorem \cite{marsden-west} since it considerably lowers the difficulty of proving the order of a proposed integrator. Marsden and West considered this construction for Lagrangians defined on the tangent bundle $TQ$ of the configuration space $Q$ and the local error analysis of variational integrators defined on the corresponding  discrete space $Q \times Q$. This result was later completely and rigorously stablished in \cite{PatrickCuell}. In our recent paper, we also gave a rigorous proof of the result for Lagrangian systems subjected to external forces (see \cite{Sato}).

In this paper, we restrict our attention to forced Lagrangian systems whose configuration space is a Lie group which additionally admit reduction by left or right action, i.e., Euler-Poincar\'e equations with forcing (see \cite{marsden3,Holm1,Holm2}). As a paradigmactic example, we can think of a rigid body subjected to external forces depending only on its angular velocity. We will also prove how to derive high-order methods for Euler-Poincar\'e equations with external forces using standard variational error analysis. 

We will adapt the duplication of variables technique that was used previously in \cite{Sato} to the case of Euler-Poincar\'e systems. As we will see, the extension of our result to this case is far from trivial and we will need to use the geometric notion of Poisson groupoid. This notion was introduced in \cite{weipoisson} in order to unify the theories of Poisson Lie groups and symplectic groupoids (see also \cite{ping}).

Finally, we will show the performance of the geometric integrators that we construct in concrete examples, specifically in cases where the forced systems also have some important preservation properties such as, for instance, cases when the energy is dissipated but angular momentum is not. In more geometric terms, this means that the coadjoint orbits remain invariant, but on them the energy is decreasing along orbits (see \cite{BKMR}). 

\section{Euler-Poincar\'e and Lie-Poisson equations with forcing}

\subsection{Euler-Poincar\'e equations with forcing}
Let $G$ be a finite dimensional Lie group as the configuration space of a mechanical system \cite{AM87,marsden3,Holm1,Holm2}. Using the left (or right, alternatively) multiplication ${\mathcal L}_g: G\rightarrow G$, ${\mathcal L}_g(g')=gg'$ allows us to trivialize the tangent bundle $TG$ and the cotangent bundle $T^*G$ as follows
\begin{alignat*}{4}
TG &\to G\times \mathfrak{g}\phantom{^*}, \qquad &(g, \dot{g}) &\mapsto (g, T_g{\mathcal L}_{g^{-1}}\dot g)=(g, \eta)\; ,\\
T^*G &\to G\times \mathfrak{g}^*, \qquad &(g, p) &\mapsto (g, T^*_e {\mathcal L}_g p) = (g, \mu)\; ,
\end{alignat*}
where $\mathfrak{g}=T_eG$ is the Lie algebra of $G$ and $e$ denotes the identity element of $G$. In this paper we will work preferably with the left translation but it is possible to derive equivalent results using right translation. Throughout the paper we will also use matrix notation, i.e. $T_g{\mathcal L}_{g^{-1}}\dot g \equiv g^{-1}\dot{g}$ intermittently to simplify computations.

Given a Lagrangian $L: TG \to \mathbb{R}$, we define its left-trivialized version, $\check{L}: G \times \mathfrak{g} \to \mathbb{R}$, by the relation
\begin{equation*}
\check{L}(g,\eta) = L(g, T_e {\mathcal L}_{g}\eta)\,.
\end{equation*}

With this trivialized Lagrangian, the classical Euler-Lagrange equations can be rewritten as
\begin{align}
\frac{\mathrm{d}}{\mathrm{d}t}\left(\frac{ \partial \check{L}}{\partial \eta}\right) &= \mathrm{ad}_{\eta}^*\frac{\partial \check{L}}{\partial\eta}+T^*_e{\mathcal L}_g\left(\frac{\partial \check{L}}{\partial g}\right)\label{elt}\\
\dot{g} &= g \eta = T_e{\mathcal L}_g \eta
\end{align}
where $\mathrm{ad}_{\xi}\eta = [\xi, \eta]$, with $\xi, \eta \in \mathfrak{g}$, and $[\; , \;]$ denotes the Lie bracket on the algebra.

The Euler-Lagrange equations are modified under the presence of external forces, which are introduced as a map $F: TG\rightarrow T^*G$. The resulting equations can be obtained by applying the Lagrange-D'Alembert principle and under left-trivialization they become
\begin{equation}
\frac{\mathrm{d}}{\mathrm{d}t}\left(\frac{ \partial \check{L}}{\partial \eta}\right) = \mathrm{ad}_{\eta}^*\frac{\partial \check{L}}{\partial\eta}+T^*_e{\mathcal L}_g\left(\frac{\partial \check{L}}{\partial g}\right) + \check{F}.
\label{eq:elf}
\end{equation}
where $\check{F}: G \times \mathfrak{g} \to \mathfrak{g}^*$ is defined by
\begin{equation*}
\check{F}(g, \eta) = T^*_e {\mathcal L}_g \left(F(g, T_e {\mathcal L}_{g}\eta)\right).
\end{equation*}

In the sequel we will assume that $L$ is left invariant, i.e. the following reduced Lagrangian, $l: \mathfrak{g}\rightarrow \mathbb{R}$ 
\begin{equation*}
l(\eta) = L(e, \eta) = L(g'g, g'\dot{g}), \forall g' \in G
\end{equation*}
is well defined. That is, $l$ is the restriction of $L$ to $\mathfrak{g}$. In this case, the corresponding free Euler-Lagrange equations are:
 \begin{equation}\label{epe}
 \frac{\mathrm{d}}{\mathrm{d}t}\left(\frac{ \partial l}{\partial \eta}\right) = \mathrm{ad}_{\eta}^*\frac{\partial l}{\partial\eta}\,.
\end{equation}
These are known as the (left-invariant) {\bf Euler-Poincar\'e equations}.

The procedure to find a solution $t \mapsto g(t)$ of the Euler-Lagrange equations with initial condition $g(0) = g_0$ and $\dot{g}(0) = v_0$ is the following. First, we solve the first order ODE system defined by the Euler-Poincar\'e equations \eqref{epe} with the initial condition $\eta(0) = g_0^{-1} v_0$. With this solution, $t \mapsto \eta(t)$, we solve the so-called reconstruction equation: 
\begin{equation*}
\dot{g}(t) = g(t)\eta(t), \quad \text{with } g(0)=g_0\, .
\end{equation*}

The Euler-Poincar\'e equations are modified under the presence of external forces that, mathematically, are introduced as a map $f: \mathfrak{g}\rightarrow \mathfrak{g}^*$. The Euler-Poincar\'e equations with forcing are  
\[
 \frac{\mathrm{d}}{\mathrm{d}t}\left(\frac{ \partial l}{\partial \eta}\right)=\mathrm{ad}_{\eta}^*\frac{\partial l}{\partial\eta}+f(\eta).
\]
A relation between this equation and eq.\eqref{eq:elf} can be readily stablished if $L$ is left-invariant and $F$ does not depend on the point of application, as then
\begin{equation*}
f(\eta) = T^*_e{\mathcal L}_g {F}(e,\eta)
\end{equation*}
By fixing a basis $\{e_a\}$ of the Lie algebra $\mathfrak{g}$ we may induce coordinates $(\eta^a)$ so that we can write $\eta = \eta^a e_a$. Then, the Euler-Poincar\'e equations with forcing have the following expression in local coordinates
\[
 \frac{\mathrm{d}}{\mathrm{d}t}\left(\frac{ \partial l}{\partial \eta^a}\right)=C^d_{ba}\eta^b \frac{\partial l}{\partial \eta^d} +f_a
\]
where $C_{ab}^c$ are the structure constants of the Lie algebra $\mathfrak{g}$, that is, $[e_a, e_b]=C_{ab}^ce_c$ and $\langle f(\eta), e_a\rangle=f_a$.

\subsection{Lie-Poisson equations with forcing}
(See \cite{david,BKMR} for a longer and deeper exposition.)

Given a Hamiltonian $h: \mathfrak{g}^*\rightarrow \mathbb{R}$ then the Lie-Poisson equations are 
\begin{align}
	\dot{\mu} &= \mathrm{ad}_{h'(\mu)}^*\mu,\label{lpe}\\
	\dot{g} &= g h'(\mu)\label{elth},
\end{align}
where $\mu\in \mathfrak{g}^*$ and $h'(\mu)=\partial h/\partial \mu$.

Given a regular Lagrangian $l: \mathfrak{g}\rightarrow \mathbb{R}$, the Legendre transformation
\begin{equation*}
	\begin{array}{rrcl}
	{\mathcal F}l: & \mathfrak{g} & \longrightarrow & \mathfrak{g}^*\\
	               &          \eta &     \longmapsto & \displaystyle \mu = \frac{\partial l}{\partial \eta}
	\end{array}
\end{equation*}
is a local diffeomorphism and we can relate equations \eqref{epe} and \eqref{lpe}. If we define the Lagrangian \emph{energy function},
\begin{equation*}
	E_l(\eta) = \left\langle \frac{\partial l}{\partial \eta}, \eta \right\rangle - l(\eta),
\end{equation*}
then this implicitly defines a Hamiltonian function $h \circ \mathcal{F}l = E_l$.

It is important  to note that $\mathfrak{g}^*$  is equipped with the Lie-Poisson bracket $\{\; ,\; \}$ 
\begin{equation}
	\{f, g\} = -\left\langle \mu, \left[ \frac{\partial f}{\partial \mu}, \frac{\partial g}{\partial \mu}\right]\right\rangle
\end{equation}
where $f, g\in C^{\infty}(\mathfrak{g}^*)$. 

In coordinates $\mu_a$, induced by the dual basis $\{e^a\}$ on $\mathfrak{g}^*$, we have that 
\begin{equation*}
	\{\mu_a, \mu_b\} = -C_{ab}^d \mu_d\,.
\end{equation*}

This bracket   exactly corresponds to the reduced bracket by standard  Poisson reduction from
$
	\pi: (T^*G, \omega_G) \longrightarrow (T^*G/G \equiv \mathfrak{g}^*, \{\; , \; \})
$
where $\pi(\mu_g) = [\mu_g] \equiv T_e^*{\mathcal L}_g(\mu_g)$. 

Fixed  $\mu_0\in \mathfrak{g}^*$, the coadjoint orbit is
$
	{\mathcal O}_{\mu_0} := \left\{ \mathrm{Ad}^*_{g^{-1}}\mu_0\; \vert \; g\in G\right\}\subseteq \mathfrak{g}^*.
$
If $t\rightarrow \mu(t)$ is the solution of  the initial value problem $\dot{\mu}=\mathrm{ad}_{h'(\mu)}^*\mu$ with $\mu(0)=\mu_0$ then  we can deduce that $\mu(t)\in {\mathcal O}_{\mu(0)}$.

Given a Hamiltonian function $h: \mathfrak{g}^*\rightarrow \mathbb{R}$ we derive the equations of motion by
the equations
\begin{equation}\label{plo}
\dot{\mu}(t)=\sharp^{\Pi}(\mathrm{d} h(\mu(t)))
\end{equation}
where $\Pi$ is the bivector field associated to the bracket $\{\; ,\; \}$. 
It is well known that the flow $\Psi^h_t: \mathfrak{g}^*\rightarrow \mathfrak{g}^*$ of $X_h$ verifies some geometric properties: 
\begin{enumerate}
	\item It preserves the linear Poisson bracket, that is
	\begin{equation*}
		\left\lbrace f\circ \Psi_t, g\circ \Psi_t\right\rbrace = \{f, g\} \circ \Psi_t, \quad \forall f, g \in C^{\infty}(\mathfrak{g}^*).
	\end{equation*}
	\item It preserves the hamiltonian 	$h \circ \Psi^h_t = h.$
	\item If all the coadjoint orbits are connected, Casimir functions are also preserved along each coadjoint orbit. 
\end{enumerate}

We can also add forces in our picture, in this case modeled as a map $\tilde{f}: \mathfrak{g}^*\rightarrow \mathfrak{g}^*$. 
If we start from a force on the Lagrangian side $f: \mathfrak{g}\rightarrow \mathfrak{g}^*$ then we define the force on the Hamiltonian side by taking $\tilde{f} \circ \mathcal{F}l = f$. The Lie-Poisson equations with forcing are modified as folllows: 
  \begin{equation*}
	  \dot{\mu} = \mathrm{ad}_{h'(\mu)}^*\mu + \tilde{f}(\mu).
  \end{equation*}

It is clear that by adding the forcing term we are losing all the properties of the flow of the free system (preservation of the Hamiltonian, preservation of coadjoint orbits...). But in the case of forces $\tilde{f}: \mathfrak{g}^*\rightarrow \mathfrak{g}^*$ of the special form
\begin{equation*}
	\tilde{f}(\mu) = \mathrm{ad}^*_{\tilde{\zeta}(\mu)}\mu
\end{equation*}
where $\tilde{\zeta}: \mathfrak{g}^* \to \mathfrak{g}^*$ is an arbitrary map, in this particular case, the coadjoint orbits are preserved. Observe that in this last case the Euler-Poincar\'e equations  are transformed like
\begin{equation*}
	\frac{\mathrm{d}}{\mathrm{d}t}\left(\frac{ \partial l}{\partial \eta}\right)=\mathrm{ad}_{\eta+\zeta(\eta)}^*\left(\frac{\partial l}{\partial\eta}\right)
\end{equation*}
for an arbitray map $\zeta: \mathfrak{g}\rightarrow \mathfrak{g}$ (see \cite{BKMR}).

\subsection{Discrete Lagrangian formalism}
Now, we will describe discrete Euler-Poincar\'e equations 
(see \cite{MR1726670,MR2496560,MMM06Grupoides} for more details). 
Fixed an element $W\in G$, define the set of admissible pairs
\begin{equation*}
	C_W^2 = \left\lbrace (W_1, W_2) \in G \times G \,\vert\, W_1 W_2 = W \right\rbrace.
\end{equation*}

A tangent vector to the manifold $C_W^2$ is a tangent vector at $t=0$ of a curve in $C_W^2$
\begin{equation*}
	t\in (-\epsilon, \epsilon)\subseteq \R\longrightarrow (c_1(t), c_2(t))
\end{equation*}
where $c_i(t)\in G$, $c_1(t)c_2(t) = W$ and $c_1(0) = W_1$ and $c_2(0) = W_2$.
These types of curves  are given by
\begin{equation}\label{curves}
	c(t) = (W_1 U(t), U^{-1}(t) W_2)
\end{equation}
for an arbitrary $U(t)\in G$ with $t\in (-\epsilon, \epsilon)$ and $U(0)=e$, where $e$ is the identity element of $G$.

Fixed a discrete Lagrangian $l_d: G\rightarrow \R$, we define the {\bf discrete action sum} by
\begin{equation*}
	\begin{array}{rrcl}
		S_{l_d}: &     C^2_W  & \longrightarrow & \mathbb{R}\\
		         & (W_1, W_2) &     \longmapsto & l_d(W_1) + l_d(W_2)
	\end{array}
\end{equation*}

\begin{definition}{\bf [Discrete Hamilton's principle]}
Given $W \in G$, then $(W_1, W_2)\in C_W^2$ is a solution of the discrete Lagrangian system determined by $l_d: G\rightarrow \R$ if and only if $(W_1, W_2)$ is a critical point of $S_{l_d}$.
\end{definition}

We characterize the critical points  using the curves defined in (\ref{curves}) as follows
\begin{align*}
	0 &= \left.\frac{\mathrm{d}}{\mathrm{d}t} \right\vert_{t=0} S_{l_d}(c(t))\\
	  &= \left.\frac{\mathrm{d}}{\mathrm{d}t} \right\vert_{t=0} \left[ l_d(W_1U(t)) + l_d(U(t)^{-1}W_2)\right]\\
	  &= \left\langle \mathcal{L}^{*}_{W_1} l_d'(W_1) - \mathcal{R}^{*}_{W_2} l_d'(W_2) ,\zeta\right\rangle
\end{align*}
where $\zeta = \dot{U}(0)$. 

Alternatively, we can write these equations as follows
\begin{equation}\label{dep}
	0 = \lvec{\xi}_{W_1} (l_d)- \rvec{\xi}_{W_2} (l_d)\; , \quad \forall \xi \in \mathfrak{g}
\end{equation}
which are called {\bf discrete Euler-Poincar\'e equations}. Here $\lvec{\xi}_U = T_e \mathcal{L}_U \xi$ and $\rvec{\xi}_U = T_e \mathcal{R}_U \xi$ are the left and right-invariant vector fields, respectively. 

Also it is possible to define two discrete Legendre transformations by 
$\mathcal{F}^- l_d:  G \rightarrow \mathfrak{g}^*$ and $\mathcal{F}^+ l_d:  G \rightarrow \mathfrak{g}^*$ by
\begin{align*}
	\mathcal{F}^- l_d(W) &= \mathcal{R}_W^* l'_d(W)\\
	\mathcal{F}^+ l_d(W) &= \mathcal{L}_W^* l'_d(W)
\end{align*}
So, if we define
\begin{equation*}
	\mu_k = \mathcal{F}^- l_d(W_k) = \mathcal{R}_{W_k}^* l'_d
\end{equation*}
then equation \eqref{dep} are equivalent to
\begin{equation*}
	\mu_2 = \mathcal{F}^- l_d(W_2) = \mathcal{F}^+ l_d(W_1) = \mathrm{Ad}^*_{W_1} \mu_1
\end{equation*}
which in this case are called {\bf discrete Lie-Poisson equations}. Then, an implicit map $\mu_k \mapsto \mu_{k+1}$ is defined such that it preserves the Lie-Poisson structure. If the discrete Lagrangian function $l_d: G\rightarrow \mathbb{R}$ is regular, that is, the Legendre transformation $\mathcal{F}^- l_d:  G\rightarrow \mathfrak{g}^*$ is a local diffeomorphism (or, equivalently, $\mathcal{F}^+ l_d: G \rightarrow \mathfrak{g}^*$ is a local diffeomorphism) then the map $\mu_k \mapsto \mu_{k+1}$ is, in fact, explicit.

To obtain a numerical integrator for the dynamics determined by a continuous Lagrangian $l: \mathfrak{g}\rightarrow {\mathbb  R}$ it is necessary to know  how closely the trajectory of the proposed  numerical method  matches the exact trajectory of the Euler-Poincar\'e equations. For variational integrators, an important tool for simplifying the error analysis is to alternatively study how closely a discrete Lagrangian matches the exact discrete Lagrangian defined by $l: \mathfrak{g}\rightarrow {\mathbb  R}$. In our case, the exact Lagrangian is given by 
\begin{equation}
	l_h^{e}(W) = \int_{0}^{h} l(\eta_W(t)) \mathrm{d}t, \text{ for } g \in G,
\end{equation}
where $\eta_W: I \subseteq \mathbb{R} \to \mathfrak{g}$ is the unique solution of the Euler-Poincar\'e equations for $l: \mathfrak{g} \to \mathbb{R}$ such that the corresponding solution $(g, \dot{g}): I \subseteq \mathbb{R} \to TG$ of the Euler-Lagrange equations for $L(g, \dot{g}) = l\left( g^{-1}g \right)$ satisfies
\begin{equation*}
	g(0) = e, \quad g(h) = W.
\end{equation*}
In \cite{MMM3} it is shown that if we take as a discrete Lagrangian an approximation of order $r$ of the exact discrete Lagrangian, then, the associated  discrete evolution operator is also of order $r$, that is, the derived  discrete scheme is an approximation of  the continuous flow of order $r$. 


\section{A variational description of forced Euler-Poincar\'e equations}

In this section, we will study a purely variational description of the Euler-Poincar\'e and Lie-Poisson equations with forcing (see \cite{Sato}). We will see that the appropriate phase spaces for such Lagrangian and Hamiltonian mechanics are, respectively, $\mathfrak{g}\times G\times \mathfrak{g}$ and $\mathfrak{g}^*\times G\times \mathfrak{g}^*$. 

First, consider a Lagrangian $\boldsymbol{L}: TG \times TG \rightarrow \mathbb{R}$ and the left-action $\Phi: G \times G \to G$, $\Phi_{g'}(g) = g' g = {\mathcal L}_{g'} g$, and its tangent lift 
$\widehat{\Phi} : G \times TG \rightarrow TG$ given by 
\begin{equation*}
	\widehat{\Phi}_{g'}(v_g)= T_g \Phi_{g'} (v_g) = g' v_g \in T_{g' g} G
\end{equation*}
and the corresponding diagonal action $\widehat{\Phi}^{\times}_{\tilde{g}}: TG \times TG \rightarrow TG \times TG$ defined by: 
\begin{equation*}
	\widehat{\Phi}^{\times}_{g'} (v_{g}, \tilde{v}_{\tilde{g}})= (g' v_{g}, g' \tilde{v}_{\tilde{g}})
\end{equation*}
Assuming that $\boldsymbol{L}$ is $\widehat{\Phi}^{\times}$-invariant we deduce that
\begin{equation*}
	\boldsymbol{L}(v_{g}, \tilde{v}_{\tilde{g}}) = \boldsymbol{L}( g^{-1} v_{g}, g^{-1} \tilde{v}_{\tilde{g}}),
\end{equation*}
which lets us define the reduced Lagrangian $\boldsymbol{l}: \mathfrak{g} \times G \times \mathfrak{g} \rightarrow \mathbb{R}$ by
\begin{equation*}
	\boldsymbol{l}(\eta, U, \psi) = \boldsymbol{L}(\eta, U \psi)
\end{equation*}
where $\eta = g^{-1} v_{g}, \psi = \tilde{g}^{-1} \tilde{v}_{\tilde{g}} \in \mathfrak{g}$ and $U = g^{-1} \tilde{g} \in G$.

We have the following
\begin{theorem}\label{puy}
The Euler-Lagrange equations for $\boldsymbol{L}$ are equivalent to the reduced Euler-Lagrange equations for $\boldsymbol{l}: \mathfrak{g}\times G\times \mathfrak{g}\rightarrow \mathbb{R}$: 
\begin{equation}
\begin{aligned}
	\frac{\mathrm{d}}{\mathrm{d}t}\left( \frac{\partial \boldsymbol{l}}{\partial \eta}\right) &= \mathrm{ad}^*_{\eta} \frac{\partial \boldsymbol{l}}{\partial \eta} - R_{U}^*\frac{\partial \boldsymbol{l}}{\partial U}\\
	\frac{\mathrm{d}}{\mathrm{d}t}\left( \frac{\partial \boldsymbol{l}}{\partial \psi}\right) &= \mathrm{ad}^*_{\psi} \frac{\partial \boldsymbol{l}}{\partial \psi} + L_{U}^*\frac{\partial \boldsymbol{l}}{\partial U}\\
	\frac{\mathrm{d} U}{\mathrm{d} t} &= U\psi - \eta U\\
                  &= U(\psi - \mathrm{Ad}_{U^{-1}}\eta)\,.
\end{aligned}
\label{eq:Poisson_groupoid_Euler_Lagrange}
\end{equation}
\end{theorem}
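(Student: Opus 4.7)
The plan is to run a standard Lagrangian reduction argument adapted to the diagonal action: variations of the pair $(g,\tilde g)$ in $G\times G$ descend, via the left trivialization, to a pair of free variations $\Sigma=g^{-1}\delta g$ and $\tilde\Sigma=\tilde g^{-1}\delta\tilde g$ in $\mathfrak g$ vanishing at the endpoints, and these in turn induce the variations of $(\eta,U,\psi)$. Substituting into Hamilton's principle for $\boldsymbol l$ and integrating by parts should yield the three equations in the statement.

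Concretely, I would first fix reduced coordinates $\eta=g^{-1}\dot g$, $\psi=\tilde g^{-1}\dot{\tilde g}$, $U=g^{-1}\tilde g$, and compute
\begin{equation*}
\dot U=\tfrac{d}{dt}(g^{-1}\tilde g)=-g^{-1}\dot g\, g^{-1}\tilde g+g^{-1}\dot{\tilde g}=-\eta U+U\psi,
\end{equation*}
which is exactly the reconstruction equation (third line of \eqref{eq:Poisson_groupoid_Euler_Lagrange}), and whose factored form $U(\psi-\mathrm{Ad}_{U^{-1}}\eta)$ follows from $\eta U=U\,\mathrm{Ad}_{U^{-1}}\eta$. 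Next I would mimic the standard Euler--Poincar\'e computation to obtain
\begin{equation*}
\delta\eta=\dot\Sigma+[\eta,\Sigma],\qquad \delta\psi=\dot{\tilde\Sigma}+[\psi,\tilde\Sigma],\qquad \delta U=-\Sigma U+U\tilde\Sigma,
\end{equation*}
where the last equality is derived in the same way as the reconstruction equation, by differentiating $U=g^{-1}\tilde g$ in the variational parameter instead of in time.

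Then, using $\widehat\Phi^\times$-invariance, the variational principle for $\boldsymbol L$ on a curve $(g(t),\tilde g(t))$ is equivalent to Hamilton's principle for $\boldsymbol l$ on the reduced curve $(\eta,U,\psi)$. Substituting the variations above into $\delta\!\int\! \boldsymbol l\,dt=0$ and integrating by parts the terms in $\dot\Sigma$ and $\dot{\tilde\Sigma}$ gives
\begin{equation*}
\int\!\bigg\langle{-}\tfrac{d}{dt}\tfrac{\partial \boldsymbol l}{\partial\eta}+\mathrm{ad}^*_\eta\tfrac{\partial \boldsymbol l}{\partial\eta}-R_U^*\tfrac{\partial \boldsymbol l}{\partial U},\Sigma\bigg\rangle+\bigg\langle{-}\tfrac{d}{dt}\tfrac{\partial \boldsymbol l}{\partial\psi}+\mathrm{ad}^*_\psi\tfrac{\partial \boldsymbol l}{\partial\psi}+L_U^*\tfrac{\partial \boldsymbol l}{\partial U},\tilde\Sigma\bigg\rangle\,dt=0,
\end{equation*}
after using $\langle\partial \boldsymbol l/\partial U,-\Sigma U\rangle=-\langle R_U^*\partial\boldsymbol l/\partial U,\Sigma\rangle$ and $\langle\partial \boldsymbol l/\partial U,U\tilde\Sigma\rangle=\langle L_U^*\partial\boldsymbol l/\partial U,\tilde\Sigma\rangle$. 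Since $\Sigma$ and $\tilde\Sigma$ are arbitrary and independent, the two bracketed expressions must vanish, yielding exactly the first two equations of \eqref{eq:Poisson_groupoid_Euler_Lagrange}.

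The main subtlety is keeping track of the signs and of which trivialization ($L$ or $R$) produces which term on $\partial\boldsymbol l/\partial U$: the $-R_U^*$ appears in the $\eta$-equation because $\delta U$ contains $-\Sigma U$, i.e.\ a right translation of $\Sigma$ by $U$, whereas $+L_U^*$ appears in the $\psi$-equation because the other term is $U\tilde\Sigma$, a left translation. Once this bookkeeping is done carefully, both the forward and converse implications are automatic: the correspondence $(\delta g,\delta\tilde g)\leftrightarrow (\Sigma,\tilde\Sigma)$ is bijective with the correct endpoint conditions, so stationarity upstairs is equivalent to stationarity downstairs, and the reconstruction equation $\dot U=-\eta U+U\psi$ is an identity on reduced curves, matching the third equation in \eqref{eq:Poisson_groupoid_Euler_Lagrange}.
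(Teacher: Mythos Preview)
Your proposal is correct and follows essentially the same route as the paper: left-trivialize to get $\Sigma=g^{-1}\delta g$, $\tilde\Sigma=\tilde g^{-1}\delta\tilde g$, derive $\delta\eta=\dot\Sigma+[\eta,\Sigma]$, $\delta\psi=\dot{\tilde\Sigma}+[\psi,\tilde\Sigma]$, $\delta U=U\tilde\Sigma-\Sigma U$, substitute into the reduced action and integrate by parts. Your write-up is in fact slightly more explicit than the paper's, since you spell out the sign bookkeeping for the $R_U^*$ and $L_U^*$ terms and you explicitly derive the reconstruction equation $\dot U=-\eta U+U\psi$ as a kinematic identity rather than leaving it implicit.
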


\begin{proof}
Define the functional:
\begin{equation*}
	{\mathcal J}[(\eta, U, \psi)] = \int^T_0 \boldsymbol{l}(\eta(t), U(t), \psi(t))\; dt
\end{equation*}
for some $T \in \mathbb{R} > 0$. Its critical points are the solutions of the corresponding Euler-Lagrange equations. Taking variations
\begin{equation*}
	\mathrm{d} {\mathcal J}[(\eta, U, \psi)]((\delta \eta, \delta U, \delta \psi)) = \int^T_0 \left[ \frac{\partial \boldsymbol{l}}{\partial \eta}\delta \eta+\frac{\partial \boldsymbol{l}}{\partial U}\delta U + \frac{\partial \boldsymbol{l}}{\partial \psi}\delta \psi\right]\; dt=0\,.
\end{equation*}

We know that  $\eta = g^{-1} \dot{g}$ and $\psi = \tilde{g}^{-1} \dot{\tilde{g}}$ and $U = g^{-1} \tilde{g}$. 
Therefore: 
\begin{align*}
	\delta \eta &= g^{-1}\delta\dot{g} - \Sigma \eta\\
	\delta U &= U \widetilde{\Sigma} - \Sigma U\\
	\delta \psi &= \tilde{g}^{-1}\delta\dot{\tilde{g}} - \widetilde{\Sigma}\psi
\end{align*}
where $\Sigma = g^{-1}\delta {g}$ and $\widetilde{\Sigma} = \tilde{g}^{-1}\delta \tilde{g}$. Also,
\begin{align*}
\dot{\Sigma} &= g^{-1}\delta\dot{g} - \eta \Sigma\\
\dot{\widetilde{\Sigma}} &= \tilde{g}^{-1}\delta\dot{\tilde{g}} - \psi \widetilde{\Sigma}
\end{align*}
and, in consequence,
\begin{align*}
\delta \eta &= [\eta, \Sigma] + \dot{\Sigma}\\
\delta U &= U \widetilde{\Sigma} -\Sigma U\\
\delta \psi &= [\psi, \widetilde{\Sigma}] + \dot{\widetilde{\Sigma}}\,.
\end{align*}

Since $\Sigma$ and $\widetilde{\Sigma}$ are arbitrary, using integration by parts, we deduce the equations. 
\end{proof}

A trivial example of such Lagrangians is given as follows. Consider a $\widehat{\Phi}$-invariant Lagrangian $L: TG \to \mathbb{R}$ with reduced Lagrangian $l: \mathfrak{g} \to \mathbb{R}$ and define a new Lagrangian $\boldsymbol{L}: TG \times TG \to \mathbb{R}$ as
\begin{equation*}
\boldsymbol{L}(v_{g},\tilde{v}_{\tilde{g}}) = L(\tilde{v}_{\tilde{g}}) - L(v_{g}).
\end{equation*}
Then, by (left-)trivialization we find that its associated reduced Lagrangian $\boldsymbol{l}: \mathfrak{g} \times G \times \mathfrak{g} \to \mathbb{R}$ takes the form
\begin{equation*}
\boldsymbol{l}(\eta, U, \psi) = l(\psi) - l(\eta).
\end{equation*}

According to the results of theorem \ref{puy} the equations of motion for this class of Lagrangians are simply two uncoupled and independent Euler-Poincar\'e equations.

A more general class of Lagrangians are those of the form
\begin{equation*}
\boldsymbol{l}_{\boldsymbol{k}}(\eta, U, \psi) = l(\psi) - l(\eta) - \boldsymbol{k}(\eta, U, \psi),
\end{equation*}
where $\boldsymbol{k}: \mathfrak{g} \times G \times \mathfrak{g} \to \mathbb{R}$ acts as a \emph{generalized potential}. As it will become clear in the next section, if we still want to recuperate unique and clear dynamics on $\mathfrak{g}$, it will be crucial that these Lagrangians and potentials satisfy the discrete symmetry
\begin{equation*}
\boldsymbol{k}(\psi, U^{-1}, \eta) = -\boldsymbol{k}(\eta, U, \psi),
\end{equation*}
which will result in two copies of the same dynamics when we restrict to initial conditions on the subset $(\eta, e, \eta)$, that is, the restricted vector field they define projects onto $\mathfrak{g}$.

Our aim now is to obtain a generalized potential whose contribution to the dynamics on the aforementioned subset coincides with that of a given forcing term, $f$. To do so first consider the exponential map $\exp: \mathfrak{g}\rightarrow G$. We choose $\exp$ but it is possible to take any other retraction map (see, for instance,  \cite{MR2496560} for different retraction maps). It is well known that for matrix Lie groups
\begin{equation*}
	\exp \xi = \sum_{n = 0}^\infty \frac{1}{n!} \xi^n
\end{equation*}
Obviously $\exp 0 = I$ and $T_0 \exp = \mathrm{Id}$ with the usual identifications. If we restrict ourselves to a neighborhood of the identity of the group, $\mathcal{U}_e$, then its inverse is well-defined.

With this we may then construct the function
$\boldsymbol{k}_f: \mathfrak{g} \times G \times \mathfrak{g} \rightarrow \mathbb{R}$ by
\begin{equation*}
	\boldsymbol{k}_f (\eta, U, \psi) = \frac{1}{2} \left( \left\langle f(\psi), \exp^{-1} U^{-1}\right\rangle - \left\langle f(\eta), \exp^{-1} U\right\rangle\right)
\end{equation*}
where $U$ is assumed to be in $\mathcal{U}_e$.

\begin{proposition}
Let $(l,f)$ be a regular Lagrangian system with forcing given by $l: \mathfrak{g}\rightarrow \mathbb{R}$ and $f: \mathfrak{g} \rightarrow \mathfrak{g}^*$, and define the Lagrangian system $\boldsymbol{l}_{f}: \mathfrak{g} \times G\times \mathfrak{g}\rightarrow \mathbb{R}$ by
\begin{equation*}
\boldsymbol{l}_{f} (\eta, U, \psi) = l(\psi) - l(\eta) - \boldsymbol{k}_f(\eta, U, \psi).
\end{equation*}
Then we have that the following are equivalent: 
\begin{itemize}
	\item $\sigma: I \subseteq \mathbb{R} \rightarrow \mathfrak{g}$ is a solution of the  Euler-Poincar\'e equations with forcing
	\begin{equation*}
		\frac{\mathrm{d}}{\mathrm{d}t} \left(\frac{ \partial l}{\partial \eta}\right) = \mathrm{ad}_{\eta}^*\frac{\partial l}{\partial \eta} + f(\eta)\,.
	\end{equation*}
	\item $\tilde{\sigma}:  I \subseteq \mathbb{R} \rightarrow \mathfrak{g} \times G\times \mathfrak{g}$, that is, $\tilde{\sigma}(t)=(\sigma(t), e, \sigma(t))$ is a solution of the Euler-Lagrange equations for $\boldsymbol{l}_{f}$: 
		\begin{align*}
			\frac{\mathrm{d}}{\mathrm{d}t}\left( \frac{\partial \boldsymbol{l}_{f}}{\partial \eta}\right)-\mathrm{ad}^*_{\eta}  \frac{\partial \boldsymbol{l}_{f}}{\partial \eta}+ R_{U}^*\frac{\partial \boldsymbol{l}_{f}}{\partial U} &= 0\\
			\frac{\mathrm{d}}{\mathrm{d}t}\left( \frac{\partial \boldsymbol{l}_{f}}{\partial \psi}\right)-\mathrm{ad}^*_{\psi}  \frac{\partial \boldsymbol{l}_{f}}{\partial \psi} -L_{U}^*\frac{\partial \boldsymbol{l}_{f}}{\partial U} &= 0\,.
		\end{align*}
 \end{itemize}
\end{proposition}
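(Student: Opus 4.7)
The plan is to substitute the ansatz $\tilde{\sigma}(t)=(\sigma(t),e,\sigma(t))$ into the reduced Euler--Lagrange equations \eqref{eq:Poisson_groupoid_Euler_Lagrange} of Theorem \ref{puy} and verify that (i) the reconstruction equation for $U$ is automatically consistent with $U(t)\equiv e$, and (ii) each of the two remaining equations reduces to the forced Euler--Poincar\'e equation for $\sigma$. Since the reduction in (ii) yields precisely one equation, identical from either side, the equivalence follows directly in both directions.

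For (i), the third equation $\dot U = U\psi - \eta U$ at $U=e$, $\eta=\psi=\sigma$ gives $\dot U=\sigma-\sigma=0$, so imposing $U(0)=e$ propagates $U(t)\equiv e$. This handles the backward implication: any solution of the reduced Euler--Lagrange equations whose initial data lies on the diagonal $\{(\eta,e,\eta)\}$ stays on the diagonal, and the remaining equations then constrain $\sigma$ alone.

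The computational heart is the evaluation of the partial derivatives of $\boldsymbol{l}_f$ along $(\sigma,e,\sigma)$. The two "bulk" terms $\pm l(\cdot)$ contribute $\mp \partial l/\partial \sigma$ to the $\eta$ and $\psi$ derivatives and nothing to $\partial/\partial U$. For the potential $\boldsymbol{k}_f$, the key observation is that $\exp^{-1}(e)=0$, so $\boldsymbol{k}_f$ together with its $\eta$- and $\psi$-derivatives vanishes identically on $\{U=e\}$ (both summands are linear in $\exp^{-1}U^{\pm1}$). The only surviving contribution is the $U$-derivative, for which I would use $T_e\exp^{-1}=\mathrm{Id}$ and the standard fact that the tangent of inversion $U\mapsto U^{-1}$ at the identity is $-\mathrm{Id}_\mathfrak{g}$, yielding
\begin{equation*}
\left.\frac{\partial \boldsymbol{k}_f}{\partial U}\right|_{(\sigma,e,\sigma)}(\delta U) = \tfrac{1}{2}\bigl(\langle f(\sigma), -\delta U\rangle - \langle f(\sigma),\delta U\rangle\bigr) = -\langle f(\sigma),\delta U\rangle,
\end{equation*}
so that $\partial \boldsymbol{l}_f/\partial U |_{(\sigma,e,\sigma)} = f(\sigma)$.

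Substituting these values into \eqref{eq:Poisson_groupoid_Euler_Lagrange}, and using that $L_e^*=R_e^*=\mathrm{Id}_{\mathfrak{g}^*}$, the first equation collapses to $-\tfrac{d}{dt}(\partial l/\partial \sigma) = -\mathrm{ad}^*_\sigma(\partial l/\partial \sigma) - f(\sigma)$ while the second collapses to $\tfrac{d}{dt}(\partial l/\partial \sigma) = \mathrm{ad}^*_\sigma(\partial l/\partial \sigma) + f(\sigma)$; after the obvious sign flip these are one and the same, namely the forced Euler--Poincar\'e equation for $\sigma$. The only delicate point is the correct bookkeeping of the tangents of $\exp^{-1}$ and of inversion at $e$; this is precisely the mechanism behind the "discrete symmetry" $\boldsymbol{k}_f(\psi,U^{-1},\eta)=-\boldsymbol{k}_f(\eta,U,\psi)$ highlighted in the paper, whose role is to guarantee that the two Euler--Lagrange equations project consistently onto a single equation on the diagonal. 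Note also that the argument uses only $T_0\exp=\mathrm{Id}$, so any retraction with the same first-order data at the identity would give the same conclusion.
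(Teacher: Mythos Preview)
Your proposal is correct and follows essentially the same approach as the paper: apply Theorem~\ref{puy}, use $\exp^{-1}(e)=0$ and $T_e\exp^{-1}=\mathrm{Id}$ to see that the only contribution of $\boldsymbol{k}_f$ on the diagonal is $\partial\boldsymbol{k}_f/\partial U|_{(\eta,e,\eta)}=-f(\eta)$, and conclude that both Euler--Lagrange equations collapse to the forced Euler--Poincar\'e equation. Your write-up is in fact slightly more thorough than the paper's, since you explicitly verify the reconstruction equation $\dot U=0$ on the diagonal and note that only the first-order data of the retraction is used.
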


\begin{proof}
Applying theorem \ref{puy} to $\boldsymbol{l}_{f}$ we get
		\begin{align*}
			-\frac{\mathrm{d}}{\mathrm{d}t}\left( \frac{\partial l}{\partial \eta} - \frac{\partial \boldsymbol{k}_f}{\partial \eta}\right) + \mathrm{ad}^*_{\eta}  \left(\frac{\partial l}{\partial \eta} - \frac{\partial \boldsymbol{k}_f}{\partial \eta}\right) - R_{U}^*\frac{\partial \boldsymbol{k}_f}{\partial U} &= 0\\
			\frac{\mathrm{d}}{\mathrm{d}t}\left( \frac{\partial l}{\partial \psi} - \frac{\partial \boldsymbol{k}_f}{\partial \psi}\right) - \mathrm{ad}^*_{\psi}  \left(\frac{\partial l}{\partial \psi} - \frac{\partial \boldsymbol{k}_f}{\partial \psi}\right) + L_{U}^*\frac{\partial \boldsymbol{k}_f}{\partial U} &= 0\,.\\
		\end{align*}
Taking into account that $\exp^{-1}(e) = 0$ and $T_e \hbox{exp}^{-1} = \mathrm{Id}$, it is not difficult to see that on $(\eta, e, \eta)$ the only surviving term from $\boldsymbol{k}_f$ is
\begin{equation*}
\frac{\partial \boldsymbol{k}_f}{\partial U}(\eta, e, \eta) = - f(\eta).
\end{equation*}
Thus, on $(\eta, e, \eta)$ the Euler-Lagrange equations for $\boldsymbol{l}_{f}$ reduce to two copies of the Euler-Poincar\'e equations with forcing, which proves our claim.
\end{proof}

\section{Geometric interlude: Poisson groupoids}

\subsection{Lie groupoids and algebroids}\label{algebroide-grupoide}
First of all, we will recall some definitions related to Lie groupoid and Lie algebroids
(for more details, see \cite{mackenzie87,MR1747916}).

\begin{definition}
	A groupoid over a set $Q$ is a set $G$ together with the
	following structural maps:
	\begin{itemize}
		\item A pair of maps $\alpha: G \to Q$, the {\sl source}, and
		$\beta: G \to Q$, the {\sl target}. Thus, we can think an element $g \in G$  an arrow from $x= \alpha(g)$ to $y = \beta(g)$ in $Q$
		\begin{center}
			\begin{tikzcd}
				\underset{x = \alpha(g)}{\bullet} \arrow[r, bend left=45, "g"]
				& \underset{y = \beta(g)}{\bullet}
			\end{tikzcd}
		\end{center}
		The source and target mappings define the set of composable pairs
		\begin{equation*}
			G_{2} = \left\lbrace (g_1,g_2) \in G \times G \,\vert\, \beta(g_1) = \alpha(g_2)\right\rbrace.
		\end{equation*}
		\item A {\sl multiplication} on composable elements $\nu: G_{2} \to G$, denoted simply by $\nu(g_1,g_2) = g_1 g_2$, such that
		\begin{itemize}
			\item $\alpha(g_1 g_2) = \alpha(g_1)$ and $\beta(g_1 g_2) = \beta(g_2)$.
			\item $g_1(g_2 g_3) = (g_1 g_2) g_3$.
		\end{itemize}
		\begin{center}
			\begin{tikzcd}
				\underset{x = \alpha(g_1)}{\bullet} \arrow[r, bend left=30, "g_1"]
													\arrow[rr, bend right=30, "g_1 g_2"']
				& \underset{y = \beta(g_1) = \alpha(g_2)}{\bullet} \arrow[r, bend left=30, "g_2"]
				& \underset{z = \beta(g_2)}{\bullet}
			\end{tikzcd}
		\end{center}
		\item An {\sl inversion map} $\iota: G \to G$, to be denoted simply by $\iota(g) = g^{-1}$, such that
		\begin{itemize}
			\item $g^{-1} g = \epsilon(\beta(g))$ and $g g^{-1} = \epsilon(\alpha(g))$.
		\end{itemize}
		\begin{center}
			\begin{tikzcd}
				\underset{x = \alpha(g) = \beta(g^{-1})}{\bullet} \arrow[r, bend left=30, "g"]
				& \underset{y = \beta(g) = \alpha(g^{-1})}{\bullet} \arrow[l, bend left=30, "g^{-1}",
														         		start anchor={[xshift=4.5ex]},
														         		end anchor={[xshift=-4.5ex]}]
			\end{tikzcd}
		\end{center}
		\item An identity section $\epsilon: Q \to G$ of $\alpha$ and $\beta$, such that
		\begin{itemize}
			\item $\epsilon(\alpha(g))g = g$ and $g\epsilon(\beta(g)) = g$.
		\end{itemize}
		\begin{center}
			\begin{tikzcd}
				\underset{x = \alpha(g) = \beta(g)}{\bullet} \arrow[loop above, out=120, in=60,
																looseness=10, "g = \epsilon(x)"]
			\end{tikzcd}
		\end{center}
	\end{itemize}
\end{definition}
A groupoid $G$ over a set $Q$ will be denoted simply by the symbol
$G \rightrightarrows Q$.

The groupoid $G \rightrightarrows Q$ is said to be a {\sl Lie groupoid} if $G$ and $Q$ are differentiable manifolds and all the structural maps are differentiable with $\alpha$ and $\beta$ differentiable
submersions. If $G \rightrightarrows Q$ is a Lie groupoid then $\nu$ is a submersion, $\epsilon$ is an immersion and $\iota$ is a diffeomorphism. Moreover, if $x \in Q$, $\alpha^{-1}(x)$ (resp., $\beta^{-1}(x)$) will be said the $\alpha$-fiber (resp., the $\beta$-fiber) of $x$.

Typical examples of Lie groupoids are: the pair or banal groupoid $Q \times Q$ over $Q$ (the example that we have used along our former paper), a Lie group $G$ (as a Lie groupoid over a single point), the Atiyah groupoid $(Q \times Q)/G$ (over $Q/G$) associated with a free and proper action of a Lie group $G$ on $Q$... (see \cite{mackenzie87}).

\begin{definition}
	If $G \rightrightarrows Q$ is a Lie groupoid and $g \in G$ then the left-translation by
	$g \in G$ and the right-translation by $g$ are the
	diffeomorphisms
	\begin{equation*}
		\begin{array}{lll}
		\mathcal{L}_{g}: \alpha^{-1}(\beta(g)) \longrightarrow \alpha^{-1}(\alpha(g)), \quad & g'\longrightarrow \mathcal{L}_{g}(g') = gg',\\
		\mathcal{R}_{g}: \beta^{-1}(\alpha(g)) \longrightarrow \beta^{-1}(\beta(g)), \quad & g' \longrightarrow \mathcal{R}_{g}(g') = g'g.
		\end{array}
	\end{equation*}
\end{definition}
Note that $\mathcal{L}_{g}^{-1} = \mathcal{L}_{g^{-1}}$ and $\mathcal{R}_{g}^{-1} = \mathcal{R}_{g^{-1}}$.

\begin{definition}
	A vector field $\xi\in {\mathfrak X}(G)$ is said to be
	{\sl left-invariant} (resp., right-invariant) if it is
	tangent to the fibers of $\alpha$ (resp., $\beta$) and
	${\xi}(gg') = T_{g'}\mathcal{L}_{g} {\xi}(g')$ (resp.,
	${\xi}(gg') = T_{g}\mathcal{R}_{g'} {\xi}(g))$, for $(g,g') \in
	G_{2}$.
\end{definition}

The infinitesimal version of a Lie groupoid is a Lie algebroid which is defined as follows.

\begin{definition}
	A {\sl Lie algebroid} is a real vector bundle $A\rightarrow Q$ equipped  with a Lie bracket 
	$\lcf \cdot ,\cdot  \rcf$ on its sections $\Gamma(A)$ and a bundle map $\rho: A\rightarrow TQ$ called the {\sl anchor map} such that 
	the homomorphism of $C^{\infty}(Q)$-modules induced by the anchor map, that we also denote by $\rho: \Gamma(A)\rightarrow {\mathfrak X}(Q)$, verifies  
	\[
	\lcf X, f Y\rcf =f\lcf X, Y\rcf +\rho(X)(f) Y,
	\]
	for $X, Y\in \Gamma (A)$ and $f\in C^{\infty}(Q)$.
\end{definition}

With this definition the anchor map $\rho: \Gamma(A)\rightarrow {\mathfrak X}(Q)$
is a Lie algebra homomorphism, where ${\mathfrak X}(Q)$ is endowed with the usual Lie bracket of vector field $\left[\cdot, \cdot\right]$.  

\begin{definition}
	Given a Lie groupoid $G \rightrightarrows Q$, the {\sl associated Lie algebroid} $AG\rightarrow Q$ is given by its fibers $A_{q}G = V_{\epsilon(q)}\alpha = \ker (T_{\epsilon(q)}\alpha)$. There is a bijection between the space $\Gamma (AG)$ and the set of	left-invariant vector fields on $G$. If $X$ is a section of $\tau: AG \to Q$, the corresponding left-invariant  vector field on $G$ will be denoted $\lvec{X}$ (resp., $\rvec{X}$), where
	\begin{equation}\label{linv}
		\lvec{X}(g) = (T_{\epsilon(\beta(g))}\mathcal{L}_{g}) \, X(\beta(g)),
	\end{equation}
for $g \in G$. Using the above facts, one may introduce a bracket $\lcf\cdot , \cdot\rcf$ on the space of sections $\Gamma(AG)$ and a bundle map $\rho: AG \to TQ$, which are defined by
	\begin{equation}\label{LA}
		\lvec{\lcf X, Y\rcf} = [\lvec{X}, \lvec{Y}], \makebox[.3cm]{}
		\rho(X)(q) = (T_{\epsilon(q)}\beta) \, X(q),
	\end{equation}
for $X, Y \in \Gamma(AG)$ and $q \in Q$. 
\end{definition}

Using that $[\cdot, \cdot]$ induces a Lie algebra structure on the space of vector fields on $G$, it is easy to prove that $\lcf \cdot, \cdot \rcf$ also defines a Lie algebra structure on $\Gamma(AG)$. In addition, it follows that
\begin{equation*}
	\lcf X, f Y\rcf = f \lcf X, Y\rcf + \rho(X)(f) Y,
\end{equation*}
for $X, Y \in \Gamma(AG)$ and $f \in C^{\infty}(Q)$.

One can also stablish a bijection between sections $X\in\Gamma(AG)$ and right invariant vector fields $\rvec{X}\in {\mathfrak X}(G)$ defined by 
\begin{equation}\label{rinv}
	\rvec{X}(g) = - (T_{\epsilon(\alpha(g))}\mathcal{R}_{g} \, T_{\epsilon(\alpha(g))}\iota) \, X(\alpha(g)),
\end{equation}
which yields the Lie bracket relation
\begin{equation*}
	\rvec{\lcf X, Y\rcf} =- [\rvec{X}, \rvec{Y}].
\end{equation*}

The following Proposition will be useful for the results in this paper. 

\begin{proposition}\label{cft} (See \cite{Sato}).
	Let $G \rightrightarrows Q$ be a Lie groupoid and $Z\in \mathfrak{X}(G)$ a vector field invariant by the inversion, that is, 
	\begin{equation*}
		T_g\iota \, Z(g) = Z(g^{-1}), \text{ for all } g \in G
	\end{equation*}
	Then, for all $q \in Q$, 
	\begin{equation*}
		Z(\epsilon(q)) \in T_{\epsilon(q)} \epsilon (Q).
	\end{equation*}
\end{proposition}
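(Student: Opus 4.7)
The plan is to specialize the invariance hypothesis to $g = \epsilon(q)$, where it becomes $T_{\epsilon(q)}\iota\bigl(Z(\epsilon(q))\bigr) = Z(\epsilon(q))$, so that $Z(\epsilon(q))$ lies in the $+1$ eigenspace of the tangent involution $T_{\epsilon(q)}\iota: T_{\epsilon(q)}G \to T_{\epsilon(q)}G$ (which is indeed an involution because $\iota\circ\iota = \mathrm{Id}_G$). The claim thus reduces to the purely linear-algebraic assertion that this $+1$ eigenspace coincides with $T_{\epsilon(q)}\epsilon(Q)$. One inclusion is immediate from $\iota \circ \epsilon = \epsilon$, which differentiates to $T_{\epsilon(q)}\iota \circ T_q\epsilon = T_q\epsilon$.

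For the reverse inclusion I would take an arbitrary fixed vector $v \in T_{\epsilon(q)}G$ and pick a smooth curve $\gamma:(-\delta,\delta) \to G$ with $\gamma(0) = \epsilon(q)$ and $\dot\gamma(0) = v$. The groupoid identity $g\cdot\iota(g) = \epsilon(\alpha(g))$, applied along $\gamma(t)$ and differentiated at $t=0$, yields
\[
T_{(\epsilon(q),\epsilon(q))}\nu\bigl(v,\, T_{\epsilon(q)}\iota(v)\bigr) = T_q\epsilon\bigl(T_{\epsilon(q)}\alpha(v)\bigr),
\]
where $(v, T_{\epsilon(q)}\iota(v))$ automatically lies in $T_{(\epsilon(q),\epsilon(q))}G_2$ because $\alpha\circ\iota = \beta$.

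The key technical input is the standard formula for the differential of groupoid multiplication at a unit,
\[
T_{(\epsilon(q),\epsilon(q))}\nu(v_1,v_2) = v_1 + v_2 - T_q\epsilon\bigl(T_{\epsilon(q)}\beta(v_1)\bigr),
\]
valid for any composable pair. This can be established by decomposing $(v_1,v_2)$ as $(T\epsilon(c),T\epsilon(c)) + (w_1, 0) + (0, w_2)$, where $c = T\beta(v_1) = T\alpha(v_2)$, $w_1 = v_1 - T\epsilon(c) \in \ker T\beta$, and $w_2 = v_2 - T\epsilon(c) \in \ker T\alpha$; then $\nu$ restricts to the identity along the diagonal of $\epsilon(Q)$, while the unit axioms $\nu(h,\epsilon(\beta(h))) = h$ and $\nu(\epsilon(\alpha(h)),h) = h$ make $\nu$ act as the respective projection on the other two pieces.

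Substituting $v_1 = v$ and $v_2 = T_{\epsilon(q)}\iota(v) = v$, and noting that the fixed-vector condition forces $T\alpha(v) = T\alpha(T\iota(v)) = T\beta(v)$, the displayed identity collapses to $2v - T_q\epsilon(T\beta(v)) = T_q\epsilon(T\alpha(v))$, whence $v = T_q\epsilon(T_{\epsilon(q)}\alpha(v)) \in T_{\epsilon(q)}\epsilon(Q)$. The only real obstacle is justifying the multiplication-differential formula at units; once that is in hand the proposition follows from this one-line calculation applied to $v = Z(\epsilon(q))$.
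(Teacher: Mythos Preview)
The paper does not actually supply a proof of this proposition; it simply cites \cite{Sato}. Your argument is correct and is the standard one: the hypothesis at a unit says $Z(\epsilon(q))$ lies in the $+1$ eigenspace of the involution $T_{\epsilon(q)}\iota$, and the heart of the matter is the identification of that eigenspace with $T_{\epsilon(q)}\epsilon(Q)$. Your route to this---differentiating $g\cdot\iota(g)=\epsilon(\alpha(g))$ and combining it with the formula $T_{(\epsilon(q),\epsilon(q))}\nu(v_1,v_2)=v_1+v_2-T_q\epsilon\bigl(T_{\epsilon(q)}\beta(v_1)\bigr)$---is exactly the classical argument, and your justification of the multiplication formula via the decomposition $(v_1,v_2)=(T\epsilon(c),T\epsilon(c))+(w_1,0)+(0,w_2)$ with $w_1\in\ker T\beta$, $w_2\in\ker T\alpha$ is correct. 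There is nothing in the present paper to compare against, but your proof would serve as a complete replacement for the citation.
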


\subsection{Poisson groupoids}
In the following, we introduce the notion of Poisson groupoid (see \cite{weipoisson}). 
\begin{definition}
	A {\sl Poisson groupoid} is a Lie groupoid $\Gamma \rightrightarrows Q$, such that
	\begin{enumerate}
		\item $(\Gamma, \left\{\; ,\; \right\})$ is a Poisson manifold;
		\item the graph of $\nu: \Gamma_2 \rightarrow \Gamma$ is a coisotropic submanifold   $\Gamma\times \Gamma\times \Gamma^-$, where $\Gamma^- = (\Gamma, -\left\{\; ,\; \right\})$ has the negative Poisson  structure.
	\end{enumerate}
\end{definition}

Some interesting properties of Poisson groupoids are: 

\begin{theorem} Let $\Gamma$ be a Poisson groupoid
	\begin{enumerate}
		\item The identity section is coisotropic in $\Gamma$. 
		\item The inversion $\iota$ is an anti-Poisson morphism. 
		\item There is a unique Poisson structure on $\Gamma_0$ for which $\alpha$ is a Poisson mapping (and $\beta$ is an anti-Poisson morphism). 
	\end{enumerate}

\end{theorem}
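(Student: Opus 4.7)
The plan is to derive the three statements from the single hypothesis that the graph $\mathrm{Gr}(\nu) \subset \Gamma \times \Gamma \times \Gamma^-$ is coisotropic, using standard operations on coisotropic submanifolds: intersection with a distinguished transversal, pullback along a map that factors through the graph, and reduction by the characteristic foliation.

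For (1), I would exploit the embedding $j: Q \hookrightarrow \mathrm{Gr}(\nu)$ sending $q \mapsto (\epsilon(q), \epsilon(q), \epsilon(q))$, valid since $\epsilon(q)\epsilon(q) = \epsilon(q)$. Given any two functions $\phi, \psi$ on $\Gamma$ vanishing on $\epsilon(Q)$, I would form the extensions $\tilde\phi(g_1,g_2,g_3) = \phi(g_1) - \phi(g_3)$ and $\tilde\psi(g_1,g_2,g_3) = \psi(g_2) - \psi(g_3)$, which both vanish on the image of $j$ and can be shown to vanish along $\mathrm{Gr}(\nu)$ up to terms that are quadratic in the ideal of $j(Q)$. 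Using the explicit product structure on $\Gamma \times \Gamma \times \Gamma^-$, the bracket $\{\tilde\phi, \tilde\psi\}$ restricted to $j(Q)$ collapses into (signed) copies of $\{\phi,\psi\}$ on $\epsilon(Q)$, which the coisotropic hypothesis forces to vanish; this gives $\{\phi,\psi\}|_{\epsilon(Q)} = 0$ and hence that $\epsilon(Q)$ is coisotropic.

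For (2), I would consider the smooth map $\mathcal I: \Gamma \to \mathrm{Gr}(\nu)$ given by $g \mapsto (g, \iota(g), \epsilon(\alpha(g)))$, which is well defined because $g\cdot\iota(g) = \epsilon(\alpha(g))$. For arbitrary $\phi,\psi \in C^\infty(\Gamma)$, the extensions $\tilde\phi(g_1,g_2,g_3) := \phi(g_1 g_2) - \phi(g_3)$ and the analogous $\tilde\psi$ vanish identically on $\mathrm{Gr}(\nu)$, so by coisotropy $\{\tilde\phi, \tilde\psi\}$ also vanishes on $\mathrm{Gr}(\nu)$. Pulling back along $\mathcal I$ and using the product/opposite Poisson structure, together with part (1) to eliminate the contributions from the third slot which lands in the coisotropic $\epsilon(Q)$, I would extract the identity $\iota^*\{\phi,\psi\} = -\{\iota^*\phi, \iota^*\psi\}$, where the minus sign comes from the $\Gamma^-$ factor.

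For (3), by (1) the identity section is coisotropic, so the characteristic distribution $\Lambda^\sharp N^*\epsilon(Q) \subset T\epsilon(Q)$ integrates to a (possibly singular) foliation whose leaf space carries a canonical Poisson structure by coisotropic reduction. I would then identify this distribution with $\ker(T\alpha|_{\epsilon(Q)})$ by a direct computation comparing the conormal of $\epsilon(Q)$ with the natural pairing induced by the multiplicative bivector, which shows that the reduction is precisely $\alpha$; this furnishes the unique Poisson structure on $Q$ for which $\alpha$ is Poisson. That $\beta$ is anti-Poisson then follows from (2) together with $\beta = \alpha\circ\iota$. The main obstacle I anticipate is the bookkeeping in (2) and (3): the sign from $\Gamma^-$, the careful choice of extensions so that brackets on the product manifold isolate the desired contributions, and the appeal to (1) to discard boundary terms must be executed with care to make the characteristic distribution on $\epsilon(Q)$ genuinely match the vertical bundle of $\alpha$.
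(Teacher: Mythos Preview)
The paper does not prove this theorem at all: it is stated as a known structural fact about Poisson groupoids, with an implicit appeal to the original literature (Weinstein, \emph{Coisotropic calculus and Poisson groupoids}). So there is no ``paper's own proof'' to compare against; what you have written is an independent attempt at the standard result.

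Your strategy is the right one in spirit, but part (1) as written does not go through. Coisotropy of $\mathrm{Gr}(\nu)$ lets you conclude that $\{\tilde\phi,\tilde\psi\}$ vanishes on $\mathrm{Gr}(\nu)$ \emph{only} when $\tilde\phi$ and $\tilde\psi$ themselves vanish on all of $\mathrm{Gr}(\nu)$. Your extensions $\tilde\phi(g_1,g_2,g_3)=\phi(g_1)-\phi(g_3)$ and $\tilde\psi(g_1,g_2,g_3)=\psi(g_2)-\psi(g_3)$ vanish on the diagonal image $j(Q)$ but not on the graph (for a generic composable pair $\phi(g_1)\neq\phi(g_1g_2)$), so the coisotropic hypothesis gives you nothing. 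The hedge ``up to terms quadratic in the ideal of $j(Q)$'' does not rescue this: coisotropy is a first-order condition along the whole submanifold, not a pointwise or jet condition at $j(Q)$. You need genuine elements of the vanishing ideal of $\mathrm{Gr}(\nu)$ whose bracket, restricted to the graph and then pulled back via $j$, isolates $\{\phi,\psi\}|_{\epsilon(Q)}$.

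For (2) there is a related domain issue: $\tilde\phi(g_1,g_2,g_3)=\phi(g_1g_2)-\phi(g_3)$ is only defined on $\Gamma_2\times\Gamma$, so you must first extend it smoothly to $\Gamma\times\Gamma\times\Gamma^-$; any two extensions differ by something in the vanishing ideal of $\mathrm{Gr}(\nu)$, so by coisotropy the bracket on the graph is independent of the extension, but this step must be made explicit. Even then, the product-bracket of two such extensions involves mixed terms in the first two factors coming from $\phi(g_1g_2)$, and you have not indicated how these organise into $\iota^*\{\phi,\psi\}+\{\iota^*\phi,\iota^*\psi\}$; in the standard arguments this is where the multiplicativity of the bivector is used explicitly. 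Your outline for (3) via coisotropic reduction is reasonable, but note that it relies on (1), so the gap there propagates.
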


In our work, we are interested on a concrete  example  of Poisson groupoid (see, for instance, \cite{MR1612160}). Let $G$ be a Lie group and $\mathfrak{g}$ its Lie algebra. The manifold $\mathfrak{g}^*\times G\times \mathfrak{g}^*$ has a natural structure of Lie groupoid  where the structural functions are
\begin{equation}\label{groupoid_struct}
	\begin{alignedat}{4}
		\alpha(\lambda_1, U, \lambda_2) &= \lambda_1,  &\iota (\lambda_1, U, \lambda_2) &= (\lambda_2, U, \lambda_1),\\
		\beta(\lambda_1, U, \lambda_2) &= \lambda_2, \quad   &\epsilon (\lambda) &= (\lambda, e, \lambda),\\
	&\nu((\lambda_1, U, \lambda_2), (\lambda_2, V, \lambda_3)) &= (\lambda_1, UV, \lambda_3), 
	\end{alignedat}
\end{equation}
and the Poisson bracket is given by: 
\begin{equation}\label{poisson_struct}
	\begin{alignedat}{4}
		\left\lbrace\Xi_1, \Xi_2\right\rbrace(\lambda_1, U, \lambda_2) &= 0, \quad & \left\lbrace\Xi_1, F \right\rbrace &= -\overrightarrow{\xi} F,\\
		\left\lbrace\Xi_1, \Xi_1'\right\rbrace(\lambda_1, U, \lambda_2) &= \left\langle \lambda_1, [\xi, \xi']\right\rangle, \quad &\left\lbrace\Xi_2, F\right\rbrace &= - \overleftarrow{\xi} F.\\
		\left\lbrace\Xi_2, \Xi_2'\right\rbrace(\lambda_1, U, \lambda_2) &= -\left\langle \lambda_2, [\xi, \xi']\right\rangle, \quad & & \vphantom{\overrightarrow{\xi}}
	\end{alignedat}
\end{equation}
where $\xi, \xi' \in \mathfrak{g}$ induce the functions given by
\begin{alignat*}{4}
	\Xi_1(\lambda_1, U, \lambda_2) &= \langle \lambda_1, \xi\rangle, \quad & \Xi_1'(\lambda_1, U, \lambda_2) &= \langle \lambda_1, \xi'\rangle,\\
	\Xi_2(\lambda_1, U, \lambda_2) &= \langle \lambda_2, \xi\rangle, \quad & \Xi_2'(\lambda_1, U, \lambda_2) &= \langle \lambda_2, \xi'\rangle,
\end{alignat*}
and $F: \mathfrak{g}^*\times G\times \mathfrak{g}^* \to \mathbb{R}$ is the pull-back of a function on the Lie group $G$. The Lie groupoid $\mathfrak{g}^*\times G\times \mathfrak{g}^*$ equipped with this bracket is a Poisson groupoid, and observe that the linear Poisson bracket is completely determined by these functions. In some ocassions we will identify the function $\Xi_i \equiv \lambda_i$, $i = 1,2$ when there is no possible confusion.

$\Gamma = \mathfrak{g}^*\times G\times \mathfrak{g}^*$ equipped with this bracket is a Poisson groupoid (see \cite{MR1612160}). 


The following proposition will be useful. 

\begin{proposition}\label{prop-poi}
	Let $\Gamma \rightrightarrows Q$ be a Poisson groupoid with Poisson bracket $\{\; ,\; \}$ and $E: \Gamma\rightarrow \mathbb{R}$ a function such that $E \circ \iota = - E$. Then, the corresponding Hamiltonian vector field $X_E$ defined
	\begin{equation*}
		{X_E}(F) = \left\{ F, E \right\},
	\end{equation*}
	verifies that $X_E(\epsilon(q))\in T_{\epsilon(q)}\epsilon (Q)$ for all $q\in Q$.
\end{proposition}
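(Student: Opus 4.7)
The plan is to reduce this to an application of Proposition \ref{cft} by showing that the Hamiltonian vector field $X_E$ is invariant under the inversion $\iota$, i.e.\ $T_g \iota \cdot X_E(g) = X_E(g^{-1})$ for all $g \in \Gamma$. Once this is established, Proposition \ref{cft} immediately gives $X_E(\epsilon(q)) \in T_{\epsilon(q)}\epsilon(Q)$.

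To verify the $\iota$-invariance, I would use the theorem stated just before the Poisson groupoid example, which asserts that the inversion of a Poisson groupoid is an anti-Poisson morphism, i.e.\ $\iota^* \{F, G\} = -\{\iota^* F, \iota^* G\}$ for all $F, G \in C^\infty(\Gamma)$. For an arbitrary test function $F \in C^\infty(\Gamma)$, I compute
\begin{equation*}
	(T_g \iota \cdot X_E(g))(F) = X_E(g)(F \circ \iota) = \{F \circ \iota, E\}(g).
\end{equation*}
Since $\iota \circ \iota = \mathrm{id}_\Gamma$ and $\iota^* E = -E$, we may rewrite $E = -\iota^* E$ and apply the anti-Poisson property:
\begin{equation*}
	\{\iota^* F, E\} = -\{\iota^* F, \iota^* E\} = \iota^* \{F, E\}.
\end{equation*}
Evaluating at $g$ yields $\{F \circ \iota, E\}(g) = \{F, E\}(g^{-1}) = X_E(g^{-1})(F)$. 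Since $F$ was arbitrary, $T_g \iota \cdot X_E(g) = X_E(g^{-1})$, which is exactly the inversion-invariance hypothesis of Proposition \ref{cft}.

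Applying Proposition \ref{cft} to $Z = X_E$ then gives $X_E(\epsilon(q)) \in T_{\epsilon(q)}\epsilon(Q)$ for every $q \in Q$, as desired. The only delicate point is keeping track of the signs when exploiting the anti-Poisson character of $\iota$ together with the antisymmetry hypothesis $E \circ \iota = -E$: the two sign flips conspire to give the genuine invariance $\iota^* X_E = X_E$ rather than an anti-invariance, and this is the step I expect to be the main (though still brief) obstacle.
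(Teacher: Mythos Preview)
Your proof is correct and follows essentially the same approach as the paper: use the anti-Poisson property of $\iota$ together with $E\circ\iota=-E$ to deduce that $X_E$ is $\iota$-invariant, then invoke Proposition~\ref{cft}. The only difference is that you spell out the pushforward computation on test functions, whereas the paper phrases it more compactly as $T\iota(X_E)=-X_{E\circ\iota}=X_E$.
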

\begin{proof}
	We use that in a Poisson groupoid the inversion is an anti-Poisson morphism
	\begin{equation*}
		\iota^*\left\{ F, \tilde{F} \right\} = - \left\{ \iota^*F, \iota^*\tilde{F} \right\},
	\end{equation*}
for $F, \tilde{F}: \Gamma\rightarrow {\mathbb R}$, 
	that in particular implies that
	\begin{equation*}
		T\iota(X_{E}) = - X_{E \circ \iota}
	\end{equation*}
	but since $E = - E \circ \iota$ then $T\iota(X_E) = X_E$.  
	Now applying Proposition \ref{cft} we deduce that $X_E (\epsilon (q))\in T_{\epsilon(q)}\epsilon(Q)$. 
\end{proof}


\subsection{Free Hamiltonian description of forced Lie-Poisson equations}
Given a Lagrangian $\boldsymbol{l}: \mathfrak{g} \times G \times \mathfrak{g} \to \mathbb{R}$, one may immediately define the usual Legendre transformation $\mathcal{F}\boldsymbol{l}(\eta, g, \psi) = \left(\partial \boldsymbol{l}/ \partial {\eta}, U, \partial \boldsymbol{l}/ \partial {\psi}\right)$ to obtain a Hamiltonian description, but in order to maintain the natural Poisson groupoid structure in $\mathfrak{g}^* \times G \times \mathfrak{g}^*$ given by eqs. \eqref{groupoid_struct} and \eqref{poisson_struct} it is convenient to define a modified Legendre transformation
\begin{equation*}
	\begin{array}{rrcl}
	\mathcal{F}\boldsymbol{l}^{\times}:& \mathfrak{g} \times G \times \mathfrak{g} & \to & \mathfrak{g}^* \times G \times \mathfrak{g}^*\\
	& (\eta, U, \psi) & \mapsto & \left(\lambda = -\frac{\partial \boldsymbol{l}}{\partial \eta}, U, \mu = \frac{\partial \boldsymbol{l}}{\partial \psi}\right).
	\end{array}
\end{equation*}
together with a modified interior product $\left\langle (\lambda, U, \mu), (\eta, U, \psi)\right\rangle_{\times} = \left\langle \mu, \psi\right\rangle - \left\langle \lambda, \eta\right\rangle$. One may quickly check that these definitions ensure that
\begin{equation*}
\left\langle \mathcal{F}\boldsymbol{l}^{\times}(\eta, U, \psi), (\eta, U, \psi)\right\rangle_{\times} = \left\langle \mathcal{F}\boldsymbol{l}(\eta, U, \psi), (\eta, U, \psi)\right\rangle.
\end{equation*}

If the modified Legendre transformation is a local diffeomorphism then we may implicitly define the associated Hamiltonian by
\begin{align*}
\left(\boldsymbol{h} \circ \mathcal{F}\boldsymbol{l}^{\times}\right)(\eta, U, \psi) &= E_{\boldsymbol{l}} (\eta, U, \psi)\\
&= \left\langle \mathcal{F}\boldsymbol{l}^{\times}(\eta, U, \psi), (\eta, U, \psi)\right\rangle_{\times} - \boldsymbol{l}(\eta, U, \psi),
\end{align*}
whose equations of motion are
\begin{align*}
\dot{\lambda} &= - \mathrm{ad}^*_{\partial \boldsymbol{h}/\partial \lambda} \lambda - \mathcal{R}^*_{U} \frac{\partial \boldsymbol{h}}{\partial U}\\
\dot{\mu} &= \mathrm{ad}^*_{\partial \boldsymbol{h}/\partial \mu} \mu - \mathcal{L}^*_{U} \frac{\partial \boldsymbol{h}}{\partial U}\\
\dot{U} &= U \left(\frac{\partial \boldsymbol{h}}{\partial \mu} + \mathrm{Ad}_{U^{-1}} \frac{\partial \boldsymbol{h}}{\partial \lambda}\right)
\end{align*}

These equations are given  in Poisson form, $\dot{F}=\{F, \boldsymbol{h}\}$. Here,    the  Poisson bracket $\left\{\;,\;\right\}$ is defined by 
\begin{align*}
\left\lbrace A, B\right\rbrace (\lambda, U, \mu) &= - \left\langle \mu, \left[ \frac{\partial A}{\partial \mu}, \frac{\partial B}{\partial \mu}\right]\right\rangle + \left\langle \lambda, \left[ \frac{\partial A}{\partial \lambda}, \frac{\partial B}{\partial \lambda}\right]\right\rangle\\
&+ \left\langle \mathcal{L}^*_{U} \frac{\partial A}{\partial U}, \frac{\partial B}{\partial \mu} + \mathrm{Ad}_{U^{-1}} \frac{\partial B}{\partial \lambda} \right\rangle\\
&- \left\langle \mathcal{L}^*_{U} \frac{\partial B}{\partial U}, \frac{\partial A}{\partial \mu} + \mathrm{Ad}_{U^{-1}} \frac{\partial A}{\partial \lambda} \right\rangle,
\end{align*}
where $A,B: \mathfrak{g}^* \times G \times \mathfrak{g}^* \to \mathbb{R}$. This bracket   is exactly the linear Poisson structure defined on \eqref{poisson_struct}. Thus $\mathfrak{g}^*\times G\times \mathfrak{g}^*$ equipped with the Lie groupoid structure \eqref{groupoid_struct} together with this Poisson structure (and associated Poisson bi-vector $\Pi$) becomes a {\bf Poisson groupoid}. 

Let us consider first a particular case of Hamiltonians on this groupoid.
\begin{lemma}\label{aqw}
Let $h: \mathfrak{g}^* \rightarrow \mathbb{R}$ be a Hamiltonian function. Consider the Hamiltonian 
$\boldsymbol{h}: \mathfrak{g}^* \times G \times \mathfrak{g}^* \rightarrow \mathbb{R}$ defined by
\begin{equation*}
	\boldsymbol{h}(\lambda, U, \mu) = h(\mu) - h(\lambda)
\end{equation*}
then 
\begin{enumerate}
	\item $\sharp^{\Pi}({d\boldsymbol{h}})= X_{\boldsymbol{h}}$ is tangent to $\epsilon(\mathfrak{g}^*)$;
	\item $\left.{X_{\boldsymbol{h}}}\right\vert_{\epsilon(\mathfrak{g}^*)} = \epsilon_*(X_{h})$.
\end{enumerate}
\end{lemma}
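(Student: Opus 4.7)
The plan is to use Proposition \ref{prop-poi} for item (1) and then identify the restricted vector field by an explicit coordinate computation for item (2).

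For (1), first observe that $\boldsymbol{h}$ is anti-invariant under the inversion of the Poisson groupoid. Indeed, using the formula $\iota(\lambda, U, \mu) = (\mu, U^{-1}, \lambda)$ from \eqref{groupoid_struct},
\begin{equation*}
(\boldsymbol{h} \circ \iota)(\lambda, U, \mu) = \boldsymbol{h}(\mu, U^{-1}, \lambda) = h(\lambda) - h(\mu) = - \boldsymbol{h}(\lambda, U, \mu),
\end{equation*}
so $\boldsymbol{h} \circ \iota = - \boldsymbol{h}$. Since $\mathfrak{g}^* \times G \times \mathfrak{g}^*$ endowed with the Poisson structure \eqref{poisson_struct} is a Poisson groupoid whose identity section is $\epsilon(\lambda) = (\lambda, e, \lambda)$, Proposition \ref{prop-poi} applies directly and yields $X_{\boldsymbol{h}}(\epsilon(\lambda)) \in T_{\epsilon(\lambda)} \epsilon(\mathfrak{g}^*)$ for every $\lambda \in \mathfrak{g}^*$.

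For (2), I will simply read off the components of $X_{\boldsymbol{h}}$ from the Hamiltonian equations given just before the lemma and evaluate them on $\epsilon(\mathfrak{g}^*)$. Since $\boldsymbol{h}$ does not depend on $U$, the term $\partial \boldsymbol{h}/\partial U$ vanishes identically. With $\partial \boldsymbol{h}/\partial \lambda = -h'(\lambda)$ and $\partial \boldsymbol{h}/\partial \mu = h'(\mu)$, the equations of motion become
\begin{equation*}
\dot{\lambda} = \mathrm{ad}^*_{h'(\lambda)} \lambda, \qquad \dot{\mu} = \mathrm{ad}^*_{h'(\mu)} \mu, \qquad \dot{U} = U\bigl(h'(\mu) - \mathrm{Ad}_{U^{-1}} h'(\lambda)\bigr).
\end{equation*}
Restricting to $(\lambda, U, \mu) = (\lambda, e, \lambda)$, the $U$-component gives $\dot{U} = h'(\lambda) - h'(\lambda) = 0$, confirming tangency, while the $\lambda$ and $\mu$ components both reduce to the Lie-Poisson equation \eqref{lpe} for $h$. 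Therefore, under the identification of $\epsilon(\mathfrak{g}^*)$ with $\mathfrak{g}^*$ via $\epsilon$, we obtain $\left.X_{\boldsymbol{h}}\right\vert_{\epsilon(\mathfrak{g}^*)} = \epsilon_*(X_h)$, as required.

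The only subtlety I anticipate is bookkeeping of signs coming from the modified interior product and the sign convention for $\lambda$ in $\mathcal{F}\boldsymbol{l}^{\times}$; these are however consistent with the Poisson bracket \eqref{poisson_struct} and the Hamiltonian equations already displayed, so the computation is essentially mechanical once Proposition \ref{prop-poi} has handled the geometric content.
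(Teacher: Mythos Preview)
Your proof is correct and follows essentially the same approach as the paper: part (1) via the anti-invariance $\boldsymbol{h}\circ\iota=-\boldsymbol{h}$ and Proposition~\ref{prop-poi}, and part (2) by computing the components of $X_{\boldsymbol{h}}$ and restricting to the identity section. The only cosmetic difference is that the paper recomputes the brackets $\{\Xi_i,\boldsymbol{h}\}$ and $\{F,\boldsymbol{h}\}$ from \eqref{poisson_struct} directly, whereas you read off the same information from the Hamiltonian equations of motion already displayed; the content is identical.
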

\begin{proof}
	For the proof of the first part, observe that 
	\begin{equation*}
		(\boldsymbol{h} \circ \iota)(\lambda, U, \mu) = \boldsymbol{h}(\mu, U^{-1}, \lambda) = h(\lambda) - h(\mu) = - \boldsymbol{h}(\lambda, U, \mu)	
	\end{equation*}
	and apply Proposition \ref{prop-poi}.
	
	For the second part, it is easy to check using expressions (\ref{poisson_struct}) that 
	\begin{align*}
		\left\{\eta, \boldsymbol{h}\right\}(\lambda, U, \mu) &= \left\langle \lambda, \left[ \eta, \frac{\partial \boldsymbol{h}}{\partial \lambda}\right]\right\rangle = - \left\langle \lambda, \left[ \eta, h'(\lambda)\right]\right\rangle\\
		\left\{\psi, \boldsymbol{h}\right\}(\lambda, U, \mu) &= - \left\langle \mu, \left[ \eta, \frac{\partial \boldsymbol{h}}{\partial \mu}\right]\right\rangle = - \left\langle \mu, \left[ \eta, h'(\mu)\right]\right\rangle\\
		\left\{F, \boldsymbol{h}\right\}(\lambda, U, \mu) &= \left\langle {\mathcal L}^*_{U} \frac{\partial F}{\partial U}, \frac{\partial \boldsymbol{h}}{\partial \mu} + \mathrm{Ad}_{U^{-1}} \frac{\partial \boldsymbol{h}}{\partial \lambda}\right\rangle\\
		&= \left\langle {\mathcal L}^*_{U} \frac{\partial F}{\partial U}, h'(\mu) - \mathrm{Ad}_{U^{-1}} h'(\lambda)\right\rangle
	\end{align*}
 where $F: \mathfrak{g}^*\times G\times \mathfrak{g}^* \to \mathbb{R}$ is the pull-back of a function on the Lie group $G$.

Therefore,  if $F: \mathfrak{g}^*\times G\times \mathfrak{g}^* \to \mathbb{R}$ is the pull-back of a function on the Lie group $G$ we have that 
\begin{align*}
	(X_{\boldsymbol{h}})_{(\mu, e, \mu)}(\eta) &= -\left\langle \mu, \left[ \eta, h'(\mu)\right] \right\rangle,\\
	(X_{\boldsymbol{h}})_{(\mu, e, \mu)}(\psi) &= -\left\langle \mu, \left[ \eta, h'(\mu)\right] \right\rangle,\\
	(X_{\boldsymbol{h}})_{(\mu, e, \mu)}(F) &= 0,
\end{align*}
	which is exactly the same as $\epsilon_*(X_h)$ since
\begin{equation*}
\epsilon_*(X_h)(\mu, e, \mu) =	
\left(\mathrm{ad}^*_{h'(\mu)} \mu, 0, \mathrm{ad}^*_{h'(\mu)}\mu\right) \in \mathfrak{g}^*\times \mathfrak{g}\times \mathfrak{g}^* \equiv 
T_{(\mu, e, \mu)} (\mathfrak{g}^* \times G \times \mathfrak{g}^*).\qedhere
\end{equation*}

\end{proof}

Our aim is to generalize lemma \ref{aqw} for the case of Lie-Poisson systems with forcing, that is, we have Hamiltonian function $h: \mathfrak{g}^* \rightarrow \mathbb{R}$ and the force expressed by $\tilde{f}: \mathfrak{g}^*\rightarrow  \mathfrak{g}^*$, both determining the Lie-Poisson equations with forcing
\begin{equation*}
	\dot{\mu} = \mathrm{ad}_{h'(\mu)}^*\mu + \tilde{f}(\mu),
\end{equation*}
which define the vector field 
\begin{equation}
	Y_{h,\tilde{f}}(\mu) = X_h(\mu) + \sharp^{\Pi}(\tilde{f})(\mu) \in T_{\mu}\mathfrak{g}^* \equiv \mathfrak{g}^*.
\end{equation}

Similar to the Lagrangian case let us define a function $\boldsymbol{k}_{\tilde{f}}: \mathfrak{g}^* \times G \times \mathfrak{g}^* \rightarrow \mathbb{R}$ by
\begin{equation*}
	\boldsymbol{k}_{\tilde{f}}(\lambda, U, \mu) = \frac{1}{2} \left( \left\langle \tilde{f}(\mu), \exp^{-1} U^{-1}\right\rangle - \left\langle \tilde{f}(\lambda), \exp^{-1} U\right\rangle\right)
\end{equation*}
where $U$ is assumed to be in a neighborhood $\mathcal{U}_e$ of the identity element $e\in G$. With this we can state the following theorem.

\begin{theorem}\label{main}
Let $h: \mathfrak{g}^*\rightarrow \mathbb{R}$ be a Hamiltonian function and $\tilde{f}: \mathfrak{g}^*\rightarrow \mathfrak{g}$ representing an external force . Consider the Hamiltonian 
$\boldsymbol{h}_{\tilde{f}}: \mathfrak{g}^* \times G\times \mathfrak{g}^* \rightarrow \mathbb{R}$ defined by
\begin{equation*}
\boldsymbol{h}_{\tilde{f}}(\lambda, U, \mu) = h(\mu) - h(\lambda) + \boldsymbol{k}_{\tilde{f}}(\lambda, U, \mu)
\end{equation*}
then 
\begin{enumerate}
	\item $\sharp^{\Pi}({d\boldsymbol{h}_{\tilde{f}}})= X_{\boldsymbol{h}_{\tilde{f}}}$ is tangent to $\epsilon(\mathfrak{g}^*)$;
	\item $\left.{X_{\boldsymbol{h}_{\tilde{f}}}}\right\vert_{\epsilon(\mathfrak{g}^*)} = \epsilon_*\left(Y_{h,\tilde{f}}\right)$.
\end{enumerate}

\end{theorem}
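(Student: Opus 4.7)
The plan is to mirror the structure of Lemma \ref{aqw}, isolating the new contribution coming from the generalized potential $\boldsymbol{k}_{\tilde{f}}$. Part 1 will be obtained by a direct application of Proposition \ref{prop-poi} after checking the appropriate $\iota$-antisymmetry, while part 2 will follow from a short computation of the three components of $X_{\boldsymbol{h}_{\tilde{f}}}$ on the identity section using the explicit Hamilton equations displayed just before Lemma \ref{aqw}.

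For part 1, I would verify that $\boldsymbol{h}_{\tilde{f}} \circ \iota = -\boldsymbol{h}_{\tilde{f}}$. Since $\iota(\lambda, U, \mu) = (\mu, U^{-1}, \lambda)$, the free part $h(\mu) - h(\lambda)$ changes sign as in Lemma \ref{aqw}, and substitution in the definition of $\boldsymbol{k}_{\tilde{f}}$ gives
\begin{equation*}
\boldsymbol{k}_{\tilde{f}}(\mu, U^{-1}, \lambda) = \frac{1}{2}\left(\langle \tilde{f}(\lambda), \exp^{-1} U\rangle - \langle \tilde{f}(\mu), \exp^{-1} U^{-1}\rangle\right) = -\boldsymbol{k}_{\tilde{f}}(\lambda, U, \mu),
\end{equation*}
so $\boldsymbol{h}_{\tilde{f}}$ is $\iota$-antisymmetric and Proposition \ref{prop-poi} immediately yields tangency of $X_{\boldsymbol{h}_{\tilde{f}}}$ to $\epsilon(\mathfrak{g}^*)$.

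For part 2, I would split $\boldsymbol{h}_{\tilde{f}} = (h(\mu) - h(\lambda)) + \boldsymbol{k}_{\tilde{f}}$; the contribution of the first summand is $\epsilon_*(X_h)$ by Lemma \ref{aqw}. For the second summand I would use $\exp^{-1}(e) = 0$, which makes $\partial \boldsymbol{k}_{\tilde{f}}/\partial \lambda$ and $\partial \boldsymbol{k}_{\tilde{f}}/\partial \mu$ vanish identically at $U = e$, so the only surviving term is $\partial \boldsymbol{k}_{\tilde{f}}/\partial U$. Using $T_e \exp^{-1} = \mathrm{Id}$, the expansion along $U = \exp(t\xi)$ gives $\boldsymbol{k}_{\tilde{f}}(\lambda, \exp(t\xi), \mu) = -\tfrac{t}{2}\langle \tilde{f}(\lambda) + \tilde{f}(\mu), \xi\rangle + O(t^2)$, so on the identity section
\begin{equation*}
\mathcal{L}^*_e \frac{\partial \boldsymbol{k}_{\tilde{f}}}{\partial U}\Big|_{(\mu, e, \mu)} = \mathcal{R}^*_e \frac{\partial \boldsymbol{k}_{\tilde{f}}}{\partial U}\Big|_{(\mu, e, \mu)} = -\tilde{f}(\mu).
\end{equation*}
Substituting into the Hamilton equations produces $\dot{\lambda} = \dot{\mu} = \mathrm{ad}^*_{h'(\mu)}\mu + \tilde{f}(\mu)$ and $\dot{U} = h'(\mu) - \mathrm{Ad}_e h'(\mu) = 0$, which is precisely $\epsilon_*(Y_{h,\tilde{f}})(\mu, e, \mu)$.

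The main obstacle is bookkeeping of signs: the relative minus sign between the two summands of $\boldsymbol{k}_{\tilde{f}}$ and the factor $1/2$ must combine with the opposing signs in front of $\mathcal{R}^*_U(\partial \boldsymbol{h}/\partial U)$ and $\mathcal{L}^*_U(\partial \boldsymbol{h}/\partial U)$ so that both $\dot{\lambda}$ and $\dot{\mu}$ pick up exactly $+\tilde{f}(\mu)$ on the diagonal. This is the same balance responsible for the $\iota$-antisymmetry of part 1, so the two parts of the proof reinforce each other: the symmetric choice in $\boldsymbol{k}_{\tilde{f}}$ is precisely what forces the vector field to both be tangent to $\epsilon(\mathfrak{g}^*)$ and restrict to the forced Lie-Poisson dynamics there.
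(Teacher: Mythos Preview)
Your proposal is correct and follows essentially the same approach as the paper: both parts hinge on the $\iota$-antisymmetry of $\boldsymbol{h}_{\tilde{f}}$ (invoking Proposition \ref{prop-poi}) and on the facts $\exp^{-1}(e)=0$, $T_e\exp^{-1}=\mathrm{Id}$ to isolate the single surviving $U$-derivative of $\boldsymbol{k}_{\tilde{f}}$ on the identity section. The only cosmetic difference is that the paper evaluates the Poisson brackets $\{\eta,\boldsymbol{h}_{\tilde{f}}\}$, $\{\psi,\boldsymbol{h}_{\tilde{f}}\}$, $\{F,\boldsymbol{h}_{\tilde{f}}\}$ directly, whereas you substitute into the equivalent coordinate Hamilton equations and split off the free part via Lemma \ref{aqw}; the underlying computation and sign bookkeeping are identical.
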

\begin{proof}

The proof follows the same steps as those of lemma \ref{aqw}. For the first part observe that 
\begin{align*}
(\boldsymbol{k}_{\tilde{f}} \circ \iota)(\lambda, U, \mu) &= \boldsymbol{k}_{\tilde{f}}(\mu, U^{-1}, \lambda)\\
&= \frac{1}{2} \left( \left\langle \tilde{f}(\lambda), \exp^{-1} U \right\rangle - \left\langle \tilde{f}(\mu), \exp^{-1} U^{-1}\right\rangle \right)\\
&= -\boldsymbol{k}_{\tilde{f}}(\lambda, U, \mu).
\end{align*}
For the second part, if one takes into account that $\exp^{-1}(e) = 0$ and $T_e \hbox{exp}^{-1} = \mathrm{Id}$ then it is not difficult to see that
\begin{align*}
	\left({X_{\boldsymbol{h}_{\tilde{f}}}}\right)_{(\mu, e, \mu)}(\eta) &= -\left\langle \mu, \left[ \eta, h'(\mu)\right] \right\rangle + \left\langle \tilde{f}(\mu), \eta \right\rangle,\\
	\left({X_{\boldsymbol{h}_{\tilde{f}}}}\right)_{(\mu, e, \mu)}(\psi) &= -\left\langle \mu, \left[ \eta, h'(\mu)\right] \right\rangle + \left\langle \tilde{f}(\mu), \eta \right\rangle,\\
	\left({X_{\boldsymbol{h}_{\tilde{f}}}}\right)_{(\mu, e, \mu)}(F) &= 0.
\end{align*}
which coincides with $\epsilon_*(Y_{h,\tilde{f}})$
\end{proof}

\begin{proposition}\label{inverseLegendre}
Let $(h, \tilde{f})$ be a regular Hamiltonian system with forcing. Then its associated Hamiltonian $\boldsymbol{h}_{\tilde{f}}$ is regular in a neighborhood of $\epsilon(\mathfrak{g}^*)$
\end{proposition}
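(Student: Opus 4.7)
The plan is to show that the modified Legendre transformation $\mathcal{F}\boldsymbol{h}_{\tilde{f}}^{\times}: \mathfrak{g}^* \times G \times \mathfrak{g}^* \to \mathfrak{g} \times G \times \mathfrak{g}$ associated with $\boldsymbol{h}_{\tilde{f}}$ is a local diffeomorphism on an open neighborhood of $\epsilon(\mathfrak{g}^*)$, since this is precisely the content of the regularity claim. By the construction that mirrors $\mathcal{F}\boldsymbol{l}^{\times}$ on the Lagrangian side,
\begin{equation*}
\mathcal{F}\boldsymbol{h}_{\tilde{f}}^{\times}(\lambda, U, \mu) = \left( -\frac{\partial \boldsymbol{h}_{\tilde{f}}}{\partial \lambda},\, U,\, \frac{\partial \boldsymbol{h}_{\tilde{f}}}{\partial \mu}\right) = \left( h'(\lambda) - \frac{\partial \boldsymbol{k}_{\tilde{f}}}{\partial \lambda},\, U,\, h'(\mu) + \frac{\partial \boldsymbol{k}_{\tilde{f}}}{\partial \mu}\right),
\end{equation*}
so by the inverse function theorem it suffices to verify that the Jacobian of this map is non-singular at every point of $\epsilon(\mathfrak{g}^*)$ and then invoke continuity to extend the conclusion to an open neighborhood.

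The decisive observation is that $\boldsymbol{k}_{\tilde{f}}(\lambda, e, \mu) \equiv 0$ as a function of $(\lambda, \mu)$, since $\exp^{-1}(e) = 0$ makes both pairings in its definition vanish. Consequently any partial derivative of $\boldsymbol{k}_{\tilde{f}}$ taken purely in the $\lambda$ and $\mu$ directions also vanishes on the identity section $U = e$. In particular, on $\epsilon(\mathfrak{g}^*)$ the pure Hessian blocks $\partial^2 \boldsymbol{k}_{\tilde{f}}/\partial \lambda^2$, $\partial^2 \boldsymbol{k}_{\tilde{f}}/\partial \mu^2$ and the mixed block $\partial^2 \boldsymbol{k}_{\tilde{f}}/\partial \lambda \partial \mu$ all vanish; only mixed second derivatives that involve a $U$ factor survive.

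Combining these facts with the observation that the $U$-component of $\mathcal{F}\boldsymbol{h}_{\tilde{f}}^{\times}$ is the identity, the Jacobian at a point $(\mu, e, \mu) \in \epsilon(\mathfrak{g}^*)$, organized in blocks according to the splittings $\lambda \oplus U \oplus \mu$ on the source and $\eta \oplus U \oplus \psi$ on the target, takes the form
\begin{equation*}
\begin{pmatrix} h''(\lambda) & * & 0 \\ 0 & I & 0 \\ 0 & * & h''(\mu) \end{pmatrix},
\end{equation*}
whose determinant expands to $\det h''(\lambda) \cdot \det h''(\mu)$. The regularity of $(h, \tilde{f})$ means that $\mathcal{F}h = h'$ is a local diffeomorphism, equivalently that $h''(\mu)$ is invertible, so both diagonal blocks are non-singular and the whole Jacobian is invertible on $\epsilon(\mathfrak{g}^*)$. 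By continuity of the Jacobian this persists on an open neighborhood of the identity section, and the inverse function theorem closes the argument. The main bookkeeping step is establishing the vanishing of the pure second derivatives of $\boldsymbol{k}_{\tilde{f}}$ on $U = e$; once this is in hand, the block structure makes the Jacobian computation immediate.
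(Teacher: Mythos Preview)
Your proof is correct and follows essentially the same approach as the paper: compute the Jacobian of $\mathcal{F}\boldsymbol{h}_{\tilde{f}}^{\times}$ on the identity section, observe that the $\boldsymbol{k}_{\tilde{f}}$ contributions vanish there (because $\exp^{-1}(e)=0$), and conclude regularity from the invertibility of $h''(\mu)$. The paper's version is terser---it records only the $2\times 2$ block $\mathrm{diag}(h''(\mu),h''(\mu))$ without the $U$-row/column or the off-diagonal $*$ entries---whereas you carry the full $3\times 3$ block matrix and explicitly justify why the pure $\lambda,\mu$ second derivatives of $\boldsymbol{k}_{\tilde{f}}$ vanish; this extra care is harmless and arguably clearer.
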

\begin{proof}
	
Observe that the transformation
\begin{equation*}
\begin{array}{rrcl}
\mathcal{F} \boldsymbol{h}_{\tilde{f}}^{\times}: & \mathfrak{g}^* \times G \times \mathfrak{g}^* &\rightarrow & \mathfrak{g} \times G \times \mathfrak{g}\\
&\left(\lambda, U, \mu \right) & \mapsto & \left(-\frac{\partial \boldsymbol{h}_{\tilde{f}}}{\partial \lambda}, U, \frac{\partial \boldsymbol{h}_{\tilde{f}}}{\partial \mu}\right)
\end{array}
\end{equation*}
reduces to $\mathcal{F} \boldsymbol{h}_{\tilde{f}}^{\times}(\mu, e, \mu) = \left(h'(\mu), e, h'(\mu)\right)$ at the identity set. Then it must be a local diffeomorphism in a neighborhood of this set since its Hessian matrix
\begin{equation*}
\left(\begin{array}{cc}
h''(\mu) & 0\\
0 & h''(\mu)
\end{array}\right)
\end{equation*}
is regular on $\epsilon(\mathfrak{g}^*)$ and therefore regular on a neighbourhood of it.
\end{proof}


\begin{theorem}
Let $(l, f)$ and $(h, \tilde{f})$ be a regular forced Lagrangian system and its associated forced Hamiltonian system, and denote by $\tilde{\boldsymbol{h}}_{\tilde{f}} = E_{\boldsymbol{l}_{f}} \circ \left(\mathcal{F}\boldsymbol{l}_{f}^{\times}\right)^{-1}$ and $\boldsymbol{h}_{\tilde{f}}$ the corresponding generalized Hamiltonians. Then their respective Hamiltonian vector fields $X_{\tilde{\boldsymbol{h}}_{\tilde{f}}}$ and $X_{\boldsymbol{h}_{\tilde{f}}}$ satisfy that $\left.X_{\tilde{\boldsymbol{h}}_{\tilde{f}}}\right\vert_{\epsilon(\mathfrak{g}^{*})} = \left.X_{\boldsymbol{h}_{\tilde{f}}}\right\vert_{\epsilon(\mathfrak{g}^{*})}$.
\end{theorem}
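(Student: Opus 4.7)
The plan is to show that both Hamiltonian vector fields coincide with the lifted forced Lie--Poisson vector field $\epsilon_{*}(Y_{h,\tilde{f}})$ on the identity section $\epsilon(\mathfrak{g}^{*})$. Theorem \ref{main} already gives this for $X_{\boldsymbol{h}_{\tilde{f}}}$, so the work lies entirely in establishing the corresponding statement for $X_{\tilde{\boldsymbol{h}}_{\tilde{f}}}$, where $\tilde{\boldsymbol{h}}_{\tilde{f}} = E_{\boldsymbol{l}_{f}}\circ \lp\mathcal{F}\boldsymbol{l}_{f}^{\times}\rp^{-1}$.

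First, I would verify that the modified Legendre transformation $\mathcal{F}\boldsymbol{l}_{f}^{\times}$ sends the Lagrangian-side diagonal $\left\lbrace(\eta,e,\eta):\eta\in\mathfrak{g}\right\rbrace$ onto the Hamiltonian-side identity section $\epsilon(\mathfrak{g}^{*})$. Since $\boldsymbol{k}_{f}$ is built from $\exp^{-1}U$ and $\exp^{-1}U^{-1}$, and both vanish at $U=e$ while $T_{e}\exp^{-1}=\mathrm{Id}$, the partial derivatives $\partial \boldsymbol{k}_{f}/\partial \eta$ and $\partial \boldsymbol{k}_{f}/\partial \psi$ vanish on the diagonal. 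Thus
\begin{equation*}
	\mathcal{F}\boldsymbol{l}_{f}^{\times}(\eta,e,\eta) = \lp l'(\eta),\,e,\,l'(\eta)\rp = \epsilon(\mu),\qquad \mu=l'(\eta),
\end{equation*}
and by regularity of $l$ (cf.\ Proposition \ref{inverseLegendre}) this is a local diffeomorphism between neighbourhoods of the two diagonals.

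Second, I would import the earlier proposition characterising the dynamics of $\boldsymbol{l}_{f}$: on the locus $\left\lbrace(\eta,e,\eta)\right\rbrace$ the reduced Euler--Lagrange equations \eqref{eq:Poisson_groupoid_Euler_Lagrange} for $\boldsymbol{l}_{f}$ reduce to two synchronised copies of the forced Euler--Poincar\'e equations. Hence the Euler--Lagrange vector field is tangent to this diagonal and equals the diagonal lift of the forced Euler--Poincar\'e vector field. The Hamilton equations displayed just before Lemma \ref{aqw} are the Legendre push-forward of \eqref{eq:Poisson_groupoid_Euler_Lagrange} under $\mathcal{F}\boldsymbol{l}^{\times}$: this is guaranteed by the defining compatibility $\boldsymbol{h}\circ\mathcal{F}\boldsymbol{l}^{\times} = E_{\boldsymbol{l}}$ together with the identity $\left\langle\mathcal{F}\boldsymbol{l}^{\times}(\eta,U,\psi),(\eta,U,\psi)\right\rangle_{\times} = \left\langle\mathcal{F}\boldsymbol{l}(\eta,U,\psi),(\eta,U,\psi)\right\rangle$ (indeed, this is why the $\times$-pairing was introduced in the first place). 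Applying the diffeomorphism from the previous step, $X_{\tilde{\boldsymbol{h}}_{\tilde{f}}}\big|_{\epsilon(\mathfrak{g}^{*})}$ is the $\epsilon_{*}$-lift of the forced Lie--Poisson vector field $Y_{h,\tilde{f}}$.

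Combining with Theorem \ref{main} then yields
\begin{equation*}
	X_{\tilde{\boldsymbol{h}}_{\tilde{f}}}\Big|_{\epsilon(\mathfrak{g}^{*})} = \epsilon_{*}\lp Y_{h,\tilde{f}}\rp = X_{\boldsymbol{h}_{\tilde{f}}}\Big|_{\epsilon(\mathfrak{g}^{*})},
\end{equation*}
which is the desired equality. The main subtlety is the second step, namely the claim that $\tilde{\boldsymbol{h}}_{\tilde{f}}$ really generates, via the Poisson bracket on the Poisson groupoid $\mathfrak{g}^{*}\times G\times \mathfrak{g}^{*}$, the push-forward of the Euler--Lagrange vector field of $\boldsymbol{l}_{f}$; once the modified Legendre setup is invoked this reduces to a direct comparison of the two systems of equations at the points $(\mu,e,\mu)$, where the correction terms produced by $\boldsymbol{k}_{f}$ and $\boldsymbol{k}_{\tilde{f}}$ both contribute precisely the forcing through $\pm f(\eta)$ (respectively $\pm\tilde{f}(\mu)$), as already computed in the proofs of the cited proposition and of Theorem \ref{main}.
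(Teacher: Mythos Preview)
Your argument is correct and reaches the same conclusion through essentially the same computations, but the packaging differs slightly from the paper's proof. The paper does not separately identify each Hamiltonian vector field with $\epsilon_{*}(Y_{h,\tilde f})$; instead it compares the \emph{differentials} of the two Hamiltonians along $\epsilon(\mathfrak{g}^{*})$. Concretely, it pulls $\mathrm{d}\boldsymbol{h}_{\tilde f}$ back by $\mathcal{F}\boldsymbol{l}^{\times}$ (for the unforced $\boldsymbol{l}=l(\psi)-l(\eta)$) and $\mathrm{d}\tilde{\boldsymbol{h}}_{\tilde f}$ back by $\mathcal{F}\boldsymbol{l}_{f}^{\times}$, shows that both pullbacks equal $(-\eta,-f(\eta),\eta)$ on the Lagrangian diagonal, and hence that $\mathrm{d}\boldsymbol{h}_{\tilde f}\big|_{\epsilon(\mathfrak{g}^{*})}=\mathrm{d}\tilde{\boldsymbol{h}}_{\tilde f}\big|_{\epsilon(\mathfrak{g}^{*})}=(-h'(\mu),-\tilde f(\mu),h'(\mu))$. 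A single application of $\sharp^{\Pi}$ then finishes the argument.

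The trade-off is this: the paper's route is cleaner because it never needs to invoke the Lagrangian--Hamiltonian correspondence for the Euler--Lagrange flow of $\boldsymbol{l}_{f}$ in the Poisson-groupoid setting (the step you flag as the ``main subtlety''); equality of differentials at a point immediately gives equality of Hamiltonian vector fields at that point. Your route, by contrast, makes the dynamical content more transparent---you explicitly see each vector field as a lift of $Y_{h,\tilde f}$---at the cost of having to justify that $\mathcal{F}\boldsymbol{l}_{f}^{\times}$ really intertwines the Euler--Lagrange and Hamilton flows here. Both are valid; the paper's is shorter.
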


\begin{proof}
As in theorem \ref{main} we construct the extended Hamiltonian $\boldsymbol{h}_{\tilde{f}}$, and we note that it coincides with the one implicitly defined as
\begin{align*}
\left(\boldsymbol{h}_{\tilde{f}} \circ \mathcal{F} \boldsymbol{l}^{\times}\right) (\eta, U, \psi) &= E_{\boldsymbol{l}}(\eta, U, \psi) + \left(\tilde{\boldsymbol{k}} \circ \mathcal{F} \boldsymbol{l}^{\times}\right)(\eta, U, \psi)\\
&= \left\langle l'(\psi), \psi \right\rangle -l(\psi)- \left\langle l'(\eta), \eta \right\rangle+l(\eta)\\
&+ \frac{1}{2} \left( \left\langle f(\psi), \exp^{-1} U^{-1} \right\rangle - \left\langle f(-\eta), \exp^{-1} U\right\rangle \right),
\end{align*}
where $\boldsymbol{l}(\eta, U, \psi) = l(\psi) - l(\eta)$.

Applying the results of theorem \ref{main} we have that
\begin{equation*}
\left. \left(\mathcal{F} \boldsymbol{l}^{\times}\right)^* \mathrm{d} \boldsymbol{h}_{\tilde{f}} \right\vert_{\epsilon(\mathfrak{g})}  = (-\eta, -f(\eta),\eta) \in \mathfrak{g}\times \mathfrak{g}^* \times \mathfrak{g},
\end{equation*}


From the definition of the second Hamiltonian, $\tilde{\boldsymbol{h}}_{\tilde{f}}$ and taking into account the results of proposition \ref{inverseLegendre}, it follows that
\begin{equation*}
\left. \left(\mathcal{F} \boldsymbol{l}_{f}^{\times}\right)^* \mathrm{d} \tilde{\boldsymbol{h}}_{\tilde{f}} \right\vert_{\epsilon(\mathfrak{g})}  = (-\eta, -f(\eta),\eta),
\end{equation*}
and thus $\left.\mathrm{d} \boldsymbol{h}_{\tilde{f}} \right\vert_{\epsilon(\mathfrak{g^*})} = \left.\mathrm{d} \tilde{\boldsymbol{h}}_{\tilde{f}} \right\vert_{\epsilon(\mathfrak{g^*})} = \left(-h'(\mu), -\tilde{f}(\mu), h'(\mu)\right)$, which together with the application of $\sharp^{\Pi}$ finishes our proof.
	\end{proof}



\section{Discrete case}
Consider a discrete Lagrangian $\boldsymbol{L}_d: G^4 = G \times G \times G \times G \rightarrow \mathbb{R}$ invariant under the action
\begin{equation*}
	\begin{array}{rrcl}
		\widehat{\Phi}_d :& G \times  G^4                        & \longrightarrow & G^4\\
		                & \left(g', (g_1, g_2, \tilde{g}_1, \tilde{g}_2)\right) & \longmapsto     & (g' g_1, g' g_2, g' \tilde{g}_1, g' \tilde{g}_2)
	\end{array}
\end{equation*}
and define the reduced Lagrangian $\boldsymbol{\ell}_d: G \times G \times G \rightarrow \mathbb{R}$ by 
\begin{equation*}
\boldsymbol{\ell}_d( V, U, W) = \boldsymbol{L}_d( e, V, U, U W)
\end{equation*}
where $V = g_1^{-1} g_2$, $U = g_1^{-1} \tilde{g}_1$ and $W = \tilde{g}_1^{-1} \tilde{g}_{2}$.

Then, if $V_{k} = g_k^{-1} g_{k+1}$, $U_{k} = g_k^{-1} \tilde{g}_k$ and $W_{k} = \tilde{g}_k^{-1} \tilde{g}_{k+1}$, we have that 
\begin{align*}
	\delta V_k &= (L_{V_k})_* \mathrm{P}_{k+1} - (R_{V_k})_* \mathrm{P}_k,\\
	\delta W_k &= (L_{W_k})_* \Sigma_{k+1} - (R_{W_k})_* \Sigma_k,\\
	\delta U_k &= U_k \Sigma_k - \mathrm{P}_k U_k,
\end{align*}
where $\mathrm{P}_k = g_k^{-1} \delta g_k$ and $\Sigma_k = \tilde{g}_k^{-1} \delta \tilde{g}_k$.
  
  \begin{proposition}
  	Given a discrete Lagrangian $\boldsymbol{l}_d : G\times G\times G\rightarrow \mathbb{R}$, the following are equivalent: 
  	\begin{enumerate}
 \item The discrete variational principle
\begin{equation}\label{eq:discret_red_action}
	\delta \mathcal{J}_d = \sum_{k = 0}^{N-1} \delta \boldsymbol{l}_d( V_k, U_k, W_k ) = 0
\end{equation}
holds using variations of the form $\delta V_k = (\mathcal{L}_{V_k})_* \mathrm{P}_{k+1} - (\mathcal{R}_{V_k})_* \mathrm{P}_k$, $\delta W_k = (\mathcal{L}_{W_k})_* \Sigma_{k+1} - (\mathcal{R}_{W_k})_* \Sigma_k$ and $\delta U_k = U_k \Sigma_k - \mathrm{P}_k U_k$ where $\mathrm{P}_k, \Sigma_k$ are arbitrary with $\mathrm{P}_0, \Sigma_0, \mathrm{P}_N, \Sigma_N$ identically zero.

\item The discrete Euler-Lagrange equations hold:
	\begin{equation}\label{eq:discrete_red_EL}
		\begin{aligned}
			( \mathcal{L}_{V_{k-1}} )^* D_1 \boldsymbol{l}_d( V_{k-1}, U_{k-1}, W_{k-1}) = & (\mathcal{R}_{V_k})^* D_1 \boldsymbol{l}_d( V_k, U_k, W_k )\\
			&+ (\mathcal{R}_{U_k})^* D_2 \boldsymbol{l}_d( V_{k}, U_{k}, W_{k} )\\
			( \mathcal{L}_{W_{k-1}} )^* D_3 \boldsymbol{l}_d( V_{k-1}, U_{k-1}, W_{k-1}) = & (\mathcal{R}_{W_k})^* D_3 \boldsymbol{l}_d( V_k, U_k, W_k )\\
			&- (\mathcal{L}_{U_k})^* D_2 \boldsymbol{l}_d( V_{k}, U_{k}, W_{k})\\
			U_k = &  V_{k-1}^{-1} U_{k-1} W_{k-1}
		\end{aligned}
	\end{equation}
\end{enumerate}
\end{proposition}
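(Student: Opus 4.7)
The plan is to prove this by a standard discrete variational calculation: compute $\delta \mathcal{J}_d$ by substituting the given expressions for $\delta V_k$, $\delta U_k$, $\delta W_k$, convert the derivatives through duality into pullbacks by $\mathcal{L}$ and $\mathcal{R}$, then perform a discrete integration by parts (i.e. reindexing of summation indices) to collect the terms multiplying $\mathrm{P}_k$ and $\Sigma_k$ separately. The boundary contributions at $k=0$ and $k=N$ drop out because $\mathrm{P}_0 = \Sigma_0 = \mathrm{P}_N = \Sigma_N = 0$, and the arbitrariness of the interior $\mathrm{P}_k$, $\Sigma_k$ then forces the first two equations of \eqref{eq:discrete_red_EL}.

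More concretely, I would expand
\begin{equation*}
\delta \boldsymbol{l}_d(V_k, U_k, W_k) = \langle D_1 \boldsymbol{l}_d, \delta V_k\rangle + \langle D_2 \boldsymbol{l}_d, \delta U_k\rangle + \langle D_3 \boldsymbol{l}_d, \delta W_k\rangle,
\end{equation*}
substitute the variation formulas, and use the dual pairing identities $\langle \xi, (\mathcal{L}_g)_* \eta\rangle = \langle (\mathcal{L}_g)^*\xi, \eta\rangle$ and similarly for $\mathcal{R}_g$. After reindexing the terms involving $\mathrm{P}_{k+1}$ and $\Sigma_{k+1}$ by shifting $k \mapsto k-1$, one obtains
\begin{align*}
\delta \mathcal{J}_d = \sum_{k=1}^{N-1} \Bigl\langle (\mathcal{L}_{V_{k-1}})^* D_1 \boldsymbol{l}_d|_{k-1} &- (\mathcal{R}_{V_k})^* D_1 \boldsymbol{l}_d|_k - (\mathcal{R}_{U_k})^* D_2 \boldsymbol{l}_d|_k, \mathrm{P}_k\Bigr\rangle\\
+ \sum_{k=1}^{N-1} \Bigl\langle (\mathcal{L}_{W_{k-1}})^* D_3 \boldsymbol{l}_d|_{k-1} &- (\mathcal{R}_{W_k})^* D_3 \boldsymbol{l}_d|_k + (\mathcal{L}_{U_k})^* D_2 \boldsymbol{l}_d|_k, \Sigma_k\Bigr\rangle,
\end{align*}
where $|_k$ abbreviates evaluation at $(V_k, U_k, W_k)$ and the boundary terms vanish by the endpoint conditions. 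Since $\mathrm{P}_k$ and $\Sigma_k$ are arbitrary and independent, the first two of the equations \eqref{eq:discrete_red_EL} follow.

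The third equation, $U_k = V_{k-1}^{-1} U_{k-1} W_{k-1}$, is not a dynamical consequence of the variational principle but rather a kinematic compatibility condition built into the very definitions $V_k = g_k^{-1}g_{k+1}$, $U_k = g_k^{-1}\tilde g_k$, $W_k = \tilde g_k^{-1}\tilde g_{k+1}$; a direct computation gives $V_{k-1}^{-1}U_{k-1}W_{k-1} = g_k^{-1}g_{k-1}\cdot g_{k-1}^{-1}\tilde g_{k-1}\cdot \tilde g_{k-1}^{-1}\tilde g_k = g_k^{-1}\tilde g_k = U_k$, so this equation is automatically satisfied on the admissible configurations and must simply be recorded as part of the reconstruction data.

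The only point requiring care is the bookkeeping during the reindexing: one must be vigilant that the $\mathrm{P}_{k+1}$ coming from $\delta V_k$ and the $\Sigma_{k+1}$ coming from $\delta W_k$ shift into terms evaluated at $(V_{k-1}, U_{k-1}, W_{k-1})$, while the $\mathrm{P}_k$ coming from $\delta U_k$ (via $-\mathrm{P}_k U_k$, which in tangent-space terms is $-(\mathcal{R}_{U_k})_* \mathrm{P}_k$) and the $\Sigma_k$ coming from $U_k \Sigma_k = (\mathcal{L}_{U_k})_* \Sigma_k$ stay at index $k$. This asymmetric treatment of $U_k$ compared to $V_k$ and $W_k$ (because $U_k$ does not contain consecutive indices) is where sign mistakes are easiest to make, but once the pairings are written out as above the result is immediate.
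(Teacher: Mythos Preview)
Your proposal is correct and follows precisely the approach sketched in the paper's own proof: the paper states that the third equation is a consequence of the definitions and that the first two follow from ``a straightforward computation of the variations and rearrangement of the terms of the sum,'' which is exactly the expand--dualize--reindex argument you carry out in detail. Your calculation is accurate, including the handling of the $\delta U_k$ term and the resulting signs.
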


\begin{proof}
	The last equation is a consequence of the definitions. The remaining two equations follow from a straightforward computation of the variations and rearrangement of the terms of the sum. See also \cite{MMM06Grupoides} for the general  case of Lie groupoids. 
\end{proof}

These equations are the discrete equivalent of the equations given in theorem \ref{puy}, and under certain regularity conditions they define a discrete flow (see \cite{MMM06Grupoides})
\begin{equation*}
	F_{\boldsymbol{l}_d}( V_{k-1}, U_{k-1}, W_{k-1} ) = ( V_k, U_k, W_{k} ).
\end{equation*}

Much like in the standard setting, in this reduced setting we can define two discrete Legendre transformations 
\begin{equation*}
	\mathcal{F}^{\pm}\boldsymbol{l}_d^{\times} : G \times G \times G \to \mathfrak{g}^* \times G \times \mathfrak{g}^*
\end{equation*}
with coordinate presentation
\begin{align*}
	\mathcal{F}^{+}\boldsymbol{l}_d^{\times}(V_k, U_k, W_k) &= \left( - ( \mathcal{L}_{V_{k}} )^* D_1 \boldsymbol{l}_d( V_{k}, U_{k}, W_{k}),\right.\\
			& \qquad\: \qquad \qquad \qquad V_{k}^{-1} U_{k} W_{k},\\
			& \quad\;\;\: \left.( \mathcal{L}_{W_{k}} )^* D_3 \boldsymbol{l}_d( V_{k}, U_{k}, W_{k})\right)\\
	\mathcal{F}^{-}\boldsymbol{l}_d^{\times}(V_k, U_k, W_k) &= \left( -(\mathcal{R}_{V_k})^* D_1 \boldsymbol{l}_d( V_k, U_k, W_k ) - (\mathcal{R}_{U_k})^* D_2 \boldsymbol{l}_d( V_{k}, U_{k}, W_{k} ), \right.\\
			& \qquad \qquad \qquad \qquad \qquad \qquad \qquad \qquad \qquad \qquad \qquad \quad \; U_k,\\
			& \qquad \left.(\mathcal{R}_{W_k})^* D_3 \boldsymbol{l}_d( V_k, U_k, W_k ) - (\mathcal{L}_{U_k})^* D_2 \boldsymbol{l}_d( V_{k}, U_{k}, W_{k})\right)
\end{align*}

\subsection{The exact discrete Lagrangian}
Given a regular Lagrangian function $\boldsymbol{l}: \mathfrak{g} \times G \times \mathfrak{g} \longrightarrow \mathbb{R}$, we will consider discrete Lagrangians $\boldsymbol{l}_d$ as an approximation to the action of the continuous Lagrangian which can be considered as the exact discrete Lagrangian:
\begin{equation*}
	\boldsymbol{l}_d^e (V_0, U_0, W_0, h) = \int_0^h \boldsymbol{l}(\eta(t), U(t), \psi(t)) \mathrm{d}t,
\end{equation*}
where $t \mapsto (\eta(t), U(t), \psi(t))$ is the unique solution of the Euler-Lagrange equations for $\boldsymbol{l}$
\begin{align*}
	\frac{\mathrm{d}}{\mathrm{d}t}\left( \frac{\partial \boldsymbol{l}}{\partial \eta}\right) &= \mathrm{ad}^*_{\eta} \frac{\partial \boldsymbol{l}}{\partial \eta} - \mathcal{R}_{U}^*\frac{\partial \boldsymbol{l}}{\partial U}\\
	\frac{\mathrm{d}}{\mathrm{d}t}\left( \frac{\partial \boldsymbol{l}}{\partial \psi}\right) &= \mathrm{ad}^*_{\psi} \frac{\partial \boldsymbol{l}}{\partial \psi} + \mathcal{L}_{U}^*\frac{\partial \boldsymbol{l}}{\partial U}\\
	\dot{U} &= U(\psi - \mathrm{Ad}_{U^{-1}}\eta)
\end{align*}
together with the reconstruction equations:
\begin{align*}
	\dot{g}(t) &= g(t) \eta(t)\\
	\dot{\tilde{g}}(t) &= \tilde{g}(t) \psi(t)
\end{align*}
satisfying $g(0) = e$, $g(h) = V_0$, $\tilde{g}(0) = U_0$ and $\tilde{g}(h) = U_0 W_0$ {with small enough $h$} (see \cite{MMM3}). 

In practice, it will not be feasible to compute $\boldsymbol{l}_d^e$ and instead we will work with approximations (see \cite{marsden-west}),
\begin{equation*}
	\boldsymbol{l}_d(V_0, U_0, W_0, h) \approx \boldsymbol{l}_d^e(V_0, U_0, W_0, h),
\end{equation*}
of this function. We say that $\boldsymbol{l}_d(V_0, U_0, W_0, h)$ is an \emph{approximation of order} $r$ (to the exact discrete Lagrangian) if there exists an open subset $\mathcal{U}_s \subset G \times \mathfrak{g} \times G \times \mathfrak{g}$ with compact closure and constants $C_s$ and $h_s$ such that
\begin{equation*}
\left\Vert \boldsymbol{l}_d(V(h), U_0, W(h), h) - \boldsymbol{l}_d^e(V(h), U_0, W(h), h) \right\Vert \leq C_s h^{r+1},
\end{equation*}
with $V(h) = g^{-1}(0)g(h)$, $W(h) = \tilde{g}^{-1}(0)\tilde{g}(h)$ and $U_0 = g^{-1}(0)\tilde{g}(0)$, for all solutions $(g(t),\eta(t),\tilde{g}(t),\psi(t))$ of the Euler-Lagrange equations with initial condition in $\mathcal{U}_s$ and for all $h \leq h_s$.

As it is common practice we will fix some $h$ and drop its explicit dependence unless it is strictly necessary. 

In previous sections we consider Lagrangians $\boldsymbol{L}: TG \times TG \to \mathbb{R}$ and their reduced counterparts $\boldsymbol{l}: \mathfrak{g} \times G \times \mathfrak{g} \to \mathbb{R}$. These displayed discrete symmetries of the form $\boldsymbol{L}(\tilde{v}_{\tilde{g}}, v_g) = - \boldsymbol{L}(v_g, \tilde{v}_{\tilde{g}})$ and $\boldsymbol{l}(\eta, U, \psi) = - \boldsymbol{l}(\psi, U^{-1}, \eta)$ and we saw the groupoidal interpretation of this operation on the Hamiltonian side.

In the discrete realm we may define and equivalent transformation $\iota_d: G^4 \to G^4$ and its induced transformation $\check{\iota}_d: G \times G \times G \to G \times G \times G$,
\begin{align*}
	\iota_d(g_1, g_2, \tilde{g}_1, \tilde{g}_2) &= (\tilde{g}_1, \tilde{g}_2, g_1, g_2)\\
	\check{\iota}_d(V, U, W) &= (W, U^{-1}, V)
\end{align*}

We can state the following trivial proposition
\begin{proposition}
Let $\boldsymbol{l}: \mathfrak{g} \times G \times \mathfrak{g} \to \mathbb{R}$ be a Lagrangian satisfying that $\boldsymbol{l}(\psi, U^{-1}, \eta) = - \boldsymbol{l}(\eta, U, \psi)$ for all $(\eta, U, \psi) \in \mathfrak{g} \times G \times \mathfrak{g}$. Then its exact discrete Lagrangian, $\boldsymbol{l}_d^e: G \times G \times G \to \mathbb{R}$, satisfies $\boldsymbol{l}_d^e \circ \check{\iota}_d = - \boldsymbol{l}_d^e$.
\end{proposition}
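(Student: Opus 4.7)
The plan is to exploit the symmetry of $\boldsymbol{l}$ at the level of the reconstructed curves together with the diagonal left $G$-invariance of the action. Fix $(V_0,U_0,W_0)\in G\times G\times G$ and let $(g(t),\tilde{g}(t))$ be the pair of curves defining $\boldsymbol{l}_d^e(V_0,U_0,W_0,h)$, so that the associated reduced trajectory $(\eta(t),U(t),\psi(t))$ solves the Euler--Lagrange equations of Theorem~\ref{puy} with $g(0)=e$, $g(h)=V_0$, $\tilde{g}(0)=U_0$, $\tilde{g}(h)=U_0W_0$. Define
\[
\bar{g}(t) := U_0^{-1}\tilde{g}(t), \qquad \bar{\tilde{g}}(t) := U_0^{-1}g(t),
\]
i.e., swap the two curves and left-translate by $U_0^{-1}$ to normalize the initial value of $\bar{g}$. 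A direct computation yields
\[
\bar{\eta}(t) = \psi(t), \qquad \bar{U}(t) = \bar{g}(t)^{-1}\bar{\tilde{g}}(t) = U(t)^{-1}, \qquad \bar{\psi}(t) = \eta(t),
\]
and the endpoints read
\[
\bar{g}(0)=e,\ \bar{g}(h)=W_0, \qquad \bar{\tilde{g}}(0)=U_0^{-1},\ \bar{\tilde{g}}(h)=U_0^{-1}V_0,
\]
which are precisely the data used to define $\boldsymbol{l}_d^e$ at $\check{\iota}_d(V_0,U_0,W_0)=(W_0,U_0^{-1},V_0)$.

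The main step is to verify that $(\bar{g},\bar{\tilde{g}})$ is itself a critical point of the action associated with the un-reduced Lagrangian $\boldsymbol{L}$ corresponding to $\boldsymbol{l}$. The symmetry hypothesis on $\boldsymbol{l}$ is equivalent to the pointwise identity $\boldsymbol{L}(\tilde{g},\dot{\tilde{g}},g,\dot{g})=-\boldsymbol{L}(g,\dot{g},\tilde{g},\dot{\tilde{g}})$, so the action $S[a,b]=\int_0^h\boldsymbol{L}(a,\dot{a},b,\dot{b})\,dt$ satisfies $S[b,a]=-S[a,b]$; in particular, $(a_0,b_0)$ is critical for $S$ if and only if $(b_0,a_0)$ is. Thus from criticality of $(g,\tilde{g})$ we obtain criticality of $(\tilde{g},g)$, and then by the $\widehat{\Phi}^{\times}$-invariance of $\boldsymbol{L}$ also criticality of $(U_0^{-1}\tilde{g},U_0^{-1}g)=(\bar{g},\bar{\tilde{g}})$. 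Invoking Theorem~\ref{puy} and the uniqueness of solutions of the Euler--Lagrange boundary value problem for small enough $h$, the pair $(\bar{g},\bar{\tilde{g}})$ is precisely the one used in the definition of $\boldsymbol{l}_d^e(W_0,U_0^{-1},V_0,h)$.

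The conclusion is then immediate from the assumed symmetry:
\[
\boldsymbol{l}_d^e(W_0,U_0^{-1},V_0,h) = \int_0^h\boldsymbol{l}(\bar{\eta},\bar{U},\bar{\psi})\,dt = \int_0^h\boldsymbol{l}(\psi,U^{-1},\eta)\,dt = -\int_0^h\boldsymbol{l}(\eta,U,\psi)\,dt = -\boldsymbol{l}_d^e(V_0,U_0,W_0,h).
\]
The main obstacle is the middle step, namely verifying that swapping-and-translating a solution produces another solution; this reduces to packaging the sign-symmetry of $\boldsymbol{L}$ together with its diagonal left invariance. Everything else is straightforward bookkeeping of the endpoints and invocation of uniqueness.
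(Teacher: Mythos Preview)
The paper labels this proposition ``trivial'' and gives no proof at all, so there is nothing to compare against; your argument is correct and supplies exactly the details one would expect: swapping the two reconstructed curves and left-translating by $U_0^{-1}$ sends solutions to solutions (via the sign-symmetry of the unreduced $\boldsymbol{L}$ together with $\widehat{\Phi}^{\times}$-invariance), realizes the boundary data for $\check{\iota}_d(V_0,U_0,W_0)$, and the antisymmetry of $\boldsymbol{l}$ then yields the sign.
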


It is always possible to work with approximations of $\boldsymbol{l}_d^e$ that respect this symmetry, that is, $\boldsymbol{l}_d \approx \boldsymbol{l}_d^e$ satisfying $\boldsymbol{l}_d \circ \check{\iota}_d = - \boldsymbol{l}_d$. Such discrete Lagrangians will be of crucial importance to derive variationally forced integrators.


If we define the maps $\epsilon_d: G \times G \rightarrow G^4$ and $\check{\epsilon}_d: G \rightarrow G \times G \times G$ by
\begin{align*}
	\epsilon_d(g_1, g_2) &= (g_1, g_2, g_1, g_2),\\
	\check{\epsilon}_d(V) &= (V, e, V),
\end{align*}
respectively then we can prove the following

\begin{theorem}\label{thm:discrete_flow_identities}
The discrete flow $F_{\boldsymbol{l}_d}: G \times G \times G \rightarrow G \times G \times G$ defined by a discrete Lagrangian $\boldsymbol{l}_d: G \times G \times G \rightarrow \mathbb{R}$ verifying that $\boldsymbol{l}_d \circ \check{\iota}_d = -\boldsymbol{l}_d$ restricts to $\check{\epsilon}_d(G)$, that is, 
	\begin{equation*}
		F_{\boldsymbol{l}_d} \circ \check{\epsilon}_d (G) \in \check{\epsilon}_d (G).
	\end{equation*}
\end{theorem}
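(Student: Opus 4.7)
The plan is to substitute an initial point of the form $(V_{k-1},U_{k-1},W_{k-1})=(V,e,V)\in \check{\epsilon}_d(G)$ into the discrete Euler-Lagrange system \eqref{eq:discrete_red_EL} and verify that the next step $(V_k,U_k,W_k)$ must again lie in $\check{\epsilon}_d(G)$. The middle component is immediate: the constraint equation forces $U_k=V^{-1}\cdot e\cdot V=e$, so one only has to prove $V_k=W_k$. Conceptually, this is the natural discrete counterpart of Proposition \ref{prop-poi}: the antisymmetry of $\boldsymbol{l}_d$ under $\check{\iota}_d$ plays the role of the antisymmetry of the Hamiltonian under the groupoid inversion, and the desired conclusion is the discrete analog of the Hamiltonian vector field being tangent to $\epsilon(Q)$.

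The next step is to extract the consequences of the symmetry $\boldsymbol{l}_d(W,U^{-1},V)=-\boldsymbol{l}_d(V,U,W)$ at the infinitesimal level. Differentiating in each slot, and using that the derivative of $U\mapsto U^{-1}$ at $U=e$ equals $-\mathrm{Id}$, one obtains the three identities $D_1\boldsymbol{l}_d(W,e,V)=-D_3\boldsymbol{l}_d(V,e,W)$, $D_3\boldsymbol{l}_d(W,e,V)=-D_1\boldsymbol{l}_d(V,e,W)$ and $D_2\boldsymbol{l}_d(W,e,V)=D_2\boldsymbol{l}_d(V,e,W)$. Specializing the first to the input point $(V,e,V)$ gives $D_3\boldsymbol{l}_d(V,e,V)=-D_1\boldsymbol{l}_d(V,e,V)$, so that the two non-constraint equations of \eqref{eq:discrete_red_EL} have left-hand sides equal up to a sign.

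The crucial observation is then that, with $U_k=e$ already fixed, the pair of remaining EL equations, viewed as an implicit system for $(V_k,W_k)$, is invariant under the swap $(V_k,W_k)\leftrightarrow(W_k,V_k)$: applying the swap to the first equation and using the three derivative identities transforms it, after rearrangement via $D_3\boldsymbol{l}_d(V,e,V)=-D_1\boldsymbol{l}_d(V,e,V)$, into exactly the second equation, and vice versa. Hence if $(V_k,W_k)$ is a solution, then so is $(W_k,V_k)$. The regularity of $\boldsymbol{l}_d$ that underlies the very definition of $F_{\boldsymbol{l}_d}$ gives uniqueness of the solution, so $V_k=W_k$, completing the argument.

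The main technical obstacle will be the careful manipulation of the derivative in the middle slot under the involution $U\mapsto U^{-1}$: the push-forward through the inversion picks up the minus sign from $T_e(\cdot)^{-1}=-\mathrm{Id}$, which precisely cancels the overall sign from the antisymmetry of $\boldsymbol{l}_d$ and produces the $D_2$ identity. Getting the signs and the dualizations of $\mathcal{L}$ and $\mathcal{R}$ right is what makes the swap-invariance of the EL system work; once that piece of algebra is in place, the rest is a short uniqueness/symmetry finish.
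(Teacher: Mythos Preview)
Your proposal is correct and follows essentially the same strategy as the paper: exploit the antisymmetry $\boldsymbol{l}_d\circ\check{\iota}_d=-\boldsymbol{l}_d$ to obtain a swap symmetry $(V_k,W_k)\leftrightarrow(W_k,V_k)$ of the discrete Euler--Lagrange system at $U_k=e$, and then invoke uniqueness of the discrete flow to conclude $V_k=W_k$. The only difference is one of presentation: the paper applies $\check{\iota}_d$ to the entire discrete action sum and re-derives the Euler--Lagrange equations in swapped form, whereas you differentiate the antisymmetry identity directly to obtain the relations among $D_1$, $D_2$, $D_3$ and verify swap-invariance equation by equation. Both routes encode the same symmetry. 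If anything, your version is more explicit than the paper's about the uniqueness step that forces $V_k=W_k$; the paper writes the restricted equations already with $W_k$ replaced by $V_k$ and concludes from ``the vanishing of the dynamics in $U_k$'', leaving the $V_k=W_k$ part implicit.
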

\begin{proof}
If we apply the identity $\boldsymbol{l}_d \circ \check{\iota}_d = - \boldsymbol{l}_d$ to
	\begin{equation}\label{eq:discret_red_action_inv}
		\sum_{i = 0}^N \left(\boldsymbol{l}_d \circ \check{\iota}_d\right)(V_k, U_k, W_k)
	\end{equation}
and apply the discrete Hamilton principle we obtain eq. \eqref{eq:discret_red_action}, and it follows immediately that solutions of the system
	\begin{align*}
		( \mathcal{L}_{W_{k-1}} )^* D_1 \boldsymbol{l}_d( W_{k-1}, U^{-1}_{k-1}, V_{k-1}) &= (\mathcal{R}_{W_k})^* D_1 \boldsymbol{l}_d( W_k, U^{-1}_k, V_k )\\
		&+ (\mathcal{R}_{U^{-1}_k})^* D_2 \boldsymbol{l}_d( W_{k}, U^{-1}_{k}, V_{k} ),\\
		( \mathcal{L}_{V_{k-1}} )^* D_3 \boldsymbol{l}_d( W_{k-1}, U^{-1}_{k-1}, V_{k-1}) &= (\mathcal{R}_{V_k})^* D_3 \boldsymbol{l}_d( W_k, U^{-1}_k, V_k )\\
		&- (\mathcal{L}_{U^{-1}_k})^* D_2 \boldsymbol{l}_d( W_{k}, U^{-1}_{k}, V_{k}),\\
		U^{-1}_k &= W_{k-1}^{-1} U^{-1}_{k-1} V_{k-1},
	\end{align*}
obtained from varying eq. \eqref{eq:discret_red_action_inv} must also be solutions of eqs. \eqref{eq:discrete_red_EL}. If we restrict either these or eqs. \eqref{eq:discrete_red_EL} to $\check{\epsilon}_d(G)$ the last equation turns into an identity and the remaining equations become
	\begin{align*}
		( \mathcal{L}_{V_{k-1}} )^* D_1 \boldsymbol{l}_d( V_{k-1}, e, V_{k-1}) &= (\mathcal{R}_{V_k})^* D_1 \boldsymbol{l}_d( V_k, e, V_k ) + D_2 \boldsymbol{l}_d( V_{k}, e, V_{k} ),\\
		( \mathcal{L}_{V_{k-1}} )^* D_3 \boldsymbol{l}_d( V_{k-1}, e, V_{k-1}) &= (\mathcal{R}_{V_k})^* D_3 \boldsymbol{l}_d( V_k, e, V_k ) - D_2 \boldsymbol{l}_d( V_{k}, e, V_{k}).
	\end{align*}
The vanishing of the dynamics in $U_k$ proves that $\left.F_{\boldsymbol{l}_d}\right\vert_{\check{\epsilon}_d(G)}: \check{\epsilon}_d(G) \rightarrow \check{\epsilon}_d(G)$.
\end{proof}

\begin{theorem}\label{main-theorem-1}
Let $(L, F)$ be a $\widehat{\Phi}^{\times}$-invariant forced regular Lagrangian system in $G$ such that it defines an $(l, f)$ forced regular Lagrangian system in $\mathfrak{g}$. Denote by $\boldsymbol{L}_{F}: TG \times TG \to \mathbb{R}$ the extended Lagrangian, and let $\boldsymbol{L}_{F,d}: G^4 \to \mathbb{R}$ be a $\widehat{\Phi}_d$-invariant approximation to the exact discrete Lagrangian of order $r$ satisfying $\boldsymbol{L}_{F,d} \circ \iota_d = -\boldsymbol{L}_{F,d}$. Then,
\begin{itemize}
	\item The discrete Lagrangian $\boldsymbol{l}_{f,d}: G \times G \times G \to \mathbb{R}$ defined by
		\begin{equation*}
			\boldsymbol{l}_{f,d}(V,U,W) = \boldsymbol{L}_{F,d}(e, V, U, UW),
		\end{equation*}
	is an approximation of order $r$ for $\boldsymbol{l}_{f,d}^e$ satisfying the identity $\boldsymbol{l}_{f,d} \circ \check{\iota}_d = -\boldsymbol{l}_{f,d}$.
	\item When restricted to $\epsilon_d(G \times G)$, the discrete flow $F_{\boldsymbol{L}_{F,d}}: G^4 \to G^4$ induced by its discrete Euler-Lagrange equations is an approximation of order $r$ to the flow of $(L,F)$.
	\item When restricted to $\check{\epsilon}_d(G)$, the discrete flow $F_{\boldsymbol{l}_{f,d}}: G \times G \times G \to G \times G \times G$ induced by its discrete Euler-Lagrange equations is an approximation of order $r$ to the flow of $(l,f)$.
\end{itemize}
\end{theorem}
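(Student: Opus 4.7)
\textbf{Proof plan for Theorem \ref{main-theorem-1}.} The strategy is to deduce each of the three claims from the structural relation $\boldsymbol{l}_{f,d}(V,U,W) = \boldsymbol{L}_{F,d}(e,V,U,UW)$ together with results already established in the paper and its predecessor \cite{Sato}.

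\textit{First,} I would verify the two assertions of part (i). The discrete antisymmetry is a direct calculation: by $\widehat{\Phi}_d$-invariance of $\boldsymbol{L}_{F,d}$,
\[
\boldsymbol{l}_{f,d}(W,U^{-1},V) = \boldsymbol{L}_{F,d}(e,W,U^{-1},U^{-1}V) = \boldsymbol{L}_{F,d}(U,UW,e,V),
\]
and one more application of $\boldsymbol{L}_{F,d}\circ\iota_d = -\boldsymbol{L}_{F,d}$ gives $-\boldsymbol{L}_{F,d}(e,V,U,UW) = -\boldsymbol{l}_{f,d}(V,U,W)$. For the order-$r$ statement, the crucial observation is that the exact discrete Lagrangians satisfy the same reduction formula, $\boldsymbol{l}_{f,d}^e(V,U,W,h) = \boldsymbol{L}_{F,d}^e(e,V,U,UW,h)$, because the correspondence $(g,\tilde g) \leftrightarrow (\eta,U,\psi)$ with $g(0)=e$ is a diffeomorphism and the action integral is invariant under the left diagonal action. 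The order-$r$ bound on $\boldsymbol{L}_{F,d}-\boldsymbol{L}_{F,d}^e$ therefore transfers verbatim to $\boldsymbol{l}_{f,d}-\boldsymbol{l}_{f,d}^e$ on the corresponding compact set.

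\textit{Second,} part (ii) is essentially the main variational error theorem for forced Lagrangian systems proved in \cite{Sato}. The hypotheses there are exactly what we have: an antisymmetric discrete Lagrangian on the pair groupoid $G^4$ approximating the exact discrete Lagrangian to order $r$. Its discrete flow restricts to $\epsilon_d(G\times G)$ by the analogue of Theorem \ref{thm:discrete_flow_identities} in the unreduced setting, and the restriction is the standard forced variational integrator whose order-$r$ accuracy is the content of that theorem.

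\textit{Third,} for part (iii), Theorem \ref{thm:discrete_flow_identities} already guarantees that $F_{\boldsymbol{l}_{f,d}}$ maps $\check\epsilon_d(G)$ into itself, so only the order estimate remains. Here I would use the projection
\[
\pi: G^4 \longrightarrow G\times G\times G, \qquad (g_1,g_2,\tilde g_1,\tilde g_2)\longmapsto (g_1^{-1}g_2,\, g_1^{-1}\tilde g_1,\, \tilde g_1^{-1}\tilde g_2),
\]
which intertwines the unreduced and reduced discrete flows, $\pi \circ F_{\boldsymbol{L}_{F,d}} = F_{\boldsymbol{l}_{f,d}}\circ \pi$, and similarly intertwines the exact flows. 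Combining this with part (ii), the order-$r$ bound on $\epsilon_d(G\times G)$ pushes forward through $\pi$ to an order-$r$ bound on $\check\epsilon_d(G) = \pi(\epsilon_d(G\times G))$, using smoothness (hence local Lipschitz continuity) of $\pi$ on the compact subset where initial data live.

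\textit{Main obstacle.} The least automatic step is the final transfer in part (iii): one has to be careful that the projection $\pi$ does not degrade the exponent $r+1$. Since $\pi$ is smooth and the relevant neighbourhoods have compact closure, this reduces to a uniform Lipschitz estimate on $\pi$ restricted to the compact set of admissible configurations, together with invariance of the $\widehat{\Phi}_d$-equivariant dynamics under $\pi$. Equivalently, one may argue directly at the level of discrete Euler-Lagrange equations \eqref{eq:discrete_red_EL}, whose solutions on $\check\epsilon_d(G)$ are in bijection via $\pi$ with $\widehat\Phi_d$-equivalence classes of solutions on $\epsilon_d(G\times G)$, so that the reduced local truncation error inherits the same order as the unreduced one.
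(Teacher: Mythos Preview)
Your proposal is correct and follows essentially the same route as the paper's proof: the antisymmetry of $\boldsymbol{l}_{f,d}$ and its order-$r$ property are obtained from $\widehat{\Phi}_d$-invariance and $\iota_d$-antisymmetry of $\boldsymbol{L}_{F,d}$ exactly as you do; for part (ii) the paper invokes the variational error theorem of \cite{PatrickCuell} for the \emph{free} doubled system on $G^4$ and then restricts to $\epsilon_d(G\times G)$ (your citation of \cite{Sato} packages the same two steps); and for part (iii) the paper uses the intertwining projection $\tilde{\pi}_d: G^4\to G^4/G$ and the remark that it does not affect the order, which is precisely your Lipschitz argument made explicit.
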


\begin{proof}
We have that for $h$ sufficiently small
\begin{align*}
	\boldsymbol{L}_{F,d}(g_0, g_1, \tilde{g}_0, \tilde{g}_1) &= \boldsymbol{L}_{F,d}^e(g_0, g_1, \tilde{g}_0, \tilde{g}_1) + \mathcal{O}\left(h^{r+1}\right)\\
	&= \int_0^h \boldsymbol{L}_F(g(t),\dot{g}(t),\tilde{g}(t),\dot{\tilde{g}}(t)) \mathrm{d}t + \mathcal{O}\left(h^{r+1}\right).
\end{align*}
where $(g(t),\dot{g}(t),\tilde{g}(t),\dot{\tilde{g}}(t))$ is a solution of the Euler-Lagrange equations for $\boldsymbol{L}_F$ such that
$g(0) = g_0 = e$, $g(h) = g_1$, $\tilde{g}(0) = \tilde{g}_0$, $\tilde{g}(h) = \tilde{g}_1$.
This means that, by $\widehat{\Phi}^{\times}$-invariance,
\begin{equation*}
	\boldsymbol{L}_{F,d}(g_0, g_1, \tilde{g}_0, \tilde{g}_1) = \int_0^h \boldsymbol{l}_F(\eta(t), U(t), \psi(t)) \mathrm{d}t + \mathcal{O}\left(h^{r+1}\right),
\end{equation*}
with $\eta(t) = g^{-1}(t)\dot{g}(t)$, $U(t) = g^{-1}(t)\tilde{g}(t)$ and $\psi(t) = \tilde{g}^{-1}(t)\dot{\tilde{g}}(t)$ and by $\widehat{\Phi}_d$-invariance we get
\begin{equation*}
\boldsymbol{L}_{F,d}(g_0, g_1, \tilde{g}_0, \tilde{g}_1) = \boldsymbol{L}_{F,d}(e, g_0^{-1} g_1, g_0^{-1} \tilde{g}_0, g_0^{-1} \tilde{g}_1) = \boldsymbol{l}_{f,d}(V_0, U_0, W_0),
\end{equation*}
with $V_0 = g_0^{-1} g_1$, $U_0 = g_0^{-1} \tilde{g}_0$, $W_0 = \tilde{g}_0^{-1} \tilde{g}_1$. Thus $\boldsymbol{l}_{f,d}$ is indeed of order $r$ with respect to $\boldsymbol{l}_{f,d}^e$.

That $\boldsymbol{l}_{f,d}$ satisfies $\boldsymbol{l}_{f,d} \circ \check{\iota}_d = -\boldsymbol{l}_{f,d}$ follows immediately from its definition from $\boldsymbol{L}_{F,d}$, i.e.,
\begin{align*}
(\boldsymbol{l}_{f,d} \circ \check{\iota}_d)(V_0, U_0, W_0) &= (\boldsymbol{L}_{F,d} \circ \iota_d)(g_0, g_1, \tilde{g}_0, \tilde{g}_1)\\
&= -\boldsymbol{L}_{F,d}(\tilde{g}_0, \tilde{g}_1, g_0, g_1),\\
&= -\boldsymbol{L}_{F,d}(e, \tilde{g}_0^{-1} \tilde{g}_1, \tilde{g}_0^{-1} g_0, \tilde{g}_0^{-1} g_1),\\
&= -\boldsymbol{l}_{f,d}(W_0, U_0^{-1},V_0).
\end{align*}

In the second point it suffices to apply the variational error theorem from \cite{PatrickCuell}, which proves that $F_{\boldsymbol{L}_{F,d}}$ is an approximation of order $r$ to the exact Hamiltonian flow induced by the Euler-Lagrange equations. Afterwards, we need only to apply theorem \ref{thm:discrete_flow_identities} to see that the discrete flow projects onto $G$, thus approximating the continuous flow for the forced Lagrangian system $(L,F)$.

The third point can then be seen as a direct consequence of the second point. If $\tilde{\pi}_d: G^4 \to G^4/ G \equiv G \times G \times G$, then it is clear that $\tilde{\pi}_d \circ F_{\boldsymbol{L}_{F,d}} = F_{\boldsymbol{l}_{f,d}} \circ \tilde{\pi}_d$, and $\tilde{\pi}_d$ does not affect the order, so the result follows immediately.
\end{proof}

\begin{remark}
This theorem can be proven without mentioning the forced system $(L, F)$ or the discrete Lagrangian $\boldsymbol{L}_{F,d}$ at all, by directly applying the results of \cite{MMM3} (theorem 5.7 in particular) and then applying theorem \ref{thm:discrete_flow_identities}.
\end{remark}

\subsection{Variationally partitioned Runge-Kutta-Munthe-Kaas methods with forcing}
There exist several ways to construct high-order variational integrators. We will be focusing here on those based on Runge-Kutta (RK) methods. These are a well-known type of numerical method used to approximate the solution of differential equations. A fixed-step RK method (for an autonomous system) is characterized by a series of coefficients $(a_{i j}, b_j)$, $1 < i,j < s$, and, given an initial value problem $\dot{y}(t) = f(y(t))$, $y(t_0) = y_0 \in \mathbb{R}^n$, the generic form of a step of size $h$ is
\begin{align*}
Y_k^i &= y_k + h \sum_{j = 1}^s a_{i j} f(Y_k^j),\\
y_{k+1} &= y_k + h \sum_{j = 1}^s b_j f(Y_k^i).
\end{align*}

Constructing of variational integrators using RK methods means using these methods to \emph{discretize} the equations relating our position variables with the velocities, i.e. $\dot{q}(t) = v(t)$. In the case of a Lie group these equations constitute the reconstruction equations, which in matrix notation take the form $\dot{g}(t) = g(t)\eta(t)$. As we are working on a manifold one must be careful to bring all operations to common vector spaces where the methods can be applied. The resulting methods using this strategy receive the name of Runge-Kutta-Munthe-Kaas methods (RKMK) \cite{Lie-group}

Consider a retraction map given by a local diffeomorphism $\tau: \mathfrak{g} \to \mathcal{U}_e \subset G$, where $\mathcal{U}_e$ is a neighbourhood of the identity element. The most common instances of these are the exponential map, $\exp$, and the Cayley map, $\mathrm{cay}$ (in the case of quadratic Lie groups) \cite{iserles}.

Assuming that $G$ is connected, we will be able to transport a neighbourhood of any point to  $\mathcal{U}_e$ and from $\mathcal{U}_e$ to $\mathfrak{g}$ and back thanks to $\tau^{-1}$ and $\tau$. Not only that, but this will be also possible in $\mathbb{T}G = TG \oplus T^*G$, which is what we need for our mechanical problems.

For some $h \in G$, the complete geometric scheme is as follows:
\begin{center}
	\begin{tikzcd}[column sep=huge,row sep=huge]
		\mathbb{T}\mathfrak{g} \arrow[d, "\pi_\mathfrak{g}"]  \arrow[r, shift left=1mm, "\mathbb{T}\tau"] \arrow[r, shift right=1mm, leftarrow, "\mathbb{T}\tau^{-1}"'] & \mathbb{T}\mathcal{U}_e \arrow[d, "\pi_U"] \arrow[r, shift left=1mm, "\mathbb{T}\mathcal{L}_{h}"] \arrow[r, shift right=1mm, leftarrow, "\mathbb{T}\mathcal{L}_{h}^{-1} = \mathbb{T}\mathcal{L}_{h^{-1}}"'] & \mathbb{T}G \arrow[d, "\pi_G"]\\
		\mathfrak{g} \arrow[r, shift left=1mm, "\tau"] \arrow[r, shift right=1mm, leftarrow, "\tau^{-1}"'] & \mathcal{U}_e \arrow[r, shift left=1mm, "\mathcal{L}_{h}"] \arrow[r, shift right=1mm, leftarrow, "\mathcal{L}_{h}^{-1} = \mathcal{L}_{h^{-1}}"'] & G
	\end{tikzcd}
\end{center}
with 
\begin{eqnarray*}
	\mathbb{T}\tau(\xi, \eta, \mu)&=&(\tau(\xi), T_{\xi}\tau(\xi),(T_{\tau(\xi)}\tau^{-1})^* \mu)
	\end{eqnarray*}
where $\xi \in  \mathfrak{g}$, $\eta \in T_\xi \mathfrak{g} \cong \mathfrak{g}$ and $\mu \in T^*_\xi \mathfrak{g} \cong \mathfrak{g}^*$  and similar definitions for the other maps.

Assume we work in adapted coordinates $(g, v, p) \in \mathbb{T}G$ and $(\xi, \eta, \mu) \in \mathbb{T} \mathfrak{g}$. According to this diagram, if $h$ is such that $\mathcal{L}_{h^{-1}} g \in \mathcal{U}_e$, we find the following correspondences:
\begin{align*}
&\mathbb{T}_{\mathcal{L}_{h^{-1}} g} \tau^{-1} (\mathbb{T}_{g}\mathcal{L}_{h^{-1}}(g,v,p))\\
&\quad = \left( \tau^{-1}\left(\mathcal{L}_{h^{-1}} g\right), \mathrm{d}^L \tau^{-1}_{\tau^{-1} (\mathcal{L}_{h^{-1}} g)} (T_{g} \mathcal{L}_{g^{-1}} v),  \left(\mathrm{d}^L \tau_{\tau^{-1}(\mathcal{L}_{h^{-1}} g)}\right)^* \left(T_{e} \mathcal{L}_{g}\right)^* p\right)\\
&\quad = \left( \xi, \eta, \mu \right)\\
&\mathbb{T}_{\tau(\xi)}(\mathcal{L}_{h} \mathbb{T}_{\xi} \tau (\xi,\eta, \mu))\\
&\quad = \left( \mathcal{L}_h \tau(\xi), T_e \mathcal{L}_{\mathcal{L}_h \tau(\xi)} (\mathrm{d}^L \tau_{\xi} \eta), \left(T_{\mathcal{L}_h \tau(\xi)} \mathcal{L}_{\left(\mathcal{L}_h \tau(\xi)\right)^{-1}}\right)^* \left(\mathrm{d}^L \tau^{-1}_{\xi}\right)^* \mu \right)\\
&\quad = \left( g, v, p \right)
\end{align*}
where $\mathrm{d}^{L}\tau: \mathfrak{g} \times \mathfrak{g} \to \mathfrak{g}$ is the left-trivialised tangent to $\tau$ defined by the relation $(T_\xi \tau) \eta_{\xi} = T_e \mathcal{L}_{\tau(\xi)} \left(\mathrm{d}^L \tau_{\xi} \eta_{\xi}\right)$, for $\eta_{\xi} \in T\mathfrak{g}$, and $\mathrm{d}^{L}\tau^{-1}$ is its inverse, defined by $(T_g \tau^{-1}) v_{g} = \mathrm{d}^L \tau_{\tau^{-1}({g})}^{-1} \left(T_g \mathcal{L}_{g^{-1}} v_{g}\right)$, for $v_{g} \in TG$ (see \cite{MR2496560}).

Let us also take this opportunity to define $\mathrm{dd}^{L}\tau: \mathfrak{g} \times \mathfrak{g} \times \mathfrak{g} \to \mathfrak{g}$ the second left-trivialised tangent, which will be necessary for later derivations. This is a linear map in the second and third variables such that $\partial_{\xi} \left(\mathrm{d}^{L}\tau_{\xi} \eta\right) \delta \xi = \mathrm{d}^{L}\tau_{\xi} \mathrm{dd}^{L}\tau_{\xi}(\eta, \delta \xi)$. It appears naturally when representing elements $(g, v, a) \in T^{(2)}G$, the second order tangent bundle of $G$ (see \cite{ManuelCampos}), with elements of $(\xi, \eta, \zeta) \in T^{(2)}\mathfrak{g}$, which using matrix notation becomes:
\begin{equation*}
\left(e, g^{-1} v, g^{-1} a - g^{-1} v g^{-1} v \right) \mapsto \left(0, \mathrm{d}^L\tau_{\xi}\eta, \mathrm{d}^L\tau_{\xi}\left[\zeta +  \mathrm{dd}^L\tau_{\xi}\left(\eta,\eta\right)\right] \right)
\end{equation*}

With this we can now proceed to propose a discretization of the reconstruction equations:
\begin{align*}
\tau^{-1}((g_k)^{-1} G_k^i) &= h \sum_{j=1}^s a_{i j} \mathrm{d}^{L}\tau^{-1}_{\tau^{-1}((g_k)^{-1} G_k^j)} (G_k^j)^{-1} V_k^j,\\
\tau^{-1}((g_k)^{-1} g_{k+1}) &= h \sum_{j=1}^s b_{j} \mathrm{d}^{L}\tau^{-1}_{\tau^{-1}((g_k)^{-1} G_k^j)} (G_k^j)^{-1} V_k^j.
\end{align*}
which, if $\Xi_k^i = \tau^{-1}((g_k)^{-1} G_k^i)$, $\xi_{k, k+1} = \tau^{-1}((g_k)^{-1} g_{k+1})$ and $\left(G_k^i\right)^{-1} V_k^i = \mathrm{d}^{L}\tau_{\Xi_k^i} \mathrm{H}_k^i$ (where $\mathrm{\Xi}$ and $\mathrm{H}$ correspond to the variables $\xi$ and $\eta$ respectively), reduces to
\begin{align*}
\Xi_k^i &= h \sum_{j=1}^s a_{i j} \mathrm{H}_k^j,\\
\xi_{k, k+1} &= h \sum_{j=1}^s b_{j} \mathrm{H}_k^j.
\end{align*}

If we consider the pair groupoid $TG \times TG$ with local coordinates $(g,\dot{g},\widetilde{g},\dot{\widetilde{g}})$, a regular Lagrangian $\boldsymbol{L}: TG \times TG \to \mathbb{R}$, and a quadrature rule associated to the RK method we want to apply, an approximation to the exact discrete Lagrangian can be written as
\begin{equation*}
\boldsymbol{L}_d (g_0,g_N,\widetilde{g}_0,\widetilde{g}_N) = \sum_{k = 0}^{N-1} \boldsymbol{L}_d (g_k,g_{k+1},\widetilde{g}_k,\widetilde{g}_{k+1}) = \sum_{k = 0}^{N-1} \sum_{i = 1}^{s} b_i h \boldsymbol{L}(G^i_k, \dot{G}^i_k, \widetilde{G}^i_k, \dot{\widetilde{G}}^i_k)
\end{equation*}
where
\begin{align*}
G^i_k &= g_k \tau\left(\Xi_k^i\right),\\
\dot{G}^i_k &= g_k \tau\left(\Xi_k^i\right) \mathrm{d}^{L} \tau_{\Xi_k^i} \mathrm{H}^i_k,\\
\widetilde{G}^i_k &= \tilde{g}_k \tau\left(\mathrm{X}_k^i\right),\\
\dot{\widetilde{G}}^i_k &= \tilde{g}_k \tau\left(\mathrm{X}_k^i\right) \mathrm{d}^{L} \tau_{\mathrm{X}_k^i} \Psi^i_k,\\
\end{align*}
and $(\Xi_k^i, \mathrm{H}_k^i, \mathrm{X}_k^i, \Psi^i_k) \in T\mathfrak{g} \times T\mathfrak{g}$ are chosen so as to extremize the discrete action subject to the constraints
\begin{align*}
\Xi_k^i &= h \sum_{j=1}^s a_{i j} \mathrm{H}_k^j,\\
\tau^{-1}((g_k)^{-1} g_{k+1}) = \xi_{k, {k+1}} &= h \sum_{j=1}^s b_{j} \mathrm{H}_k^j\\
\mathrm{X}_k^i &= h \sum_{j=1}^s a_{i j} \Psi_k^j,\\
\tau^{-1}((\widetilde{g}_k)^{-1} \widetilde{g}_{k+1}) = \chi_{k, {k+1}} &= h \sum_{j=1}^s b_{j} \Psi_k^j.
\end{align*}

If $\boldsymbol{L}$ is $\widehat{\Phi}^{\times}$-invariant, then the discrete Lagrangian can be rewritten as
\begin{align*}
\boldsymbol{L}_d (g_k,g_{k+1},\widetilde{g}_k,\widetilde{g}_{k+1}) &= h \sum_{i = 1}^{s} b_i \boldsymbol{l}(\mathrm{d}^L \tau_{\Xi_k^i} \mathrm{H}_k^i, \tau(-\Xi_k^i) g_k^{-1} \widetilde{g}_k \tau(\mathrm{X}_k^i), \mathrm{d}^L \tau_{\mathrm{X}_k^i} \Psi_k^i)\\
&= \boldsymbol{l}_d (V_k \equiv g_k^{-1} g_{k+1}, U_k \equiv g_k^{-1} \widetilde{g}_k, W_k \equiv \widetilde{g}_k^{-1} \widetilde{g}_{k+1}).
\end{align*}

The equations resulting from the extremization process are
\begin{align*}
\lambda_{k+1} &= \mathrm{Ad}^*_{\xi_{k,k+1}} \left[\lambda_k + h \sum_{j=1}^s b_j \mathcal{R}^*_{U_k} \widehat{\mathrm{K}}_k^i\right]\\
\mu_{k+1} &= \mathrm{Ad}^*_{\chi_{k,k+1}} \left[\mu_k + h \sum_{j=1}^s b_j \mathcal{L}^*_{U_k} \widehat{\mathrm{K}}_k^i\right]\\
\Lambda_{k}^i &= \mathrm{Ad}^*_{\xi_{k,k+1}} \left[\lambda_k + h \sum_{j=1}^s b_j \left( \mathcal{R}^*_{U_k} \widehat{\mathrm{K}}_k^j - \frac{a_{j i}}{b_i} \left(\mathrm{d}^L \tau_{\Xi_k^j} \mathrm{d}^L \tau_{-\xi_{k,k+1}}^{-1}\right)^* \mathcal{R}^*_{\tau(-\Xi^j_k) U_k \tau(\mathrm{X}^j_k)} \mathrm{K}_k^j \right)\right]\\
\mathrm{M}_{k}^i &= \mathrm{Ad}^*_{\chi_{k,k+1}} \left[\mu_k + h \sum_{j=1}^s b_j \left( \mathcal{L}^*_{U_k} \widehat{\mathrm{K}}_k^j - \frac{a_{j i}}{b_i} \left(\mathrm{d}^L \tau_{\mathrm{X}_k^j} \mathrm{d}^L \tau^{-1}_{-\chi_{k,k+1}}\right)^* \mathcal{L}^*_{\tau(-\Xi^j_k) U_k \tau(\mathrm{X}^j_k)} \mathrm{K}_k^j\right)\right]
\end{align*}
where
\begin{align*}
\lambda_{k} &= - D_1 \boldsymbol{l}(\eta_k, U_k, \psi_k)\\
\mu_{k} &= D_3 \boldsymbol{l}(\eta_k, U_k, \psi_k)\\
\Pi_{k}^i &= - \left(\mathrm{d}^L \tau_{\Xi_{k}^i}\right)^* D_1 \boldsymbol{l}(\mathrm{d}^L \tau_{\Xi_k^i} \mathrm{H}_k^i, \tau(-\Xi_k^i) U_k \tau(\mathrm{X}_k^i), \mathrm{d}^L \tau_{\mathrm{X}_k^i} \Psi_k^i))\\
\mathrm{P}_{k}^i &= \left(\mathrm{d}^L \tau_{\mathrm{X}_{k}^i}\right)^* D_3 \boldsymbol{l}(\mathrm{d}^L \tau_{\Xi_k^i} \mathrm{H}_k^i, \tau(-\Xi_k^i) U_k \tau(\mathrm{X}_k^i), \mathrm{d}^L \tau_{\mathrm{X}_k^i} \Psi_k^i))\\
\Lambda_k^i &=  \left(\mathrm{d}^L \tau^{-1}_{\xi_{k,k+1}}\right)^* \left[ \Pi_k^i  + h \sum_{j = 1}^s \frac{b_j a_{j i}}{b_i} \left(\mathrm{dd}^L \tau^{-1}_{\Xi_{k}^j}\right)^* (\mathrm{H}^j_k, \Pi_k^j)\right]\\
\mathrm{M}_k^i &=  \left(\mathrm{d}^L \tau^{-1}_{\chi_{k,k+1}}\right)^* \left[ \mathrm{P}_k^i  + h \sum_{j = 1}^s \frac{b_j a_{j i}}{b_i} \left(\mathrm{dd}^L \tau^{-1}_{\mathrm{X}_{k}^j}\right)^* (\Psi^j_k, \mathrm{P}_k^j)\right]\\
\mathrm{K}_k^i &= D_2 \boldsymbol{l}(\mathrm{d}^L \tau_{\Xi_k^i} \mathrm{H}_k^i, \tau(-\Xi_k^i) U_k \tau(\mathrm{X}_k^i), \mathrm{d}^L \tau_{\mathrm{X}_k^i} \Psi_k^i))\\
\widehat{\mathrm{K}}_k^i &= \mathcal{L}^*_{\tau(-\Xi^i_k)} \mathcal{R}^*_{\tau(\mathrm{X}^i_k)} \mathrm{K}_k^i\,.
\end{align*}

Note that $V_k = \tau(\xi_{k,k+1})$ and $W_k = \tau(\chi_{k,k+1})$, so we can write
\begin{equation*}
U_{k+1} = \tau(-\xi_{k,k+1}) U_k \tau(\chi_{k,k+1}).
\end{equation*}

Restriction to the identities in this setting means $\Xi_k^i = \mathrm{X}_k^i$, $\xi_{k,k+1} = \chi_{k,k+1}$, $\mathrm{H}_k^i = \Psi_k^i$, $\lambda_{k} = \mu_k$, $\Lambda_{k}^i = \mathrm{M}_k^i$ and $U_k = e$. For a Lagrangian $\boldsymbol{l}_f$, when we restrict these equations to the identities we find that
\begin{align*}
\lambda_{k} &= l'(\eta_k)\\
\Pi_{k}^i &= \left(\mathrm{d}^L \tau_{\Xi_{k}^i}\right)^* l'(\mathrm{d}^L \tau_{\Xi_k^i} \mathrm{H}_k^i)\\
\mathrm{K}_k^i &= f(\mathrm{d}^L \tau_{\Xi_k^i} \mathrm{H}_k^i)\\
\widehat{\mathrm{K}}_k^i &= \left(\mathrm{Ad}_{\tau(\Xi^i_k)}^{-1}\right)^* f(\mathrm{d}^L \tau_{\Xi_k^i} \mathrm{H}_k^i).
\end{align*}

It is convenient to define
\begin{equation*}
\mathrm{N}_k^i = \left(\mathrm{d}^L \tau_{\Xi_{k}^i}\right)^* f(\mathrm{d}^L \tau_{\Xi_k^i} \mathrm{H}_k^i),
\end{equation*}
which, taking into account that $\mathrm{Ad}_{\tau(\xi)}^{-1} = \mathrm{d}^L \tau_{\xi} \mathrm{d}^L \tau^{-1}_{-\xi}$, allows us to write
\begin{equation*}
\widehat{\mathrm{K}}_k^i = \left(\mathrm{d}^L \tau^{-1}_{-\Xi_{k}^i}\right)^* \mathrm{N}_k^i.
\end{equation*}
This lets us rewrite the equations at the identities as
\begin{align*}
\lambda_{k+1} &= \mathrm{Ad}^*_{\xi_{k,k+1}} \left[\lambda_k + h \sum_{j=1}^s b_j \left(\mathrm{d}^L \tau^{-1}_{-\Xi_{k}^j}\right)^* \mathrm{N}_k^j\right]\\
\Lambda_{k}^i &= \mathrm{Ad}^*_{\xi_{k,k+1}} \left[\lambda_k + h \sum_{j=1}^s b_j \left(\mathrm{d}^L \tau^{-1}_{-\Xi_{k}^j} - \frac{a_{j i}}{b_i} \mathrm{d}^L \tau_{-\xi_{k,k+1}}^{-1} \right)^* \mathrm{N}_k^j \right],
\end{align*}
which is precisely the form the variationally partitioned RKMK equations were expected to take for a reduced Lagrangian with forcing.

\subsection{Numerical tests}
\subsubsection{Simplified Landau-Lifschitz-Gilbert model}\label{subsection1}
For our first set of numerical tests we have chosen a simplified version of the Landau-Lifschitz-Gilbert (LLG) model for ferromagnetic materials (see \cite{LandauCollected}).

The configuration manifold of the system is the Lie group $SO(3)$, whose Lie algebra is $\mathfrak{so}(3) \cong \mathbb{R}^3$. Its velocity phase space is therefore $TSO(3) \equiv SO(3) \times \mathbb{R}^3$. Its Lagrangian $L: TSO(3) \to \mathbb{R}$ is just the standard rigid body Lagrangian, which is invariant under the action of the group; therefore, we may work with the following reduced Lagrangian $\ell: \mathbb{R}^3 \to \mathbb{R}$
\begin{equation*}
\ell(\mathbf{\Omega}) = \frac{1}{2} \mathbf{\Omega}^T I \mathbf{\Omega}
\end{equation*}
where $I$ denotes here the inertia tensor of the particle and $\mathbf{\Omega} \in \mathbb{R}^3$, with coordinates $(\Omega_x, \Omega_y, \Omega_z)$.

The Euler-Poincar{\'e} equations for this simple Lagrangian are the well-known Euler equations for the rigid body,
\begin{equation*}
\dot{\mathbf{M}} = \mathbf{M} \times \mathbf{\Omega}
\end{equation*}
where $\frac{\partial \ell}{\partial \mathbf{\Omega}} = \mathbf{M} = I \mathbf{\Omega}$ and $\mathrm{ad}_{\mathbf{\Omega}}^* \frac{\partial \ell}{\partial \mathbf{\Omega}} = \mathbf{M} \times \mathbf{\Omega}$.

The simplified LLG force is of the form
\begin{equation*}
f = \alpha \mathbf{M} \times (\mathbf{M} \times \mathbf{\Omega})
\end{equation*}
with $\alpha \in \mathbb{R}$ a constant. Therefore, the equations of motion are
\begin{equation*}
\dot{\mathbf{M}} = \mathbf{M} \times \mathbf{\Omega} + \alpha \mathbf{M} \times (\mathbf{M} \times \mathbf{\Omega})\,.
\end{equation*}

This is a simple model for so-called \emph{double bracket dissipation} \cite{BKMR}, which is known to preserve Casimir functions such as
\begin{equation*}
C = \mathbf{M}^2 = (I_x \Omega_x)^2 + (I_y \Omega_y)^2 + (I_z \Omega_z)^2\,.
\end{equation*}
The integrator does not preserve this function exactly, being a general quadratic invariant, although it seems to be preserved in the free case.

\begin{figure}[!ht]
  \centering
  \begin{subfigure}[b]{0.45\textwidth}
  	\includegraphics[scale=0.50, clip=true, trim=33mm 90mm 44mm 90mm]{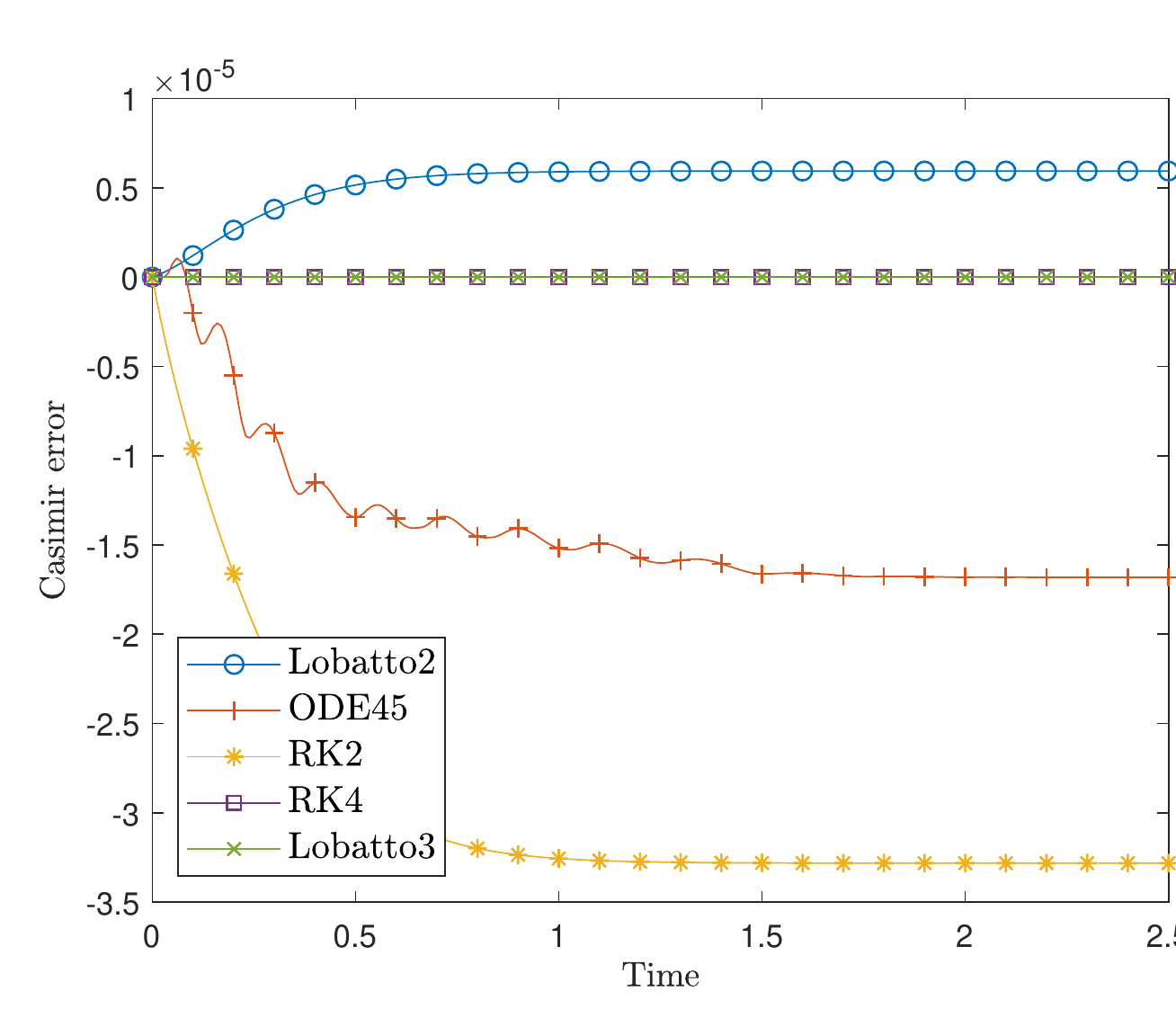}
  \end{subfigure}
  \hspace{0.5cm}
  \begin{subfigure}[b]{0.45\textwidth}
  	\includegraphics[scale=0.50, clip=true, trim=33mm 90mm 42mm 90mm]{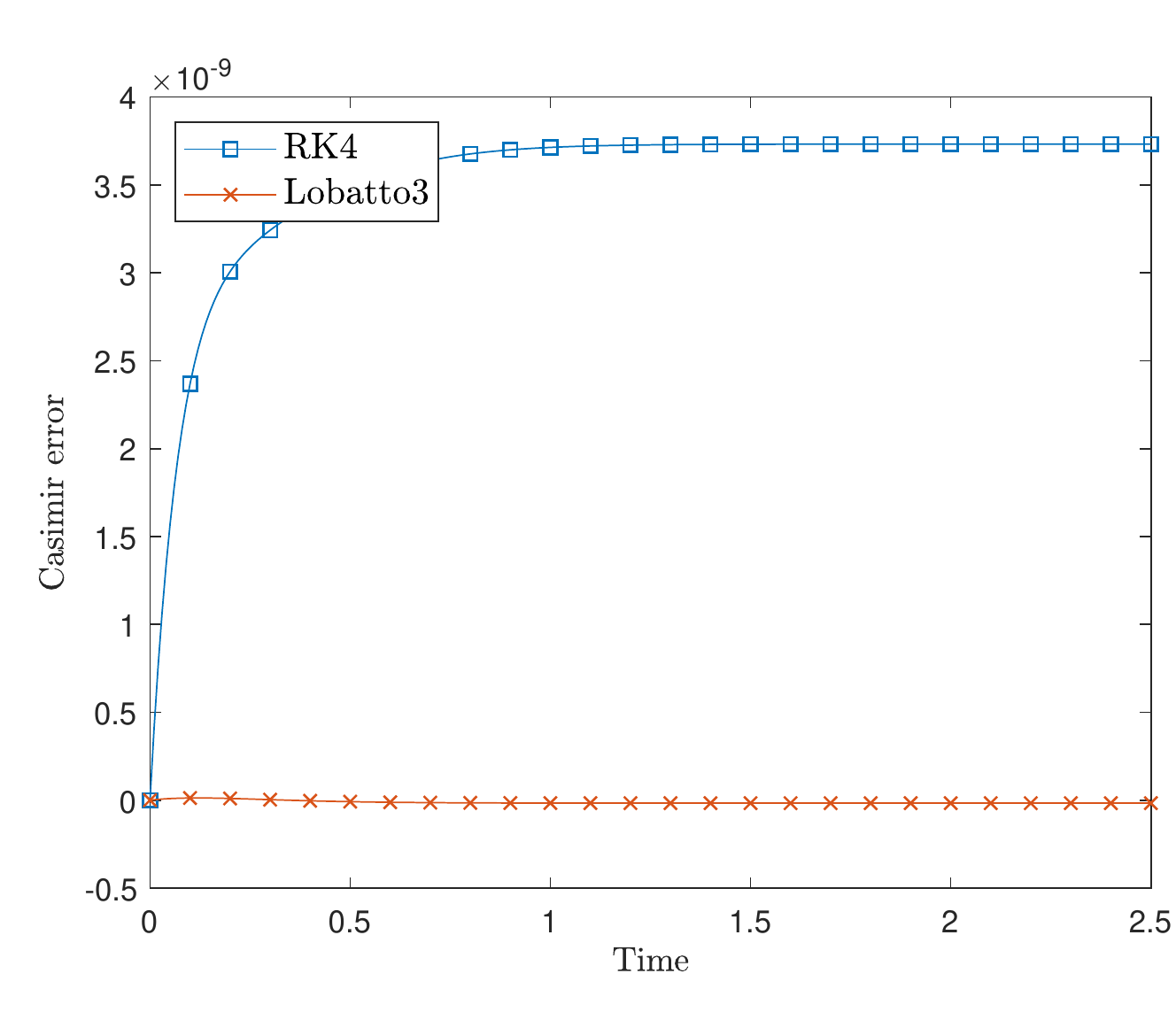}
  \end{subfigure}
  \caption{Casimir evolution. The difference between the initial value and its value at any other moment is displayed. On the left we have a battery of different integrators of different orders. On the right we have two constant-step order 4 integrators, the classic Runge-Kutta 4 and a 3-stage partitioned Lobatto method. All constant-step methods use $h = 0.01$. Note that ODE45 is a variable-step method of order 4 with an order 5 estimator, and in this case it behaves rather poorly. }
  \label{fig:LLG_casimir_plots}
\end{figure}

\begin{figure}[!ht]
  \centering
  \begin{subfigure}[b]{0.45\textwidth}
  	\includegraphics[scale=0.50, clip=true, trim=33mm 90mm 44mm 90mm]{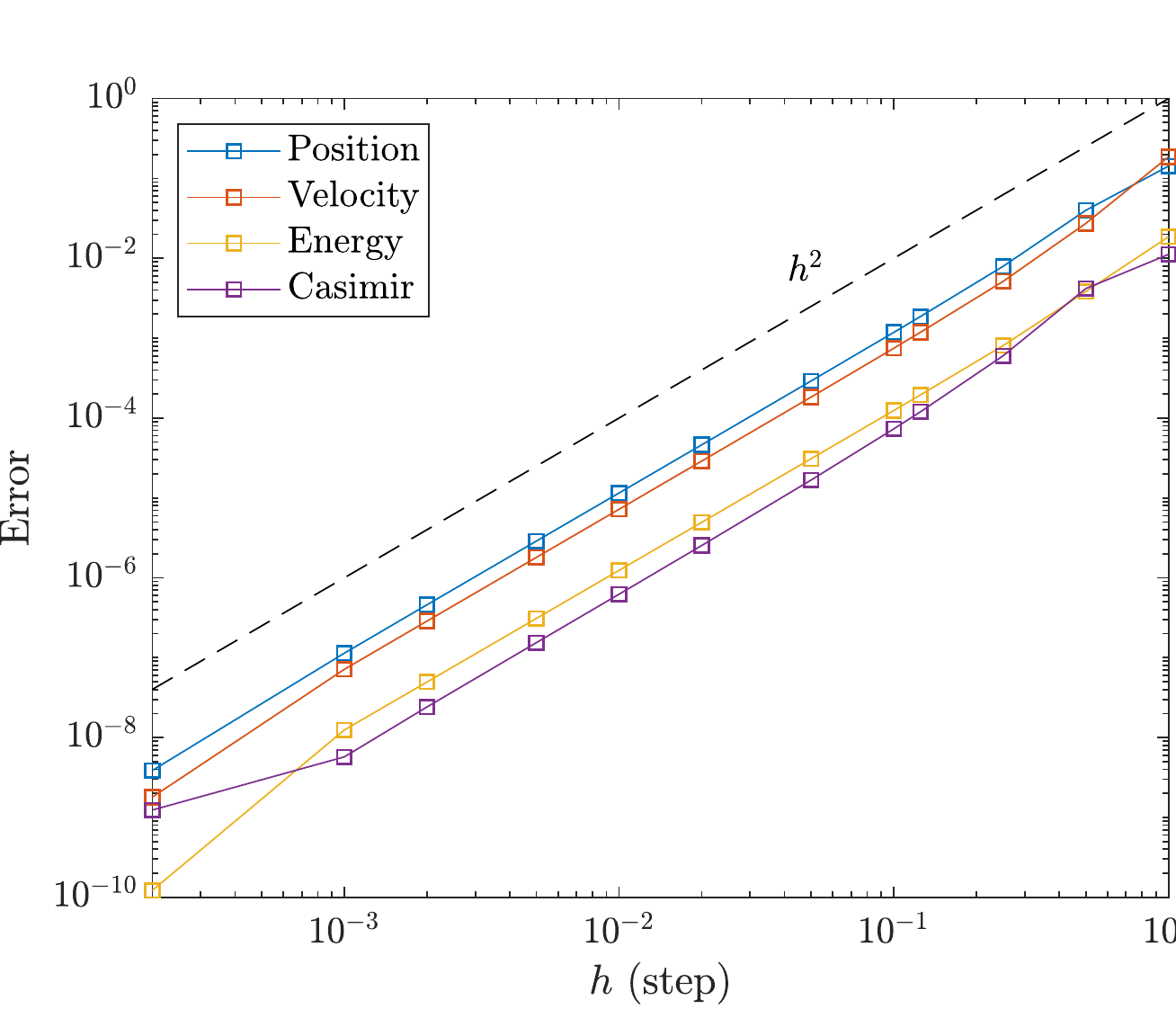}
  \end{subfigure}
  \hspace{0.5cm}
  \begin{subfigure}[b]{0.45\textwidth}
  	\includegraphics[scale=0.50, clip=true, trim=33mm 90mm 42mm 90mm]{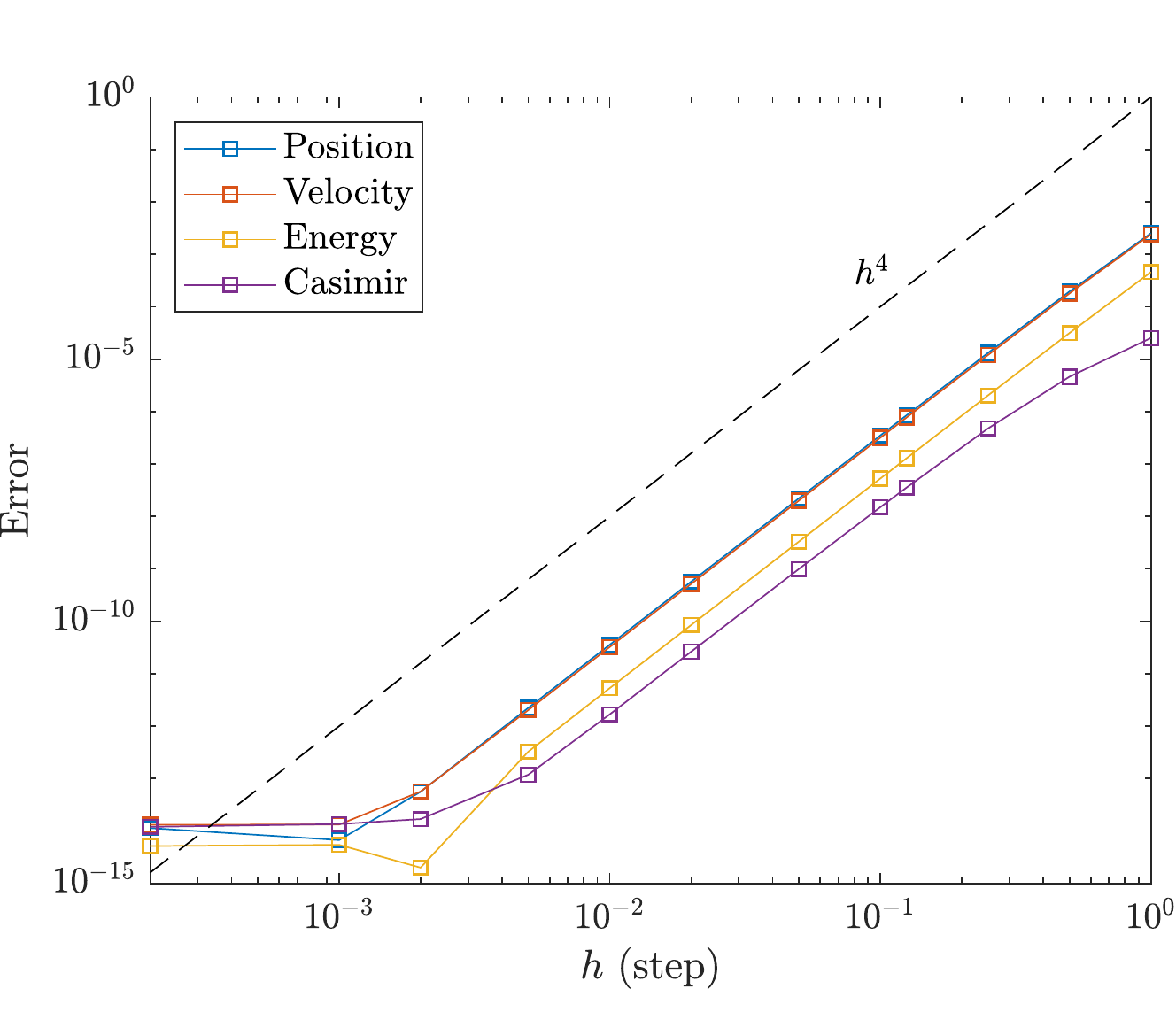}
  \end{subfigure}
  \caption{Numerical error for separate magnitudes of the model for 2-stage (left) and 3-stage (right) partitioned Lobatto methods.}
  \label{fig:LLG_order_plots}
\end{figure}

We chose to discretize the corresponding generalized Lagrangian, $\boldsymbol{\ell}_{f}$, using Lobatto schemes of $2$ (trapezoidal rule) and $3$ stages only, as Lie group integrators are computationally more demanding. The order of an $s$-stage Lobatto method is $p = 2 s - 2$, so the resulting numerical methods are of order $2$, $4$ respectively. As retraction we have used the standard Cayley map, $\mathrm{cay}$. The parameters used for the numerical simulations shown here are $I = \mathrm{diag}(I_x, I_y, I_z) = (1/2, 2, 1)$ and $\alpha = 1$, for no particular reason. The other choices of parameters that were tested showed essentially the same behaviour. We run each simulation for a total of $1$ unit of simulation time with several different choices of step-size $h$ ranging between $1 \cdot 10^{-4}$ and $1$ and measure numerical error as the difference between the final value of the magnitude being studied found for a reference simulation and the corresponding one for the value we want to study. In this case our reference is taken as the simulation with the finest step-size. The initial values chosen for the results in figures \ref{fig:LLG_casimir_plots} and \ref{fig:LLG_order_plots} are $(\Omega_x, \Omega_y, \Omega_z) = (1/\sqrt{2},0,1/\sqrt{2})$.

For the resolution of the resulting non-linear system of equations derived for each method, we used MATLAB's \verb|fsolve| with \verb|TolX=1e-12| and \verb|TolX=1e-14| respectively.

\subsubsection{Relaxed rigid body}
For our second set of numerical tests we have chosen the \emph{so-called} relaxed rigid body \cite{Morrison86}, which shares the same Lagrangian as our former example, but with different forcing. This time the force is of the form
\begin{equation*}
f = \beta (\mathbf{\Omega} \times \mathbf{M}) \times \mathbf{\Omega}
\end{equation*}
with $\beta \in \mathbb{R}$ a constant. Therefore, the equations of motion are
\begin{equation*}
\dot{\mathbf{M}} = \mathbf{M} \times \mathbf{\Omega} + \beta (\mathbf{\Omega} \times \mathbf{M}) \times \mathbf{\Omega}\,.
\end{equation*}

This is a simple model for so-called \emph{metriplectic system} (\cite{BMR2013, Morrison86,Ott1,Ott2}), which, in contrast with the former example, is known to preserve energy but not the Casimir function $C$. Both properties are a geometric expression of the First and Second Laws of Thermodynamics for a given system.  In fact, it can be shown that
\begin{equation*}
\frac{\mathrm{d} C}{\mathrm{d} t} \geq 0
\end{equation*}
where now $C$ is playing the role of the entropy function. 
As usual, variational integrators  with forcing do not exactly preserve energy or the Casimir exactly, and therefore, the results we show here are not surprising.

\begin{figure}[!ht]
  \centering
  \begin{subfigure}[b]{0.45\textwidth}
  	\includegraphics[scale=0.50, clip=true, trim=33mm 90mm 44mm 90mm]{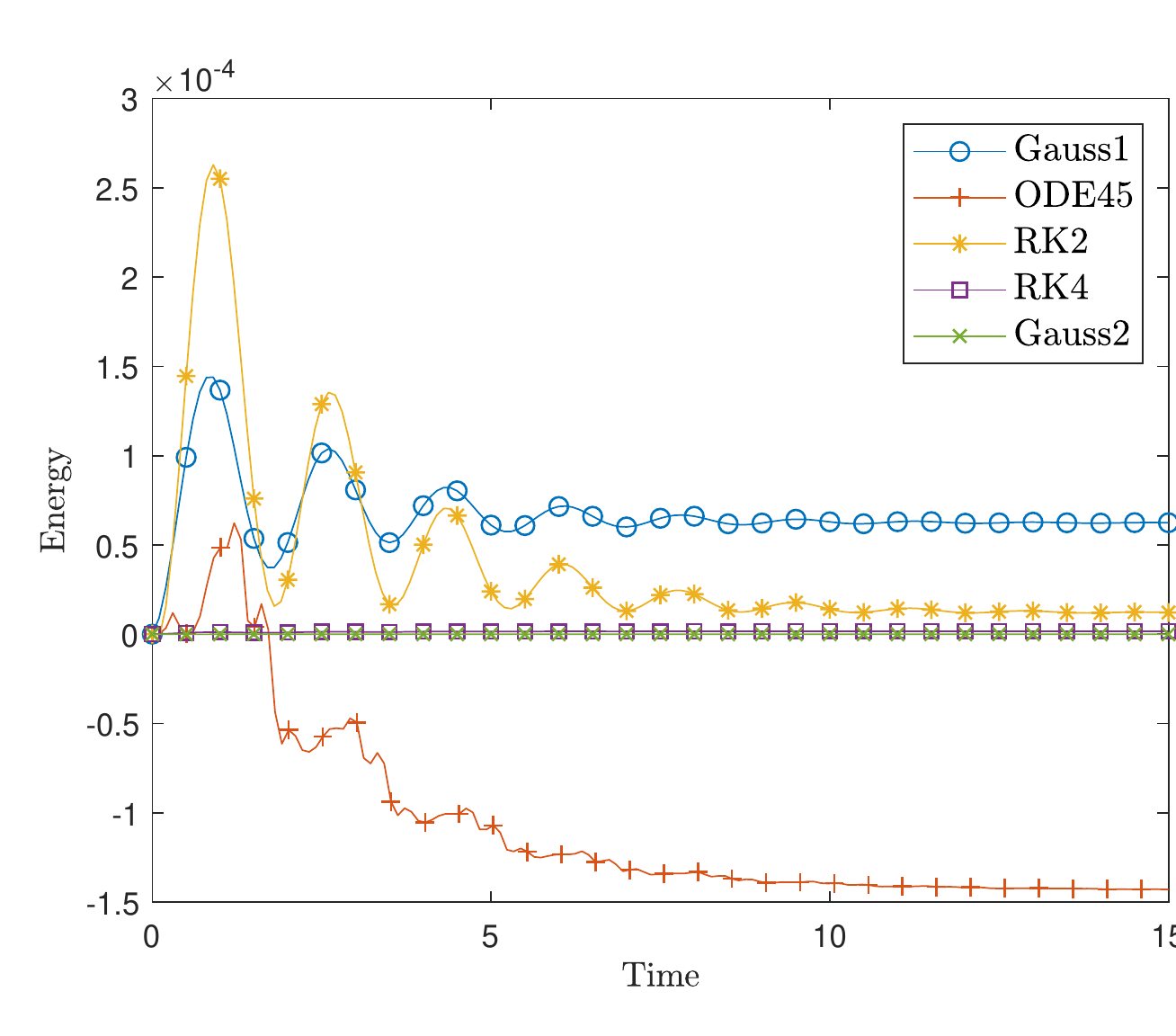}
  \end{subfigure}
  \hspace{0.5cm}
  \begin{subfigure}[b]{0.45\textwidth}
  	\includegraphics[scale=0.50, clip=true, trim=33mm 90mm 42mm 90mm]{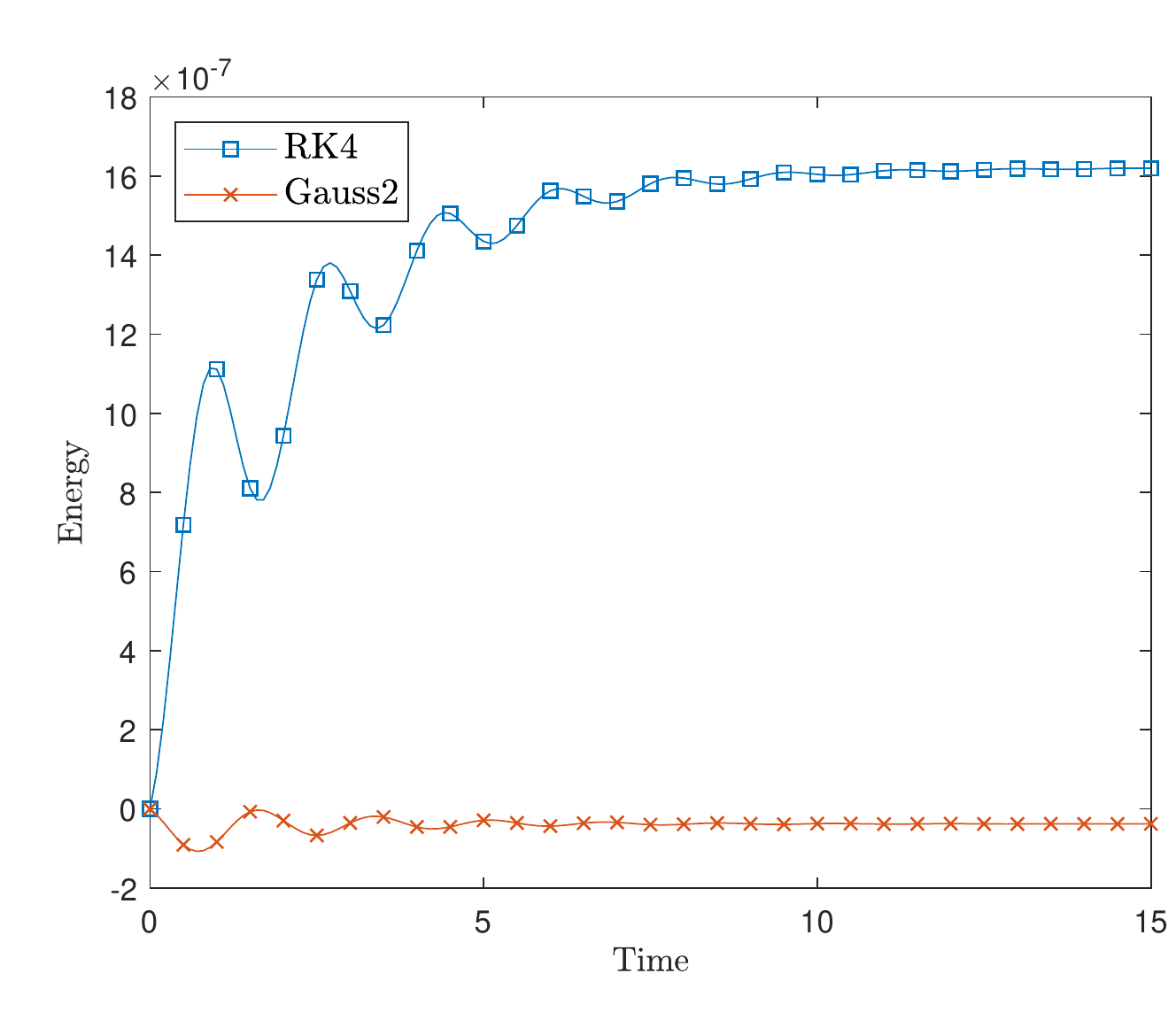}
  \end{subfigure}
  \caption{Energy. The difference between the initial value and its value at any other moment is displayed. On the left we have a battery of different integrators of different orders. On the right we have two constant-step order 4 integrators, the classic Runge-Kutta 4 and a 2-stage partitioned Gauss method. All constant-step methods use $h = 0.1$. As it is well known, the energy behaviour of ODE45 is rather poor.}
  \label{fig:relaxed_energy_plots}
\end{figure}

\begin{figure}[!ht]
  \centering
  \begin{subfigure}[b]{0.45\textwidth}
  	\includegraphics[scale=0.50, clip=true, trim=33mm 90mm 44mm 90mm]{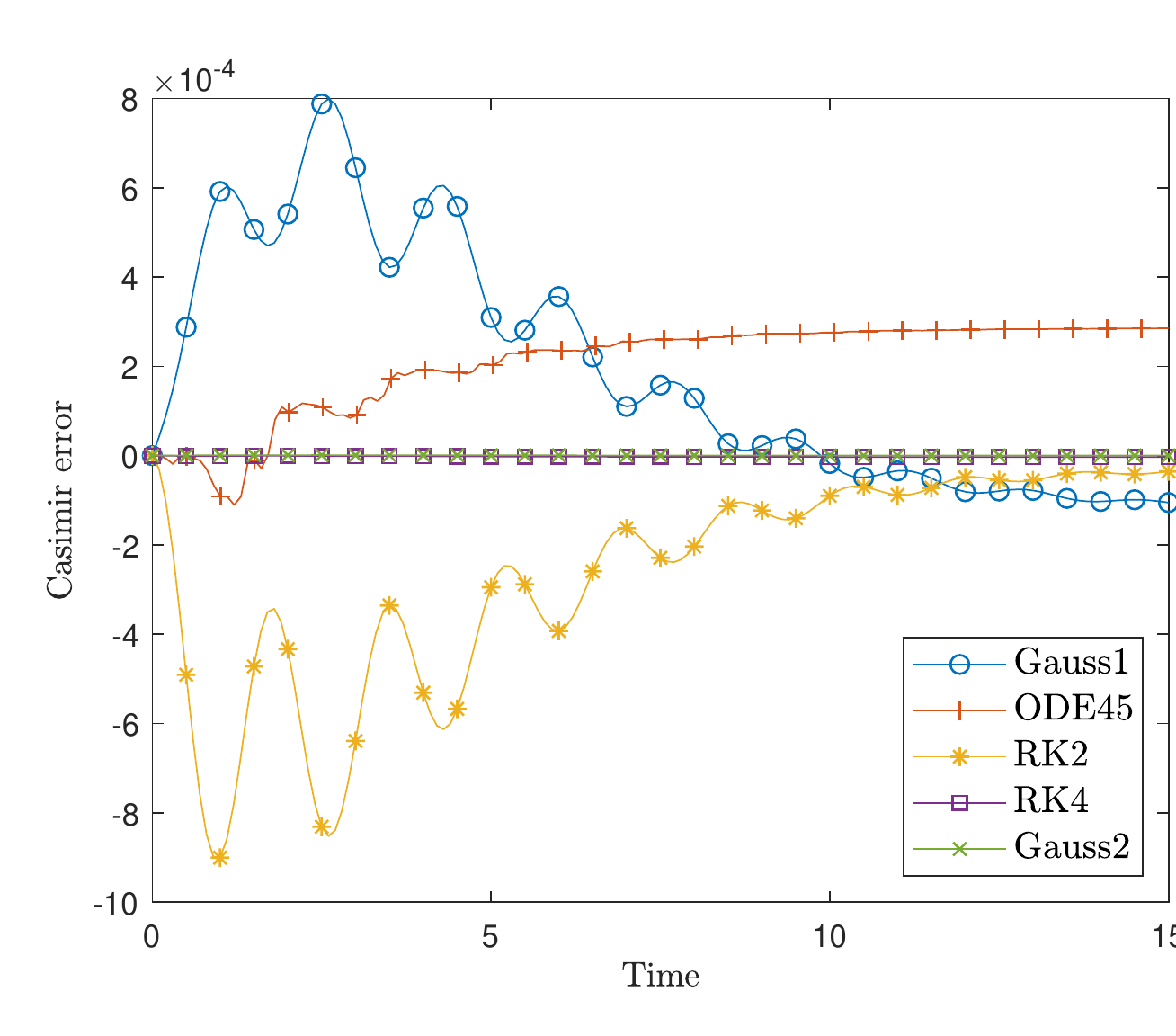}
  \end{subfigure}
  \hspace{0.5cm}
  \begin{subfigure}[b]{0.45\textwidth}
  	\includegraphics[scale=0.50, clip=true, trim=33mm 90mm 42mm 90mm]{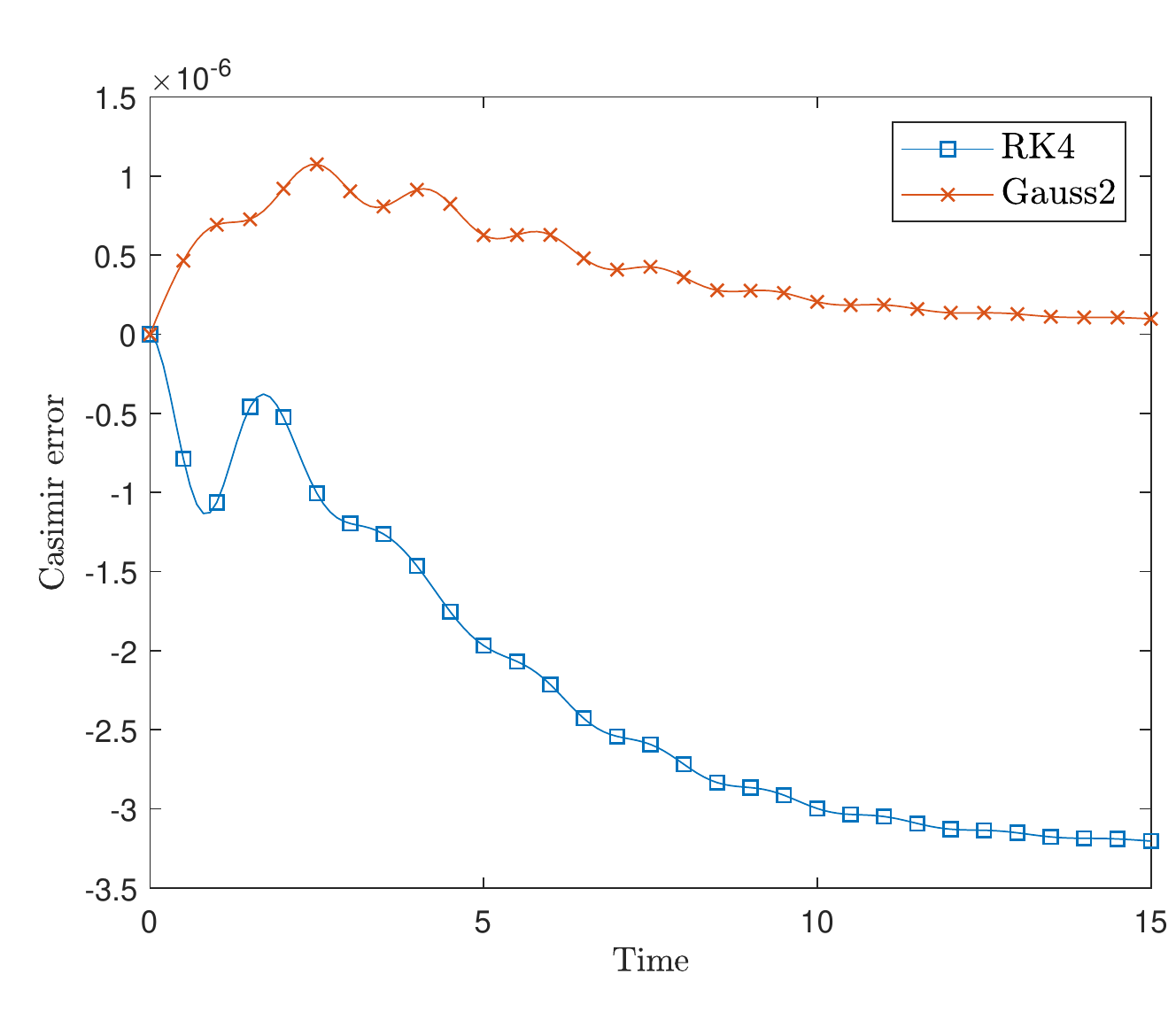}
  \end{subfigure}
  \caption{Casimir. We display the difference between the computed value and a reference value (computed with a 3-stage partitioned Gauss method and $h = 0.01$).}
  \label{fig:relaxed_casimir_plots}
\end{figure}

\begin{figure}[!ht]
  \centering
  \begin{subfigure}[b]{0.45\textwidth}
  	\includegraphics[scale=0.50, clip=true, trim=33mm 90mm 44mm 90mm]{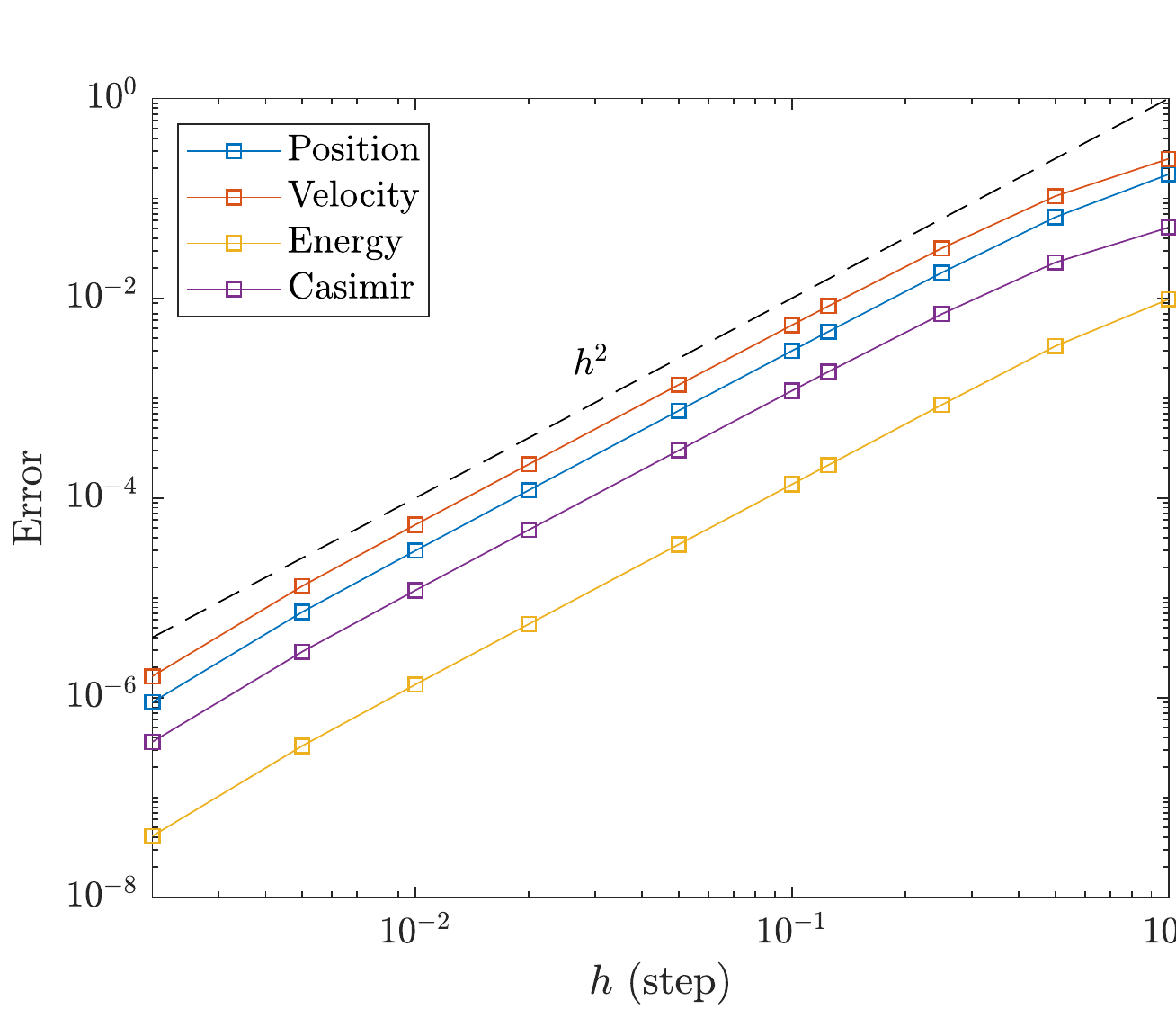}
  \end{subfigure}
  \hspace{0.5cm}
  \begin{subfigure}[b]{0.45\textwidth}
  	\includegraphics[scale=0.50, clip=true, trim=33mm 90mm 42mm 90mm]{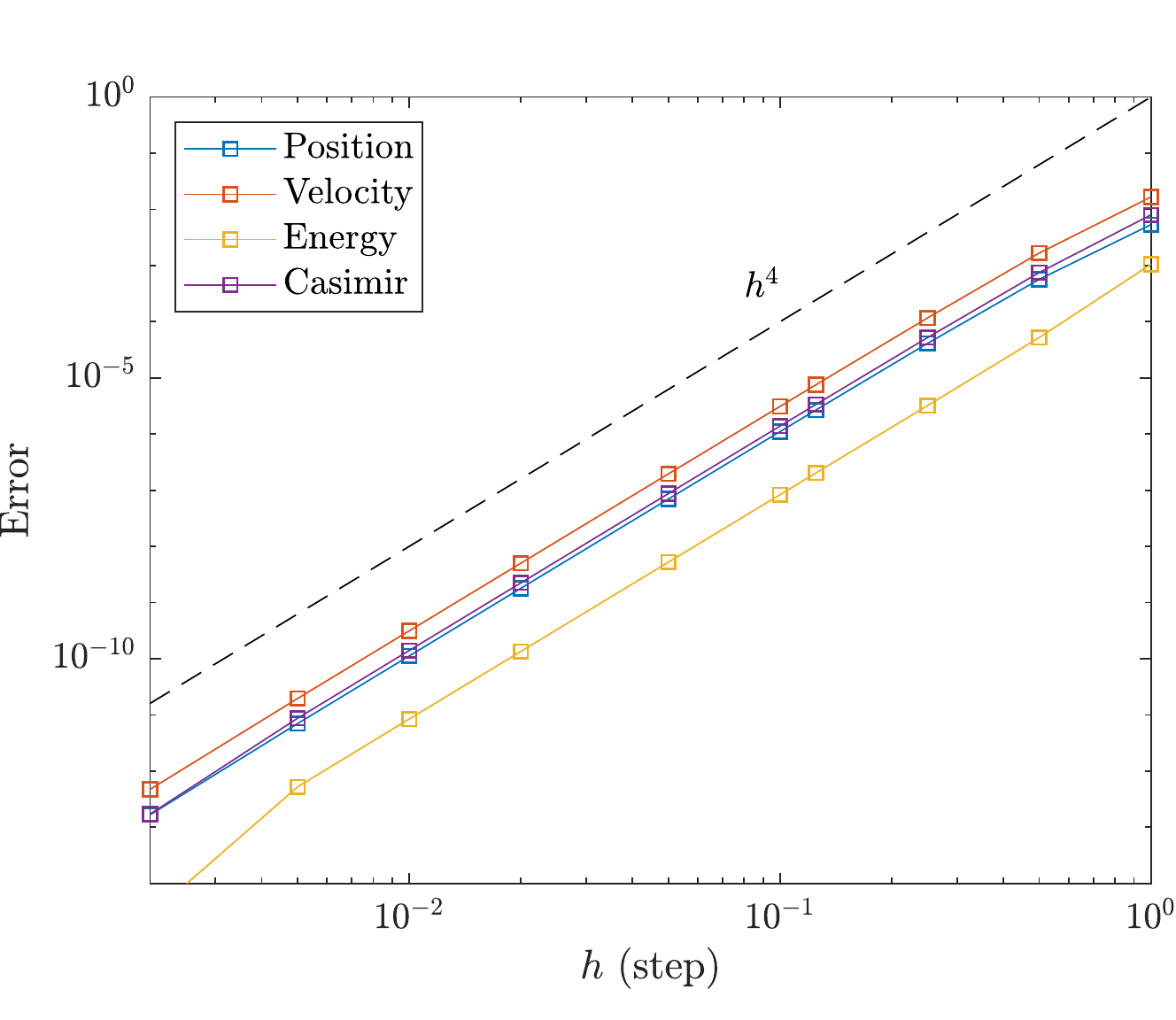}
  \end{subfigure}
  \par
  \begin{subfigure}[b]{0.45\textwidth}
  	\includegraphics[scale=0.50, clip=true, trim=33mm 90mm 42mm 90mm]{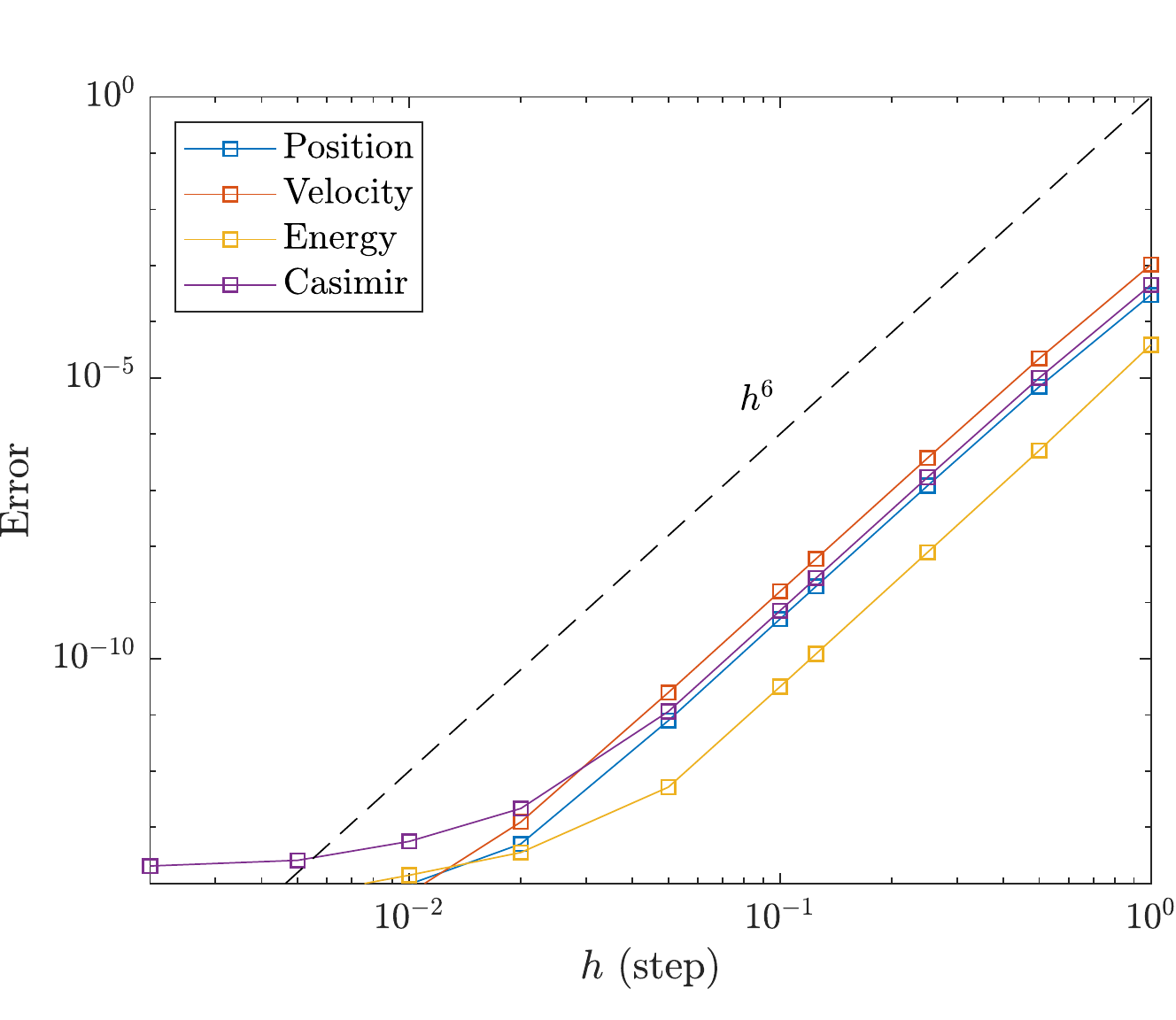}
  \end{subfigure}
  \hspace{0.5cm}
  \raisebox{1.5cm}{\begin{minipage}[b]{0.45\textwidth}
    \caption{Numerical error for separate magnitudes of the model for 1-stage (top left), 2-stage (top right) and 3-stage (bottom left) partitioned Gauss methods.}
  \end{minipage}}
  \label{fig:relaxed_order_plots}
\end{figure}

This time we chose to discretize the corresponding generalized Lagrangian, $\boldsymbol{\ell}_{f}$, using Gauss schemes of $1$ (midpoint), $2$ and $3$ stages. The order of an $s$-stage Gauss method is $p = 2 s$, so the resulting numerical methods are of order $2$, $4$ and $6$ respectively. We used $\beta = 0.1$ and we kept the rest of the parameters and conditions the same as in the LLG example.

\section{Conclusions and Future work}
In this paper, we have discussed how a Lagrangian or Hamiltonian system with forcing reduced by a Lie group of symmetries can be treated as a regular Lagrangian system in a higher dimensional space. For it, we add new phase variables in a very geometric way. It is interesting to observe that the geometry involved is related to the notion of Poisson groupoid. As a main implication we deduce the variational error analysis for Euler-Poincar\'e equations with forcing. 

In a future paper, we will study the general case of Euler-Poincar\'e equations subjected to double bracket dissipation \cite{BKMR} as in subsection \ref{subsection1}. We will study concrete discretizations of the forces in such a way the system {\sl exactly} preserves the coisotropic orbits as is done by the continuous case. Moreover, we want to check if the duplication of variables technique allows us to understand rigourously why the resulting integrators show excellent energy decay rate tracking properties in many concrete examples.

 \section*{Acknowledgements}
 
 The authors have been partially supported by Ministerio de Econom{\'\i}a, Industria y Competitividad (MINEICO, Spain) under grants MTM 2013-42870-P, MTM 2015-64166-C2-2P, MTM2016-76702-P and ``Severo Ochoa Programme for Centres of Ex- cellence" in R\&D (SEV-2015-0554).
 
\printbibliography

\end{document}